\documentclass[pra,aps,nopacs,onecolumn,twoside,superscriptaddress, showkeys]{revtex4}

\usepackage{amsmath,amsfonts,amssymb,caption,color,epsfig,graphics,graphicx,hyperref,latexsym,mathrsfs,revsymb,amsthm,url,verbatim,epstopdf,mathtools,enumerate,stmaryrd,enumitem,appendix,multirow,physics,listings}

\hypersetup{colorlinks,linkcolor={blue},citecolor={blue},urlcolor={red}}

\newtheorem{definition}{Definition}

\newtheorem{lemma}[definition]{Lemma}

\newtheorem{theorem}[definition]{Theorem}

\newtheorem{conjecture}[definition]{Conjecture}

\newtheorem{example}[definition]{Example}

\begin{document}	

\title{On the saturated cases of the distillability conjecture}

\author{\Large Saiqi Liu}\email[]{liu\_saiqi\_sy2324111@buaa.edu.cn}
\affiliation{LMIB(Beihang University), Ministry of education, and School of Mathematical Sciences, Beihang University, Beijing 100191, China}

\author{\Large Lin Chen}\email[]{linchen@buaa.edu.cn (corresponding author)}
\affiliation{LMIB(Beihang University), Ministry of education, and School of Mathematical Sciences, Beihang University, Beijing 100191, China}

\date{\today}

% \Large

\begin{abstract}
    The distillability conjecture for two-copy $4 \times 4$ Werner states has been an open problem in quantum information for years. We investigate the conditions under which the conjectured inequality $\sigma_1^2(X)+\sigma_2^2(X)\leq 1/2$ becomes an equality. For all known cases where the conjecture has been verified, we characterize the saturation conditions and show that equality forces the matrices $A$ and $B$ to be $2\times 2$ block-diagonal. In particular, several previously obtained partial results, including the cases of one normal matrix, unitary similarity between $B$ and $-A$ or $-A^\top$, and anti‑diagonal block structures, are reduced to this common block‑diagonal structure. We also employ a manifold optimization method, which provides numerical evidence that the $2\times 2$ block‑diagonal structure is essential for saturating the inequality. Furthermore, we prove that the identified saturation points are critical points of the objective function on the constraint manifold.
\end{abstract}

\keywords{distillability conjecture, Kronecker sum, eigenvalues, commutators}

\maketitle

\section{Introduction}

The entanglement distillability conjecture is recognized as one of the five fundamental theoretical problems within quantum information \cite{PRXQuantum.3.010101}. Essentially, this conjecture explores whether all entangled states with a non-positive partial transpose (NPT) can be asymptotically transformed into pure states through local operations and classical communication (LOCC). It has been confirmed that this process is feasible for any two-qubit entanglement \cite{hhh97}, and thus any qubit-qudit NPT entanglement. Over the past decades, numerous mathematical techniques have been formulated to address the distillability problem, such as the reduction criterion and entanglement witness for activation \cite{hh1999,klc02}. Efforts have also been made to quantify the upper bound of distillable entanglement \cite{rains1999,rains2001}. Additionally, the partial transpose operation on bipartite entangled states has been employed in distillation processes \cite{dcl00}. Another approach to studying the problem involves utilizing multiple copies of the target states \cite{br03}. Beyond its theoretical importance, entanglement distillation has practical applications, notably in the generation of security keys \cite{dw2005}. The distillability problem has also been investigated using methods like semidefinite programming \cite{vd06}. A crucial discovery is the intimate relationship between a state’s distillability and its rank \cite{cd11jpa}. Furthermore, it has been proven that bipartite NPT states with rank at most four are indeed 1-distillable \cite{cc08,cd16pra}. In particular, the distillability conjecture of two-copy $4 \times 4$ Werner states has raised a lot of attention recently \cite{5508622}. 

In this paper, we investigate the case where the inequality of the distillability conjecture is saturated. To date, significant progress has been made in the study of the conjecture. It has been verified in cases where one or both of the $4\times4$ matrices in the conjecture are normal matrices \cite{5508622,QIAN2021139}, as well as when both are $2 \times 2$ block diagonal matrices \cite{LIU202507}. Furthermore, the conjecture holds when one of the matrices is unitarily similar to either the other matrix or the transpose of the other \cite{LIU202507}. Additionally, the case where $A$ and $B$ are unitarily similar to matrices with two $2 \times 2$ off-diagonal blocks \cite{SIO2025152}, the case where $A$ or $B$ is unitarily similar to a direct sum of $2 \times 2$ trace zero matrices \cite{SIO2025152}, and the case where $A$ and $B$ are both monomial matrices \cite{Chen2021} have been proved to be valid. These established cases provide insights and a foundation for future research toward a complete resolution of the conjecture.

The core contribution is Theorem \ref{theorem: main result}, which outlines seven distinct cases where the distillability conjecture holds true. These scenarios are rigorously supported by a series of underlying lemmas: Lemma \ref{lemma:B=0} and Lemma \ref{lemma:A=0} deal with the cases where either $A$ or $B$ is zero; Lemma \ref{lemma:diag} deals with the case where $A$ and $B$ are diagonal matrices; and Lemma \ref{lemma:2 by 2 block-diagonal} deals with the case where $A$ and $B$ are $2 \times 2$ block-diagonal matrices. Furthermore, Lemmas \ref{lemma:one normal}, \ref{lemma:B and transpose of -A equivalent}, \ref{lemma:B and -A equivalent}, and \ref{lemma:2 2 anti} demonstrate that other cases such as those involving only one normal matrix, unitary similarity between $B$ and $-A^\top$ or $-A$, or matrices whose two $2 \times 2$ diagonal blocks are zero, can all be reduced to the block-diagonal conditions established in Lemma \ref{lemma:2 by 2 block-diagonal}. Finally, we utilize Lemmas \ref{lemma: smooth svd} through \ref{lemma: dense subset GL} to justify the smoothness of singular values and the use of dense subsets for our manifold optimization approach. It provides numerical evidence that implies the importance of the case where $A$ and $B$ are $2 \times 2$ block-diagonal matrices.

\section{Preliminaries}
\label{sec:pre}

Let $M_k$ denote the set of $k \times k$ matrices. We begin with the description of the distillability conjecture of two-copy $4 \times 4$ Werner states.
\begin{conjecture}\label{cj:distillability conjecture}
Let the matrices $A, B \in M_4(\mathbb{C})$ satisfy
	\begin{align}
		\operatorname{Tr}(A)=\operatorname{Tr}(B)=0, \label{eq:traceless condition}\\
		\|A\|_F^2+\|B\|_F^2=\frac{1}{4} \label{eq:frob norm condition}.
	\end{align}
We define the matrix $X \in M_{16}(\mathbb{C})$, 
	\begin{align}
		X=A \otimes I+I \otimes B.
	\end{align}
	Then we have
	\begin{align}
		\sigma_1(X)^2+\sigma_2(X)^2 \leq \frac{1}{2}, \label{ineq:main conjecture}
	\end{align}
	where $\sigma_1(X)$ and $\sigma_2(X)$ are respectively the largest and second-largest singular values of $X$.
\end{conjecture}

To date, there are several cases that have been solved. Here we list a few of them.
\begin{lemma}
	The distillability conjecture is valid for the following cases:
	\begin{enumerate}
        \item $A$ and $B$ are normal. \cite{5508622}
        \item $A$ or $B$ is normal. \cite{QIAN2021139}
		\item $A$ and $B$ are $2 \times 2$ block-diagonal. \cite{LIU202507}
        \item $B$ is unitarily similar to $-A$ or $-A^\top$. \cite{LIU202507}
        \item Two $2 \times 2$ diagonal blocks of both $A$ and $B$ are zero. \cite{SIO2025152}
	\end{enumerate}
\end{lemma}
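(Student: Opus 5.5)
This lemma collects results already established in the cited literature, so the plan is to justify each item by reduction to the cited source, together with a short self-contained argument where one is easy. Throughout I use two facts. First, the singular values of $X=A\otimes I+I\otimes B$ are invariant under $A\mapsto UAU^\dagger$, $B\mapsto VBV^\dagger$ with $U,V$ unitary, since $X\mapsto (U\otimes V)X(U\otimes V)^\dagger$. Second, $\sigma_1^2(X)+\sigma_2^2(X)=\max\{\|XP\|_F^2\}$, where the maximum is over rank-two orthogonal projections $P$ on $\mathbb{C}^{16}$ (Ky Fan). Both constraints \eqref{eq:traceless condition} and \eqref{eq:frob norm condition} are preserved by the unitary action, so in each case I may assume a canonical form.

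For items 1 and 2 I would first diagonalize the normal matrix, say $A=\mathrm{diag}(a_1,\dots,a_4)$. Then $X=\bigoplus_i (a_iI+B)$, and the top two singular values come either from one block or from two different blocks. In the two-block case each block contributes $\sigma_1(a_iI+B)^2$, and I bound this by $|a_i|^2+\|B\|_F^2$-type estimates. Here tracelessness of $B$ is used to show that the cross term $2\mathrm{Re}\,\bar a_i\langle x,Bx\rangle$ is controlled, and $\sum_i a_i=0$ gives $|a_i|^2+|a_j|^2\le \|A\|_F^2$ for the relevant pairs. When both $A$ and $B$ are normal the problem becomes purely scalar, a maximization of $|a_i+b_k|^2+|a_j+b_l|^2$ under the trace and norm constraints, which is an elementary inequality. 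The single-normal case needs the more delicate argument of \cite{QIAN2021139}, and I would cite it.

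For items 3--5, block-diagonal $A=A_1\oplus A_2$ and $B=B_1\oplus B_2$ split $X$ into four $4\times4$ blocks $A_i\otimes I_2+I_2\otimes B_j$. Each of these is a two-qubit-size instance, where the Kronecker sum of $2\times2$ matrices can be analyzed explicitly by subtracting trace parts. If $B=-UAU^\dagger$ (or $-UA^\top U^\dagger$), then $X$ is unitarily equivalent to $A\otimes I-I\otimes A$ (or its partial-transpose analogue), a commutator-type map $Y\mapsto AY-YA$. The bound then follows from the known estimate $\|[A,Y]\|_F^2\le 2\|A\|_F^2\|Y\|_F^2$ refined to two orthogonal $Y$'s. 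For the anti-block-diagonal structure, squaring $A$ and $B$ yields block-diagonal matrices, and one reduces to item 3 as in \cite{SIO2025152}.

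The main obstacle is the genuinely non-normal cases, especially item 2 and the two-orthogonal-vector refinement in item 4. There the sum of two squared singular values does not decouple. I would rely on the cited proofs rather than reprove them, verifying only that their hypotheses match \eqref{eq:traceless condition}--\eqref{eq:frob norm condition}.
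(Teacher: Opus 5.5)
Citing the literature is all the paper itself does for this lemma. It is a summary of \cite{5508622,QIAN2021139,LIU202507,SIO2025152} with no proof, so at that level your plan agrees with it. The problem is the self-contained mechanisms you sketch: two of them would fail if written out, and you misjudge where the difficulty lies.

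\textbf{Item 5.} Squaring does not reduce anything to item 3. $A^2$ and $B^2$ are block-diagonal, but this does not help:
\begin{itemize}
\item $X^2=A^2\otimes I+2A\otimes B+I\otimes B^2$ contains the cross term $2A\otimes B$;
\item $A^2,B^2$ are in general not traceless, since $\operatorname{Tr}A^2=2\operatorname{Tr}A_1A_2$;
\item $\sigma_i(X^2)\neq\sigma_i(X)^2$ in general.
\end{itemize}
What the anti-block structure actually gives is a decoupling of $X^*X$; this is the argument of \cite{SIO2025152} reviewed in Appendix~\ref{appendix:proof 2 2 anti}. Write $\mathbb{C}^4=V_1\oplus V_2$ for the block splitting. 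Then $X$ interchanges $V_1\otimes V_1\oplus V_2\otimes V_2$ and $V_1\otimes V_2\oplus V_2\otimes V_1$. Hence $X^*X$ is permutation-similar to $H\oplus K$, with $H=H_1+H_2$ and $K=K_1+K_2$. Each summand has the form $R^*R$, where $RR^*$ is a Kronecker sum such as $A_1A_1^*\otimes I_2+I_2\otimes B_2B_2^*$. Its eigenvalues are therefore explicit in the singular values of $A_1,A_2,B_1,B_2$, and the Ky Fan and Weyl inequalities give the bound $2\|A\|_F^2+2\|B\|_F^2$.

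\textbf{Item 4.} You have the two cases interchanged. Under the isometry $x\otimes y\mapsto xy^\top$, $A\otimes I$ acts as $Y\mapsto AY$ and $I\otimes C$ acts as $Y\mapsto YC^\top$.
\begin{itemize}
\item The case $B\sim-A$ gives $A\otimes I-I\otimes A$, i.e.\ $Y\mapsto AY-YA^\top$, which is not a commutator.
\item The genuine commutator $Y\mapsto AY-YA$ is $A\otimes I-I\otimes A^\top$, i.e.\ the case $B\sim-A^\top$.
\end{itemize}
For $B\sim-A$ the B\"ottcher--Wenzel inequality does not apply directly. You need its transposed version from \cite{CHENG20132793}, namely $\|AY-YA^\top\|_F\le\sqrt2\,\|A\|_F\|Y\|_F$.

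With the right inequality, item 4 is not an obstacle, and no refinement to two orthogonal $Y$'s is needed. Since $\|B\|_F=\|A\|_F$, the norm constraint forces $\|A\|_F^2=\tfrac18$. Either bound then gives $\sigma_1^2(X)\le2\|A\|_F^2=\tfrac14$, and so $\sigma_1^2(X)+\sigma_2^2(X)\le2\sigma_1^2(X)\le\tfrac12$.

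\textbf{Items 1--2.} The per-block estimate you float is false as stated. For $a=1$ and $B=\operatorname{diag}(1,-1,0,0)$ we get $\sigma_1^2(aI+B)=4>3=|a|^2+\|B\|_F^2$.

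More importantly, no bound on each block $a_iI+B$ in terms of $a_i$ and $\|B\|_F$ alone can close the two-block case. Take $A=\operatorname{diag}(a,-a,0,0)$ with $a>0$ and $\|B\|_F=2a$. With $B=\pm\tfrac{a}{\sqrt3}\operatorname{diag}(3,-1,-1,-1)$, each block separately reaches $\sigma_1^2=(1+\sqrt3)^2a^2$. Twice this exceeds the target $2\|A\|_F^2+2\|B\|_F^2=12a^2$. This coupling through the shared $B$ is why item 2 needs the full argument of \cite{QIAN2021139}. Only the doubly normal item 1 is elementary.
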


\begin{definition}
    The Ky-Fan $2-2$ norm is defined as follows: $\| \cdot \|_{(2),2}: X \mapsto \sigma_1^2(X) + \sigma_2^2(X)$.
\end{definition}

\begin{lemma}\label{lemma: saturation of BW inequality with M_2 Y transpose}
    \cite{CHENG20132793} Let $X, Y \in M_2(\mathbb C)$. Then
    \begin{align}
        \left\|X Y-Y X^\top\right\|_F \leq \sqrt{2}\|X\|_F\|Y\|_F .
    \end{align}
    Moreover, equality holds if and only if there exists a unitary $U \in M_2$ such that
    \begin{align}
        U^* X U= \begin{bmatrix}
            x_{11} & x_{12} \\
            0 & -x_{11}
        \end{bmatrix} \quad 
        U^* Y \bar{U}=\begin{bmatrix}
            0 & y_{12} \\
            y_{21} & y_{22}
        \end{bmatrix}
    \end{align}
    where $\bar{x}_{12}\left(y_{12}+y_{21}\right)=2 \bar{x}_{11} y_{22}$.
\end{lemma}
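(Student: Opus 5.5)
The plan is to use the invariance of both sides under the congruence $(X,Y)\mapsto (U^*XU,\,U^*Y\bar U)$ with $U$ unitary, reduce to $X$ upper triangular, and then treat the left-hand side as the norm of a linear map in $Y$ whose largest singular value we compute. The invariance is immediate: since $(U^*XU)^\top = U^\top X^\top \bar U$, we get
\begin{align}
(U^*XU)(U^*Y\bar U)-(U^*Y\bar U)(U^*XU)^\top = U^*(XY-YX^\top)\bar U ,
\end{align}
and the Frobenius norm is unitarily invariant. By Schur's theorem we may therefore assume $X=\begin{bmatrix} a & b\\ 0 & c\end{bmatrix}$, with $Y=(y_{ij})$ arbitrary. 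A direct computation gives
\begin{align}
XY-YX^\top=\begin{bmatrix} b(y_{21}-y_{12}) & (a-c)y_{12}+by_{22}\\ -(a-c)y_{21}-by_{22} & 0\end{bmatrix}.
\end{align}

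The first observation is that $y_{11}$ does not appear on the left. Hence, for any $X\neq 0$, equality forces $y_{11}=0$, which is the zero entry in the claimed normal form of $Y$. It therefore remains to bound the linear map $L_X:(y_{12},y_{21},y_{22})\mapsto$ the three nonzero entries above. This is a $3\times 3$ matrix
\begin{align}
M=\begin{bmatrix} -b & b & 0\\ a-c & 0 & b\\ 0 & -(a-c) & -b\end{bmatrix},
\end{align}
and the inequality is equivalent to $\lambda_{\max}(M^*M)\le 2(|a|^2+|b|^2+|c|^2)$. Write $u=a-c$. We have $|u|^2\le 2(|a|^2+|c|^2)$, with equality if and only if $c=-a$. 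It therefore suffices to prove the stronger estimate $\lambda_{\max}(M^*M)\le |u|^2+2|b|^2$, which is homogeneous in $(u,b)$. For this I would compute the characteristic polynomial of the Hermitian matrix $M^*M$. Its trace is $2|u|^2+4|b|^2$, and $\det M$ should be a simple expression, so the eigenvalues should be computable explicitly, or at least the top one should be comparable with $|u|^2+2|b|^2$ via the cubic's value and derivative at that point. Symmetrizing $y_{12}\leftrightarrow y_{21}$ suggests splitting into the symmetric combination $y_{12}+y_{21}$, where the $b$-entry cancels, and the antisymmetric one. In the symmetric sector, the block acting on $(s,y_{22})$ with $s=(y_{12}+y_{21})/2$ has top eigenvalue exactly $|u|^2+2|b|^2$, with eigenvector proportional to $(\bar u, 2\bar b)$, up to the $\sqrt2$ normalization of $s$. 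I expect this to be the maximizing sector.

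This sharp eigenvalue bound, together with the analysis of its eigenspace, is the main obstacle. First, the antisymmetric part and the cross terms must be shown not to exceed the symmetric bound. Second, when the top eigenvalue is degenerate, for instance when $u=0$, one must still read off the correct characterization.

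Once the bound is in place, equality requires three things: $c=-a$, so that $X$ has the form $\begin{bmatrix} x_{11} & x_{12}\\ 0 & -x_{11}\end{bmatrix}$; $y_{11}=0$; and $(y_{12},y_{21},y_{22})$ lies in the top eigenspace of $M^*M$. With $u=2x_{11}$ and $b=x_{12}$, the eigenvector condition from the symmetric sector reads $(y_{12}+y_{21})/2 : y_{22} = \bar u : 2\bar b$, that is, $\bar x_{12}(y_{12}+y_{21})=2\bar x_{11}y_{22}$. This is the stated relation. Conversely, plugging such $X,Y$ into the displayed difference verifies equality directly. I would finish by checking the degenerate cases $x_{11}=0$ or $x_{12}=0$ separately to make sure no extra equality configurations arise.
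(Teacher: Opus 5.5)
The paper gives no proof of this lemma; it only quotes it from the literature, so your argument has to stand on its own. Most of your reduction is sound. The congruence invariance, the Schur form, the formula for $XY-YX^\top$, the fact that $y_{11}$ drops out, and $|a-c|^2\le 2(|a|^2+|c|^2)$ with equality iff $c=-a$ are all correct.

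The gap is the step you yourself call the main obstacle: you never prove $\lambda_{\max}(M^*M)\le |u|^2+2|b|^2$. Your expectation about that step is also wrong in a way that matters for the equality case. The antisymmetric sector does not merely stay below the symmetric bound; it attains it. So the top eigenvalue is degenerate for every $(u,b)\neq 0$, not only in special cases such as $u=0$.

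Here is how to close the step. Put $s=\tfrac12(y_{12}+y_{21})$ and $d=\tfrac12(y_{12}-y_{21})$. The three remaining entries are then $-2bd$, $(us+by_{22})+ud$ and $-\bigl((us+by_{22})-ud\bigr)$, so the cross terms cancel by the parallelogram law. Cauchy--Schwarz applied to $\frac{u}{\sqrt2}(\sqrt2\,s)+b\,y_{22}$ then gives
\begin{align*}
\|XY-YX^\top\|_F^2=(|u|^2+2|b|^2)\,2|d|^2+2|us+by_{22}|^2\le(|u|^2+2|b|^2)\bigl(|y_{12}|^2+|y_{21}|^2+|y_{22}|^2\bigr).
\end{align*}
Consequences:
\begin{itemize}
\item $M^*M$ has eigenvalues $|u|^2+2|b|^2$ (twice) and $0$, consistent with $\det M=0$.
\item Equality holds iff $2\bar b\,s=\bar u\,y_{22}$, with $d$ completely free.
\item With $u=2x_{11}$ and $b=x_{12}$ this is exactly the stated relation, which constrains only $y_{12}+y_{21}$ and $y_{22}$.
\end{itemize}
If you had followed your guess that only the symmetric sector maximizes, you would have imposed the spurious condition $y_{12}=y_{21}$. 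For example, $X=\operatorname{diag}(1,-1)$ and $Y=\begin{bmatrix}0&1\\-1&0\end{bmatrix}$ give $\|XY-YX^\top\|_F^2=8=2\|X\|_F^2\|Y\|_F^2$ with $y_{12}\neq y_{21}$. Your final displayed condition is the right one, but your argument as written does not justify it.

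Two edge cases also need care.
\begin{itemize}
\item Your deduction of $c=-a$ uses $(y_{12},y_{21},y_{22})\neq 0$, i.e.\ $Y\neq 0$, and this cannot be avoided. For $Y=0$ and $X=I_2$ equality holds trivially, yet no unitary makes $X$ traceless. So the \emph{only if} direction has to be read for $Y\neq 0$.
\item For $X=0\neq Y$ your ``$X\neq 0$'' step does not apply. There you must exhibit a unitary $U$ with $(U^*Y\bar U)_{11}=0$, i.e.\ a unit vector $v$ (the conjugate of the first column of $U$) with $v^\top Y v=0$. Such a $v$ exists because a binary quadratic form over $\mathbb{C}$ always has a nontrivial zero.
\end{itemize}
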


\begin{lemma}\label{lemma: saturation of BW inequality with Y transpose}
    \cite{CHENG20132793} Let $n \geq 2$ and $X, Y \in M_n$. Then
    \begin{align}
        \left\|X Y-Y X^\top\right\|_F \leq \sqrt{2}\|X\|_{(2), 2}\|Y\|_F .
    \end{align}
    Moreover, equality holds if and only if one of the following conditions holds.
    \begin{enumerate}
        \item There exists a unitary $U \in M_n$ such that
        \begin{align}
            U^* X U=X_{11} \oplus X_{22} \quad \text { and } \quad U^* Y \bar{U}=Y_{11} \oplus 0,
        \end{align}
        where $X_{11}, Y_{11} \in M_2$ satisfy the equality condition in Lemma \ref{lemma: saturation of BW inequality with M_2 Y transpose}.
        \item There exists a unitary $U \in M_n$ such that
        \begin{align}
            U^* X U=X_{11} \oplus X_{22} \quad \text { and } \quad U^* Y \bar{U}=Y_{11} \oplus 0,
        \end{align}
        where $X_{11}, Y_{11} \in M_r(r \geq 3), X_{11}=\sigma_1(X) W$ where $W$ is unitary, and
        \begin{align}
            X_{11} Y_{11}+Y_{11} X_{11}^\top=0 .
        \end{align}
    \end{enumerate}
\end{lemma}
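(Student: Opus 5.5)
The plan is to linearize the problem and then use the symmetry of the map $Y\mapsto XY-YX^\top$. We read $\|X\|_{(2),2}$ here as $\sqrt{\sigma_1^2(X)+\sigma_2^2(X)}$, since that is the homogeneous form under which the inequality makes sense. Vectorizing gives $\operatorname{vec}(XY-YX^\top)=(I\otimes X-X\otimes I)\operatorname{vec}(Y)$. So the inequality is equivalent to the operator-norm bound
\begin{align}
\sigma_1(I\otimes X-X\otimes I)^2\le 2\left(\sigma_1^2(X)+\sigma_2^2(X)\right).
\end{align}
Both sides are invariant under $X\mapsto U^*XU$, $Y\mapsto U^*Y\bar U$, since $U^*(XY-YX^\top)\bar U=(U^*XU)(U^*Y\bar U)-(U^*Y\bar U)(U^*XU)^\top$. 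This invariance is exactly the normal form used in the equality statement.

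The key observation is that $T_X:=I\otimes X-X\otimes I$ anticommutes with the swap operator $F$ on $\mathbb C^n\otimes\mathbb C^n$. Hence $T_X$ maps the symmetric subspace to the antisymmetric one and vice versa, and its norm is the larger of its norms on these two subspaces. Equivalently, it is the larger of $\max\|XY-YX^\top\|_F$ over unit symmetric $Y$ and over unit skew-symmetric $Y$.

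\begin{itemize}
\item For skew-symmetric $Y$, I would bring $Y$ by a unitary congruence $Y\mapsto V^\top YV$ to a direct sum of $2\times2$ blocks $s_k\begin{bmatrix}0&1\\-1&0\end{bmatrix}$. Since $FY^\top$-type terms reduce here to $XY+(XY)^\top$-like expressions, I would then bound $\|XY-YX^\top\|_F^2=2\|XY\|_F^2+2\operatorname{Re}\operatorname{tr}(\cdots)$ by $2\sum_k s_k^2(\sigma_{2k-1}^2+\sigma_{2k}^2)$.
\item For symmetric $Y$, I would use the Takagi factorization $Y=W\Sigma W^\top$ and argue similarly.
\end{itemize}

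In both cases the target is an estimate of the form $\|T_Xy\|^2\le 2\sum_j \lambda_j\,\mu_j(X)$, where the $\lambda_j$ are weights with $\sum\lambda_j=\|Y\|_F^2$ and each $\mu_j(X)$ is a sum of at most two squared singular values of $X$. The bound $\mu_j\le\sigma_1^2+\sigma_2^2$ then finishes the inequality. For the $n=2$ case, Lemma~\ref{lemma: saturation of BW inequality with M_2 Y transpose} already gives the inequality with $\|X\|_F=\|X\|_{(2),2}$, so I would use it as the base case and as the model for the block computations.

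For the equality characterization, tracing back the estimates should force the following:
\begin{itemize}
\item $Y$ is supported (after the unitary congruence) on the span of the top singular directions of $X$.
\item These directions span an $X$-reducing subspace, which yields the direct sums $U^*XU=X_{11}\oplus X_{22}$ and $U^*Y\bar U=Y_{11}\oplus0$.
\end{itemize}
Two situations then remain.
\begin{itemize}
\item If the relevant subspace is $2$-dimensional, the restricted pair must attain equality in the $M_2$ lemma, giving condition~1.
\item If $\sigma_1$ has multiplicity $r\ge3$ on the support, then $X_{11}=\sigma_1 W$ with $W$ unitary. In the expansion $\|XY-YX^\top\|^2=\|XY\|^2+\|YX^\top\|^2-2\operatorname{Re}\operatorname{tr}(\cdot)$, equality forces the cross term to be extremal, i.e.\ $X_{11}Y_{11}=-Y_{11}X_{11}^\top$. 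This gives condition~2.
\end{itemize}
The converse direction, that both conditions give equality, is a direct computation.

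The main obstacle is the sharp bound on the symmetric and antisymmetric pieces. The naive estimate $\|XY\|+\|YX^\top\|\le 2\sigma_1\|Y\|$ loses a factor, and what is needed is to pair singular values so that only $\sigma_1^2+\sigma_2^2$ appears. I would first try a majorization argument on the $2\times2$ compressions $P_kXP_k$ in the blocks given by the Takagi or skew normal form, applying the $M_2$ lemma blockwise. The cross terms between different blocks would be controlled by Cauchy--Schwarz.
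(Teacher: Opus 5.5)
The paper gives no proof of this lemma; it is quoted from \cite{CHENG20132793}, so your proposal has to stand on its own. Its framework is sound:
\begin{itemize}
\item $\sqrt{\sigma_1^2+\sigma_2^2}$ is the right reading of $\|X\|_{(2),2}$.
\item $T_X$ does anticommute with the swap.
\item Therefore $\|XY-YX^\top\|_F^2=\|XY_s-Y_sX^\top\|_F^2+\|XY_a-Y_aX^\top\|_F^2$ for the symmetric and skew parts $Y_s,Y_a$ of $Y$, and equality in the lemma holds if and only if it holds for each part.
\end{itemize}
However, the two estimates that carry the inequality are not supplied, and one is wrong as written.

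\textbf{Skew case.} Your bound $2\sum_k s_k^2(\sigma_{2k-1}^2+\sigma_{2k}^2)$ is too small by a factor $2$; it is half of your own target form. For $X=\operatorname{diag}(1,-1)$ and $Y=\begin{bmatrix}0&1\\-1&0\end{bmatrix}$ one gets $\|XY-YX^\top\|_F^2=8$, not $4$. The correct argument is short: $\|M+M^\top\|_F\le 2\|M\|_F$ for $M=XY$, together with $\|XY\|_F^2\le\sum_i\sigma_i^2(X)\sigma_i^2(Y)$ and $\sigma_{2k-1}(Y)=\sigma_{2k}(Y)=s_k$.

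\textbf{Symmetric case.} You leave this as ``argue similarly'', but it is where the real difficulty sits. Applying the $M_2$ lemma to $2\times2$ principal compressions does not by itself give the sharp constant. It charges $\|X[\{i,j\}]\|_F^2$, diagonal entries included, to every pair; already for $X=I$ and $n\ge3$ this bound is $n-1$ times the target. The missing ingredient is as follows. After Takagi, $Y=\operatorname{diag}(t_1,\dots,t_n)$, and Cauchy--Schwarz on each pair gives
\begin{align*}
\|XY-YX^\top\|_F^2=2\sum_{i<j}|x_{ij}t_j-x_{ji}t_i|^2\le 2\sum_k t_k^2\|Z_k\|_F^2 .
\end{align*}
Here $Z_k=\tfrac12(X-D_kXD_k)$ with $D_k=I-2e_ke_k^*$, which is the off-diagonal part of row and column $k$ of $X$. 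Since $\operatorname{rank}Z_k\le 2$, we have $\|Z_k\|_F=\|Z_k\|_{(2),2}\le\|X\|_{(2),2}$ by the triangle inequality for the Ky Fan norm.

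\textbf{Equality case.} The larger gap is here. This is the actual content of the lemma, and your proposal asserts it (``tracing back the estimates should force'') rather than derives it. With the corrected estimates, equality yields concrete conditions:
\begin{itemize}
\item For $Y_a$: $XY_a+Y_aX^\top=0$, equality in von Neumann's trace inequality, and $\sigma_1(X)=\dots=\sigma_{2m}(X)$ if $\operatorname{rank}Y_a=2m\ge 4$.
\item For $Y_s$ in Takagi form: $x_{ij}t_i+x_{ji}t_j=0$ for all $i\ne j$, and $\|Z_k\|_F^2=\sigma_1^2+\sigma_2^2$ whenever $t_k>0$.
\item If $Y_s,Y_a\ne 0$: both sets at once, in two different unitary frames.
\end{itemize}
You still have to show three things from these conditions:
\begin{itemize}
\item the subspace carrying $Y$ (spanned by $\operatorname{range}Y$ and $\operatorname{range}Y^\top$) reduces $X$;
\item if that subspace has dimension $\ge 3$, the compression of $X$ to it is $\sigma_1(X)$ times a unitary;
\item the frames for $Y_s$ and $Y_a$ are compatible.
\end{itemize}
Your case split ``if $\sigma_1$ has multiplicity $r\ge3$ on the support'' presupposes the second point instead of proving it. Nothing in your sketch rules out a support of dimension $\ge 3$ on which $X$ has unequal singular values. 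This dichotomy is genuinely dimension-dependent: in dimension $2$ equality occurs with non-unitary compressions. For example, $X=\begin{bmatrix}1&1\\0&-1\end{bmatrix}$ and $Y=\begin{bmatrix}0&1\\1&1\end{bmatrix}$ give $\|XY-YX^\top\|_F^2=18=2\|X\|_F^2\|Y\|_F^2$. So the dichotomy needs a real argument, and that is what the cited source supplies.
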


\begin{lemma}\label{lemma: neighborhood and saturation}
    If there exists $(\tilde A, \tilde B)$ such that $\sigma_1^2(\tilde X) + \sigma_2^2(\tilde X) > \frac{1}{2}$, then there exists $(\tilde{A}^\prime, \tilde{B}^\prime)$ such that $\sigma_1^2(\tilde{X}^\prime) + \sigma_2^2(\tilde{X}^\prime) = \frac{1}{2} + \epsilon$, where $\epsilon$ is arbitrarily small, and $(\tilde{A}^\prime, \tilde{B}^\prime)$ must be situated in the neighborhood of some $(\hat A, \hat B)$ where $\sigma_1^2(\hat X) + \sigma_2^2(\hat X) = \frac{1}{2}$.
\end{lemma}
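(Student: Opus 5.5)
The idea is to treat the objective $f(A,B)=\sigma_1^2(X)+\sigma_2^2(X)$, with $X=A\otimes I+I\otimes B$, as a continuous function on the constraint set
\[
\mathcal M=\{(A,B)\in M_4\times M_4:\ \operatorname{Tr}A=\operatorname{Tr}B=0,\ \|A\|_F^2+\|B\|_F^2=\tfrac14\},
\]
and then apply the intermediate value theorem along a path in $\mathcal M$. First I will note that $\mathcal M$ is path-connected. It is the unit sphere (scaled by $\tfrac12$) in the complex vector space of traceless pairs, which has real dimension $60$, so it is a sphere $S^{59}$. I will also note that $f$ is continuous on $\mathcal M$. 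The map $(A,B)\mapsto X$ is linear, and the singular values, and hence $\sigma_1^2+\sigma_2^2$, depend continuously on the matrix (for example by Weyl's perturbation inequality $|\sigma_i(X)-\sigma_i(X')|\le\|X-X'\|_2$).

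Next I will exhibit a point $(A_0,B_0)\in\mathcal M$ with $f(A_0,B_0)<\tfrac12$, or at least $f\le\tfrac12$ with a known value. A normal pair works, since by the earlier lemma the conjecture holds whenever $A$ and $B$ are normal. To get a strict inequality I will take, say, $B_0=0$ and $A_0$ a traceless diagonal matrix with $\|A_0\|_F^2=\tfrac14$ chosen so that $\sigma_1^2+\sigma_2^2<\tfrac12$. With $B_0=0$ the singular values of $X=A_0\otimes I$ are those of $A_0$, each repeated four times, so $f=2\sigma_1(A_0)^2$. Taking $A_0=\tfrac14\operatorname{diag}(1,1,-1,-1)$ gives $f=2\cdot\tfrac1{16}=\tfrac18<\tfrac12$.

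Then, given $(\tilde A,\tilde B)$ with $f>\tfrac12$, I will connect it to $(A_0,B_0)$ by a continuous path $\gamma:[0,1]\to\mathcal M$. The natural choice is the normalized segment
\[
t\mapsto \frac{(1-t)(A_0,B_0)+t(\tilde A,\tilde B)}{2\|(1-t)(A_0,B_0)+t(\tilde A,\tilde B)\|_F},
\]
which is well defined unless the segment passes through $0$. If it does, I will replace $A_0$ by $e^{i\theta}A_0$ (a rotation, so the value of $f$ is unchanged) or route the path through a great circle. The function $g=f\circ\gamma$ is continuous with $g(0)<\tfrac12<g(1)$. By the intermediate value theorem there is $t^*$ with $g(t^*)=\tfrac12$, and I set $(\hat A,\hat B)=\gamma(t^*)$. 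For any small $\epsilon\in(0,g(1)-\tfrac12)$ the same theorem applied on $[t^*,1]$ gives $t_\epsilon$ with $g(t_\epsilon)=\tfrac12+\epsilon$, and I set $(\tilde A',\tilde B')=\gamma(t_\epsilon)$.

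The main obstacle is the neighborhood claim: that $(\tilde A',\tilde B')$ lies near $(\hat A,\hat B)$ as $\epsilon\to0$. To handle it I will choose $t^*$ as the \emph{largest} $t$ with $g(t)=\tfrac12$, which exists because the level set is closed. Then $g>\tfrac12$ on $(t^*,1]$. Taking $t_\epsilon$ as the smallest $t>t^*$ with $g(t)=\tfrac12+\epsilon$, continuity of $g$ forces $t_\epsilon\to t^*$ as $\epsilon\to0$: otherwise a subsequence would converge to some $s>t^*$ with $g(s)=\tfrac12$, contradicting the maximality of $t^*$. Continuity of $\gamma$ then gives $\gamma(t_\epsilon)\to(\hat A,\hat B)$, which is the required neighborhood statement.
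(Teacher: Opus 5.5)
Your proof is correct, and it is more complete than the paper's, which consists only of the remark that $f$ is continuous and that pairs attaining $\frac12$ exist. Continuity alone neither produces a point at level $\frac12+\epsilon$ nor places it near $\{f=\frac12\}$. You supply what is missing:
\begin{itemize}
\item path-connectedness of the constraint set (a $59$-sphere);
\item an explicit anchor with $f=\frac18$;
\item the intermediate value theorem along a normalized segment;
\item the choice of $t^*$ as the last crossing of level $\frac12$, which is what forces $t_\epsilon\to t^*$.
\end{itemize}
The degenerate case you worry about cannot occur: the segment hits $0$ only if $(\tilde A,\tilde B)=-(A_0,B_0)$, whose value is $\frac18$.

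The paper implicitly anchors at a known saturation pair, and that choice would work equally well with your last-crossing argument. In either version, however, $(\hat A,\hat B)$ is only a boundary point of $\{f>\frac12\}$, not necessarily one of the pairs catalogued in the main theorem. The statement only requires $f(\hat A,\hat B)=\frac12$, so nothing is lost.

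If ``must be situated'' is read universally, your setting also yields that by compactness. Points $p_k$ with $f(p_k)=\frac12+\epsilon_k$ and $\epsilon_k\to0$ cannot stay a fixed distance from $\{f=\frac12\}$, since any limit point has $f=\frac12$.
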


\begin{proof}
    Since the singular values depend continuously on the matrix entries, the map $f$ is continuous. As there exist pairs achieving $1/2$, the existence of a small neighborhood exceeding $1/2$ follows trivially from continuity.
\end{proof}

According to Lemma \ref{lemma: neighborhood and saturation}, and considering the fact that we can treat Conjecture \ref{cj:distillability conjecture} as an optimization problem, cases of saturation are of great importance, in the following sections, we analyze these cases of saturation from both analytical and numerical perspectives.

\section{Main Result}\label{sec:main result}

The following theorem is the main result of our paper. It summarizes some saturated cases of the distillability conjecture, i.e. the case where $\sigma_1^2(X)+\sigma_2^2(X)=\frac{1}{2}$. We shall prove the items of this theorem by a sequence of lemmas.

\begin{theorem} \label{theorem: main result}
    We present seven cases of saturation for inequality \eqref{ineq:main conjecture}.
    \begin{enumerate}
		\item Let $X=A \otimes I_4$ in Conjecture \ref{cj:distillability conjecture}. Inequality \eqref{ineq:main conjecture} is saturated if and only if $\operatorname{rank} A=1$.
		\item Let $X=I_4 \otimes B$ in Conjecture \ref{cj:distillability conjecture}. Inequality \eqref{ineq:main conjecture} is saturated if and only if $\operatorname{rank} B=1$.
		\item Let $X=A \otimes I_4+I_4 \otimes B$ with $A=\operatorname{diag}\left(a_1, a_2, a_3, a_4\right), B=\operatorname{diag}\left(b_1, b_2, b_3, b_4\right)$ in Conjecture \ref{cj:distillability conjecture}. Inequality \eqref{ineq:main conjecture} is saturated if and only if $A$ and $B$ satisfy one of the following two statements (the case where $A$ or $B$ is zero is excluded),
		\begin{enumerate}
			\item $A=B=e^{i \phi} \operatorname{diag}\left(\frac{1}{4},-\frac{1}{4}, 0,0\right), \phi \in \mathbb{R}$.
			\item $A=e^{i \phi} \operatorname{diag}\left(-\frac{3}{8}, \frac{1}{8}, \frac{1}{8}, \frac{1}{8}\right), B=e^{i \phi} \operatorname{diag}\left(-\frac{1}{8},-\frac{1}{8}, \frac{1}{8}, \frac{1}{8}\right), \phi \in \mathbb{R}$.
		\end{enumerate}
		\item Let $A$ and $B$ be $2 \times 2$ block-diagonal matrices (the case where $A$ or $B$ are zero is excluded). Inequality \eqref{ineq:main conjecture} is saturated if and only if one of the following conditions is satisfied, 
		\begin{enumerate}
			\item $A=\begin{bmatrix}0 & a_{12} & 0 & 0 \\ a_{21} & 0 & 0 & 0 \\ 0 & 0 & 0 & 0 \\ 0 & 0 & 0 & 0\end{bmatrix}, B=\begin{bmatrix}0 & b_{12} & 0 & 0 \\ b_{21} & 0 & 0 & 0 \\ 0 & 0 & 0 & 0 \\ 0 & 0 & 0 & 0\end{bmatrix}$, and $a_{12} a_{21}=b_{12} b_{21}$.
			\item $A=\begin{bmatrix}a & 0 & 0 & 0 \\ 0 & a & 0 & 0 \\ 0 & 0 & -a & 0 \\ 0 & 0 & 0 & -a\end{bmatrix}, B=\begin{bmatrix}a & b_{12} & 0 & 0 \\ b_{21} & a & 0 & 0 \\ 0 & 0 & -a & 0 \\ 0 & 0 & 0 & -a\end{bmatrix}$, and $b_{12} b_{21}=4 a^2$.
		\end{enumerate} \label{case: 2 by 2 bloc-diagonal} 
		\item The saturated case where either $A$ or $B$ is normal is reduced to Case \ref{case: 2 by 2 bloc-diagonal}.
        \item The saturated case where $B$ is unitarily similar to $-A$ or $-A^\top$ is reduced to 
        Case \ref{case: 2 by 2 bloc-diagonal}.
        \item The saturated case where
        \begin{align}
            A = \begin{bmatrix}
                0 & A_1 \\
                A_2 & 0
            \end{bmatrix}, \quad
            B = \begin{bmatrix}
                0 & B_1 \\
                B_2 & 0
            \end{bmatrix},
        \end{align}
        and $A_1, A_2, B_1, B_2 \in M_2(\mathbb{C})$ is reduced to Case \ref{case: 2 by 2 bloc-diagonal}.
	\end{enumerate}
\end{theorem}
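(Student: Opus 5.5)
The plan is to prove the seven items in the order they are stated. Each saturation characterization comes from tracing the equality cases of the inequalities used in the known proofs. Items 5--7 are then handled by showing that equality forces a $2\times 2$ block-diagonal form, after which Item 4 applies.

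\textbf{Items 1 and 2.} The singular values of $A\otimes I_4$ are those of $A$, each repeated four times. Hence $\sigma_1^2(X)+\sigma_2^2(X)=2\sigma_1^2(A)$. Since $B=0$, we have $\|A\|_F^2=1/4$, so $2\sigma_1^2(A)\le 2\|A\|_F^2=1/2$. Equality holds iff $\sigma_1(A)=\|A\|_F$, i.e.\ iff $\operatorname{rank}A=1$. A rank-one traceless matrix, such as a nilpotent matrix, exists, so this case is realizable. Item 2 is symmetric.

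\textbf{Item 3.} For diagonal $A,B$ the matrix $X$ is diagonal with entries $a_i+b_j$. We must maximize the sum of the two largest $|a_i+b_j|^2$ subject to $\sum a_i=\sum b_j=0$ and $\sum|a_i|^2+\sum|b_j|^2=1/4$. Write the top two entries as $(i,j)$ and $(k,l)$ and split into cases by whether these index pairs share a row, share a column, or neither.

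\begin{enumerate}
\item In each case, bound $|a_i+b_j|^2\le (1+t)|a_i|^2+(1+1/t)|b_j|^2$.
\item Use the trace constraint through Cauchy--Schwarz, e.g.\ $|a_1|^2\le \tfrac34\sum|a_i|^2$, which holds because $a_1=-(a_2+a_3+a_4)$.
\item Choose $t$ so that the bound totals $1/2$.
\end{enumerate}

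The equality conditions of each Cauchy--Schwarz step and each weighted AM--GM step then pin down the entries. This should give the two families (a) and (b), with a common phase $e^{i\phi}$ forced by requiring the $a_i$ and $b_j$ to be aligned in the equality of $|a+b|\le|a|+|b|$.

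\textbf{Item 4.} Write $A=A_1\oplus A_2$ and $B=B_1\oplus B_2$. Then $X$ is unitarily a direct sum of the four $4\times4$ blocks $A_p\otimes I_2+I_2\otimes B_q$. We follow the proof in \cite{LIU202507} and record where each inequality is tight. If the top two singular values come from a single block $(p,q)$, the bound reduces to a $2\times2$ problem of the form $\|\cdot\|_{(2),2}\le 2(\|A_p\|_F^2+\|B_q\|_F^2)$. Equality then forces the remaining blocks to vanish, which gives the form (a); the relation $a_{12}a_{21}=b_{12}b_{21}$ should come from the equality case of Lemma \ref{lemma: saturation of BW inequality with M_2 Y transpose}. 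If instead the top two values come from two different blocks, the traceless constraint couples the blocks and equality forces the scalar structure of (b). In both subcases, the converse direction is a direct computation.

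\textbf{Items 5--7.} For each reduction we take the existing proof (\cite{QIAN2021139}, \cite{LIU202507}, \cite{SIO2025152}) and isolate its key step, which is a Böttcher--Wenzel-type commutator estimate $\|XY-YX^\top\|_F\le\sqrt2\|X\|_{(2),2}\|Y\|_F$. Saturation of the conjecture forces equality there. Lemma \ref{lemma: saturation of BW inequality with Y transpose} then supplies a unitary $U$ making the relevant matrix $X_{11}\oplus X_{22}$ with $Y=Y_{11}\oplus0$. Option 2 of that lemma, with $r\ge3$ and $X_{11}$ a scalar multiple of a unitary, must be excluded by the trace and norm constraints. Once it is, $U$ simultaneously block-diagonalizes $A$ and $B$ (using $U\otimes\bar U$, or $U\otimes U$ in the case of similarity to $-A$), and we are back in Item 4.

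The main obstacle is exactly this step: first locating the commutator inequality inside each earlier proof, and then showing that the $r\ge3$ alternative cannot occur. I would try first to show that the $r\ge3$ alternative forces $\sigma_1^2+\sigma_2^2<1/2$ strictly, by counting norm mass. For the anti-diagonal Item 7, I would first conjugate by the permutation $(2\,3)$ to turn the off-diagonal structure into a setting where Lemma \ref{lemma: saturation of BW inequality with M_2 Y transpose} applies blockwise.
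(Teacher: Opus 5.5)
Items 1, 2 and 6 are fine and match the paper. For Item 6, your exclusion of the $r\ge3$ alternative is the paper's: equality forces $\sigma_1^2(A)+\sigma_2^2(A)=\|A\|_F^2$, i.e.\ $\operatorname{rank}A\le2$, which is incompatible with a unitary block of size $r\ge3$. One small point: for $B\simeq-A^\top$ the expression is an ordinary commutator $[U^*V,A^\top]$, so the untwisted B\"ottcher--Wenzel equality case is the relevant one there.

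\textbf{Items 5 and 7.} The real gap is your unifying claim here. A (twisted) commutator appears in $\sigma_1^2(X)+\sigma_2^2(X)=\max\sum_{i=1,2}\|BV_i+V_iA^\top\|_F^2$ only after you substitute $B=-UAU^*$ or $B=-UA^\top U^*$. If $A$ is merely normal, or if $A$ and $B$ are independent matrices with zero diagonal $2\times2$ blocks, there is no commutator inequality in the proofs of \cite{QIAN2021139} or \cite{SIO2025152} to saturate. Conjugating by $(2\,3)$ only moves the zero blocks to the index sets $\{1,3\},\{2,4\}$ and creates no commutator. So the step you call the main obstacle is not merely hard; it fails. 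The paper instead traces equality through the actual arguments. For Item 5 it uses the equality cases of (53c) and Lemmas 18--19 of \cite{QIAN2021139}, plus the real/imaginary splitting $\sigma_1^2(X)+\sigma_2^2(X)\le\sum_{k=1,2}\bigl(\sigma_1^2(X_k)+\sigma_2^2(X_k)\bigr)$. For Item 7 it writes $X^*X\simeq H\oplus K$ with $H=H_1+H_2$, $K=K_1+K_2$, whose spectra are explicitly known. Equality in the Weyl bounds forces $A_i=a_ie_1^\top$, $B_i=b_ie_1^\top$. The shared top eigenvector then forces $a_1\otimes e_1\parallel e_1\otimes b_1$, which kills $a_{23},b_{23}$ (likewise $a_{41},b_{41}$), and a permutation then block-diagonalizes both matrices.

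\textbf{Item 4.} Your single-block bound, that $\sigma_1^2+\sigma_2^2$ of $A_p\otimes I_2+I_2\otimes B_q$ is at most $2(\|A_p\|_F^2+\|B_q\|_F^2)$, is false when the blocks are not traceless. For $A=B=\frac14\operatorname{diag}(1,0,-1,0)$ the $(1,1)$ block is $\frac14\operatorname{diag}(2,1,1,0)$, giving $\frac5{16}>\frac4{16}$. The trace coupling is already needed here. The paper normalizes the blocks to constant diagonals $\pm a$, $\pm b$ and bounds $\|Y_{11}\|_F^2=4|a+b|^2+2(|a_{12}|^2+|a_{21}|^2+|b_{12}|^2+|b_{21}|^2)\le 8(|a|^2+|b|^2)+2(\cdots)\le\frac12$. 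Equality forces $a=b$, $a_{34}=a_{43}=b_{34}=b_{43}=0$ and $\operatorname{rank}Y_{11}\le2$. The second blocks are therefore $-aI_2$, not $0$, and the vanishing $3\times3$ minors of $Y_{11}$ yield \emph{both} families: (a) when $a=0$, and (b) when $a_{12}=a_{21}=0$, $b_{12}b_{21}=4a^2$.

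So (b) does not come from two different blocks, as you predict. The cross-block cases ($Y_{11}$ with $Y_{22}$, and $Y_{11}$ with $Y_{44}$) give nothing new, and proving that is most of the paper's appendix. There, equality in the block-Weyl bounds forces $b+a=\sqrt{b_{12}b_{21}}$ and $a-b=\sqrt{b_{34}b_{43}}$. Matching the top eigenvectors of $C_{11}^{(1)}$ and $C_{11}^{(2)}$ then forces $P,Q$ to be normal. Your proposal gives no argument for these cases.

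\textbf{Item 3.} The termwise bound $|a_i+b_j|^2\le(1+t)|a_i|^2+(1+1/t)|b_j|^2$ cannot close the shared-row case. It discards the fact that the cross terms see only $b_1+b_2$, while $|b_1|^2+|b_2|^2$ can equal $\|b\|^2$. Use instead
$|a_1+b_1|^2+|a_1+b_2|^2=\frac12\bigl(|b_1-b_2|^2+|2a_1+b_1+b_2|^2\bigr)$. Apply weighted AM--GM with $t=\frac13$ to $2a_1$ and $b_1+b_2$, together with $|a_1|^2\le\frac34\|a\|^2$ and $|b_1+b_2|^2=|b_3+b_4|^2\le2(|b_3|^2+|b_4|^2)$. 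This gives the bound with slack $\frac12|b_1-b_2|^2$ and recovers (b) directly, more cleanly than the paper's Lagrangian step.
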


Part 1, 2, 3, 4, 5, and 7 of the theorem are respectively supported by Lemma \ref{lemma:B=0}, \ref{lemma:A=0}, \ref{lemma:diag}, \ref{lemma:2 by 2 block-diagonal},  \ref{lemma:one normal}, and \ref{lemma:2 2 anti}. Part 6 is supported by Lemma \ref{lemma:B and transpose of -A equivalent} and \ref{lemma:B and -A equivalent}.

\begin{lemma}\label{lemma:B=0}
	Let $X=A \otimes I_4$ in Conjecture \ref{cj:distillability conjecture}. Inequality $\sigma_1^2(X)+\sigma_2^2(X)=\frac{1}{2}$ is saturated if and only if $\operatorname{rank} A=1$.
\end{lemma}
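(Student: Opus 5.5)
Here $B=0$, so the constraints of Conjecture \ref{cj:distillability conjecture} reduce to $\operatorname{Tr}A=0$ and $\|A\|_F^2=\frac14$. The plan is to compute the singular values of $X=A\otimes I_4$ directly and then see exactly when $\sigma_1^2+\sigma_2^2$ can reach $\frac12$.

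\textbf{Step 1 (singular values).} Since $X^*X=(A^*A)\otimes I_4$, the singular values of $X$ are those of $A$, each repeated four times. Writing $s_1\ge s_2\ge s_3\ge s_4\ge 0$ for the singular values of $A$, we therefore get $\sigma_1(X)=\sigma_2(X)=s_1$. Hence
\begin{align}
\sigma_1^2(X)+\sigma_2^2(X)=2s_1^2\le 2\sum_i s_i^2=2\|A\|_F^2=\frac12 .
\end{align}
This already proves the inequality in this case.

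\textbf{Step 2 (equality).} Equality in the chain holds if and only if $s_2=s_3=s_4=0$. Given $\|A\|_F^2=\frac14$, this is the same as $\operatorname{rank}A=1$. Both directions follow at once:
\begin{itemize}
\item if $\operatorname{rank}A=1$, then $s_1^2=\|A\|_F^2=\frac14$, so the sum equals $\frac12$;
\item if the sum equals $\frac12$, then $s_1^2=\sum_i s_i^2$, which forces the other singular values to vanish.
\end{itemize}

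\textbf{Step 3 (feasibility).} We should check that rank-one traceless matrices exist, so the saturated case is not vacuous. An example is $A=\frac12 e_1e_2^{\top}$. In general a rank-one $A=uv^*$ is traceless exactly when $v^*u=0$, with normalization $\|u\|\|v\|=\frac12$.

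There is no real obstacle here. The only point needing care is Step 1: the singular values of the Kronecker product $A\otimes I_4$ are the products of those of $A$ and $I_4$, which gives the fourfold repetition. That repetition is why $\sigma_2(X)=\sigma_1(X)$, rather than $\sigma_2(X)$ being the second singular value of $A$.
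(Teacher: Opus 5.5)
Your proof is correct and matches the paper's argument. Both use that $X^*X=(A^*A)\otimes I_4$ gives $\sigma_1^2(X)+\sigma_2^2(X)=2\sigma_1^2(A)\le 2\|A\|_F^2=\frac12$, with equality exactly when $A$ has rank one. Your Step 3 feasibility check is a harmless addition the paper omits.
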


\begin{proof}
	The proof is straightforward. We have
	\begin{align}
		\sigma_1^2(X)+\sigma_2^2(X)=2 \lambda_1\left(A A^{\dagger}\right) \leq 2 \operatorname{Tr} A A^{\dagger}=1 / 2 .
	\end{align}
	Hence, the inequality is saturated if and only if $\operatorname{rank} A=1$.
\end{proof}

\begin{lemma}\label{lemma:A=0}
	Let $X=I_4 \otimes B$ in Conjecture \ref{cj:distillability conjecture}. Inequality \eqref{ineq:main conjecture} is saturated if and only if $\operatorname{rank} B=1$.
\end{lemma}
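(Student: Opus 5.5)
This is the mirror image of Lemma \ref{lemma:B=0}, and I plan to reduce it to that computation. Here $A=0$, so the constraints \eqref{eq:traceless condition}--\eqref{eq:frob norm condition} become $\operatorname{Tr}B=0$ and $\|B\|_F^2=1/4$. The matrix $X=I_4\otimes B$ is block diagonal with four copies of $B$. Hence
\begin{align}
XX^\dagger=I_4\otimes BB^\dagger ,
\end{align}
and the eigenvalues of $XX^\dagger$ are those of $BB^\dagger$, each repeated four times. In particular $\sigma_1(X)=\sigma_2(X)=\sigma_1(B)$. Alternatively, the swap permutation $P$ with $P(A\otimes I)P^\top=I\otimes A$ shows that $I_4\otimes B$ is unitarily equivalent to $B\otimes I_4$, so the argument of Lemma \ref{lemma:B=0} transfers directly.

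The key chain of (in)equalities will be
\begin{align}
\sigma_1^2(X)+\sigma_2^2(X)=2\lambda_1(BB^\dagger)\le 2\operatorname{Tr}(BB^\dagger)=2\|B\|_F^2=\tfrac12 .
\end{align}
The inequality holds because $BB^\dagger$ is positive semidefinite. Its largest eigenvalue is therefore at most its trace, with equality exactly when all other eigenvalues vanish, that is, when $\operatorname{rank}(BB^\dagger)=\operatorname{rank}B\le1$. Since $\|B\|_F^2=1/4$ forces $B\neq0$, saturation of \eqref{ineq:main conjecture} holds if and only if $\operatorname{rank}B=1$.

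There is no real obstacle. The one point to check is that the rank-one condition is compatible with $\operatorname{Tr}B=0$, so that the equivalence is not vacuous. For example, the nilpotent matrix $B=\tfrac12 e_1e_2^\top$ satisfies both constraints and has rank one. The "if" direction also needs no extra hypotheses, because the equality case in the eigenvalue--trace bound is exact.
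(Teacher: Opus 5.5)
Your proof is correct and is the same argument the paper uses: it says only that the proof is "similar to the last lemma." That lemma's argument, transferred to $B$, gives
$\sigma_1^2(X)+\sigma_2^2(X)=2\lambda_1(BB^\dagger)\le 2\operatorname{Tr}(BB^\dagger)=\tfrac12$, with equality exactly when $\operatorname{rank}B=1$.
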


\begin{proof}
	The proof is similar to the last lemma. 
\end{proof} 

\begin{lemma}\label{lemma:diag}
	Let $X=A \otimes I_4+I_4 \otimes B$ with $A=\operatorname{diag}\left(a_1, a_2, a_3, a_4\right), B=\operatorname{diag}\left(b_1, b_2, b_3, b_4\right)$ in Conjecture \ref{cj:distillability conjecture}. Inequality $\sigma_1^2(X)+\sigma_2^2(X)=\frac{1}{2}$ is saturated if and only if $A$ and $B$ satisfy one of the following two statements (the case where $A$ or $B$ is zero is excluded),
	\begin{enumerate}
		\item $A=B=e^{i \phi} \operatorname{diag}\left(\frac{1}{4},-\frac{1}{4}, 0,0\right), \phi \in \mathbb{R}$.
		\item $A=e^{i \phi} \operatorname{diag}\left(-\frac{3}{8}, \frac{1}{8}, \frac{1}{8}, \frac{1}{8}\right), B=e^{i \phi} \operatorname{diag}\left(-\frac{1}{8},-\frac{1}{8}, \frac{1}{8}, \frac{1}{8}\right), \phi \in \mathbb{R}$.
	\end{enumerate}
\end{lemma}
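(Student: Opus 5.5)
The plan is to reduce the statement to a finite family of Hermitian quadratic-form problems. Since $A$ and $B$ are diagonal, $X=A\otimes I_4+I_4\otimes B$ is diagonal with entries $a_i+b_j$, $1\le i,j\le 4$, so its singular values are the numbers $|a_i+b_j|$. Hence
\[
\sigma_1^2(X)+\sigma_2^2(X)=\max_{(i,j)\neq(k,l)}\bigl(|a_i+b_j|^2+|a_k+b_l|^2\bigr).
\]
Saturation therefore means two things: no pair of distinct index pairs exceeds $1/2$ (this is the known normal case), and at least one pair attains $1/2$.

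Up to simultaneous permutations of the indices of $A$ and of $B$, there are three configurations of the two index pairs:
\begin{enumerate}
\item[(I)] $i\neq k$ and $j\neq l$, say $(1,1),(2,2)$;
\item[(II)] $i=k$ and $j\neq l$, say $(1,1),(1,2)$;
\item[(III)] $i\neq k$ and $j=l$, which is (II) with the roles of $A$ and $B$ exchanged.
\end{enumerate}
In each configuration set $v=(a_1,\dots,a_4,b_1,\dots,b_4)\in\mathbb C^8$. The quantity to maximize is $q(v)=\|Mv\|^2$, where $M$ is the $2\times 8$ matrix selecting the two sums. The constraint set is the sphere $\|v\|^2=1/4$ inside the subspace $W=\{\sum a_i=\sum b_j=0\}$. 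So the maximum equals $\tfrac14\lambda_{\max}(PM^*MP)$, with $P$ the orthogonal projection onto $W$, and equality holds exactly for $v$ in the top eigenspace.

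Because $M^*M$ has rank two, I would compute $\lambda_{\max}$ as the largest eigenvalue of the $2\times2$ Gram matrix $MPM^*$. Its entries are explicit: $P$ acts on each block of four coordinates as $I-\tfrac14 J$. I expect $\lambda_{\max}=2$ in every configuration, which recovers the inequality, and the corresponding eigenvectors should yield the extremizers.
\begin{enumerate}
\item[(I)] The top eigenvector of $MPM^*$ is $(1,-1)$, which should give $v\propto P M^*(1,-1)^\top$. This is $A=B=\operatorname{diag}(\tfrac14,-\tfrac14,0,0)$ after normalization.
\item[(II)] The top eigenvector is $(1,1)$, which should give $A=\operatorname{diag}(-\tfrac38,\tfrac18,\tfrac18,\tfrac18)$ and $B=\operatorname{diag}(-\tfrac18,-\tfrac18,\tfrac18,\tfrac18)$ up to sign.
\item[(III)] This gives the same pair with $A$ and $B$ swapped.
\end{enumerate}
The global phase $e^{i\phi}$ appears because the maximizing eigenspace is one complex dimension, so scalar multiples of unit modulus are the only freedom. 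I will verify this by checking that the two eigenvalues of $MPM^*$ are distinct, namely $2$ versus $\tfrac32$ or $1$.

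Finally, I would check the converse: each listed pair gives exactly $1/2$. For (a) the entries $|a_i+b_j|^2$ are $1/4,1/4$ and the rest are smaller. For (b) the entries $a_1+b_1=a_1+b_2=-1/2$. By the first step these two singular values are then indeed the largest.

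The main obstacle is bookkeeping rather than depth. Two points need care. First, the statement as written presumably lists the solutions only up to simultaneous index permutations, and in configuration (III) up to the interchange of $A$ and $B$; I would state this normalization explicitly. Second, I must ensure that the excluded degenerate cases $A=0$ or $B=0$ (which reduce to Lemmas \ref{lemma:B=0} and \ref{lemma:A=0}) are not reintroduced through the eigenvector computation. Computing $MPM^*$ correctly in each configuration, especially the off-diagonal entry that determines which eigenvector is top, is where an error is most likely, so I would double-check it against the explicit extremizers.
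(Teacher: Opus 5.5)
Your proof is correct, and it reaches the paper's conclusion by a different, more uniform route.

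The paper makes the same reduction to the entries $|a_j+b_k|$ and uses the same two configurations, but treats them separately:
\begin{itemize}
\item In configuration (I) it uses the parallelogram identity. At saturation this gives $|a_1-b_1|^2+|a_2-b_2|^2=2(|a_1|^2+|b_1|^2+|a_2|^2+|b_2|^2)-\tfrac12\le 0$, which forces $a_1=b_1$ and $a_2=b_2$.
\item In configuration (II) it aligns the phases of $a_1$ and $b_1+b_2$ and symmetrizes the entries that do not appear in the objective ($a_2=a_3=a_4$, $b_3=b_4$). It then solves a two-variable Lagrange-multiplier problem.
\end{itemize}

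Your Rayleigh-quotient formulation on the traceless subspace handles both cases in one computation. The Gram matrix is $MPM^*=\begin{bmatrix}3/2&\mp1/2\\ \mp1/2&3/2\end{bmatrix}$, with the upper sign for (I) and the lower for (II). Its top eigenvalue $2$ is simple, and the second eigenvalue is $1$ (not $\tfrac32$). So you get the inequality and, at the same time, the exact equality set $\{e^{i\phi}v_0\}$ with $v_0$ proportional to $PM^*(1,\mp1)^\top$. This evaluates to the two listed pairs and explains why the phase is the only freedom.

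What your route buys:
\begin{itemize}
\item A fully rigorous necessity argument in configuration (II). The paper there relies on reductions (phase alignment, symmetrizing the unused entries) whose equality cases it does not track, followed only by first-order Lagrange conditions.
\item An explicit treatment of configuration (III), the case $\sigma_2=|a_2+b_1|$, which the paper's case split omits.
\end{itemize}
In exchange, the paper's parallelogram argument for (I) is more elementary.

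Your normalization remark is necessary, not cosmetic. The permutations of the indices of $A$ and of $B$ must be independent, which is legitimate since $X\mapsto(P_1\otimes P_2)X(P_1\otimes P_2)^\top$. For example, $A=\tfrac14\operatorname{diag}(1,-1,0,0)$, $B=\tfrac14\operatorname{diag}(0,0,1,-1)$ saturates but is not literally in the list.

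Two small points. The appeal to the known normal case in your first step is unnecessary, since $\lambda_{\max}=2$ in every configuration already proves the bound. Your concern about reintroducing $A=0$ or $B=0$ does not arise: both top eigenvectors have nonzero $A$- and $B$-parts.
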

\begin{proof}
    We show the proof in Appendix \ref{sec:proof of lemma diag}.
\end{proof}

\begin{lemma}\label{lemma:2 by 2 block-diagonal}
	Let $A$ and $B$ be $2 \times 2$ block-diagonal matrices in Conjecture \ref{cj:distillability conjecture} (the case where $A$ or $B$ is zero is excluded). Then $\sigma_1^2(X)+\sigma_2^2(X)=\frac{1}{2}$ if and only if one of the following conditions is satisfied.
\begin{enumerate}
		\item[(i)] $A=\begin{bmatrix}
			0 & a_{12} & 0 & 0 \\ 
			a_{21} & 0 & 0 & 0 \\ 
			0 & 0 & 0 & 0 \\ 
			0 & 0 & 0 & 0
		\end{bmatrix}, B=\begin{bmatrix}
			0 & b_{12} & 0 & 0 \\ 
			b_{21} & 0 & 0 & 0 \\ 
			0 & 0 & 0 & 0 \\ 
			0 & 0 & 0 & 0
		\end{bmatrix}$, and $a_{12} a_{21}=b_{12} b_{21}$.
	\item[(ii)] $A=\begin{bmatrix}
			a & 0 & 0 & 0 \\ 
			0 & a & 0 & 0 \\ 
			0 & 0 & -a & 0 \\ 
			0 & 0 & 0 & -a
		\end{bmatrix}, B=\begin{bmatrix}
			a & b_{12} & 0 & 0 \\ 
			b_{21} & a & 0 & 0 \\ 
			0 & 0 & -a & 0 \\ 
			0 & 0 & 0 & -a
		\end{bmatrix}$, and $b_{12} b_{21}=4 a^2$.
	\end{enumerate}
\end{lemma}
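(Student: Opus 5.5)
We will exploit the block structure to turn the $16\times16$ problem into four $4\times4$ Kronecker sums. Write $A=A_1\oplus A_2$ and $B=B_1\oplus B_2$ with $A_i,B_j\in M_2$. After a permutation of the tensor basis, $X$ is unitarily similar to $\bigoplus_{i,j=1}^2 X_{ij}$ with $X_{ij}=A_i\otimes I_2+I_2\otimes B_j$. Hence the two largest singular values of $X$ are either the two largest of a single block $X_{ij}$, or the largest singular values of two different blocks. The plan follows the proof of the block-diagonal case in \cite{LIU202507}. We retrace each inequality used there and record when it is tight.

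\textbf{Step 1: decompose each block.} Split $A_i=\alpha_i I+A_i'$ and $B_j=\beta_j I+B_j'$ with $A_i',B_j'$ traceless. The trace conditions give $\alpha_2=-\alpha_1$ and $\beta_2=-\beta_1$. Then $X_{ij}=(\alpha_i+\beta_j)I_4+A_i'\otimes I+I\otimes B_j'$.

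For traceless $2\times2$ matrices, the standard identity $\|A'\otimes I+I\otimes B'\|_F^2=2\|A'\|_F^2+2\|B'\|_F^2$ holds. Combining it with $\sigma_1^2(Y)+\sigma_2^2(Y)\le\|Y\|_F^2$, and with the sharper estimate of the largest singular value of a $2\times2$ Kronecker sum, gives explicit bounds. They are in terms of $\|A_i'\|_F$, $\|B_j'\|_F$ and $|\alpha_i+\beta_j|$.

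\textbf{Step 2: maximize the bounds case by case.} Treat two cases.
\begin{enumerate}
\item[(a)] The top two singular values come from one block $X_{11}$, say.
\item[(b)] They come from two blocks, e.g.\ $X_{11}$ and $X_{22}$, or $X_{11}$ and $X_{12}$.
\end{enumerate}
In each case the bound becomes a quadratic form in the parameters $\|A_i'\|_F^2$, $\|B_j'\|_F^2$, $|\alpha|^2$, $|\beta|^2$. This form is to be maximized subject to the normalization $\sum_i(\|A_i'\|_F^2+2|\alpha_i|^2)+\sum_j(\|B_j'\|_F^2+2|\beta_j|^2)=\tfrac14$. Saturation requires all intermediate inequalities to be equalities simultaneously.

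In case (a), tightness of $\sigma_1^2+\sigma_2^2\le\|X_{11}\|_F^2$ forces $\operatorname{rank}X_{11}\le2$. It also forces the other blocks to vanish, so $\alpha=\beta=0$ and $A_2=B_2=0$. For traceless $A_1,B_1$, the Kronecker sum has eigenvalues $\pm\lambda\pm\mu$ (and in the generic case rank $4$ unless $\lambda=\pm\mu$). So rank $2$ means $\det A_1=\det B_1$. Combining this with the equality in the singular value estimate, we expect to force $A_1,B_1$ to have zero diagonal. This gives condition (i), with $a_{12}a_{21}=b_{12}b_{21}$. A useful tool here is Lemma \ref{lemma: saturation of BW inequality with M_2 Y transpose}, since $\sigma$'s of $A'\otimes I+I\otimes B'$ relate to the commutator-type map $Y\mapsto A'Y+YB'^{\top}$ under vectorization.

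In case (b), equality forces a balanced configuration. The scalar parts are $\alpha=\beta=a$, and the mixing $A_i'$ vanishes. The block $X_{11}=2aI+I\otimes B_1'$ pairs with $X_{22}=-2aI$. Optimizing the singular value of $2aI+I\otimes B_1'$ against the normalization should yield $B_1'$ with zero diagonal, $b_{12}b_{21}=4a^2$, and $A=\operatorname{diag}(a,a,-a,-a)$. This gives condition (ii).

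\textbf{Step 3: converse.} Compute directly that both families attain $\tfrac12$. This is a routine eigenvalue/singular value check. In (i), $X_{11}$ has rank two with equal Frobenius mass. In (ii), $\sigma_1(X_{11})^2=\tfrac14$ comes from $b_{12}b_{21}=4a^2$, and $\sigma_1(X_{22})^2=16a^2$ with the normalization.

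The main obstacle is Step 2. The bound from \cite{LIU202507} is obtained by a chain of estimates (triangle inequalities, Cauchy--Schwarz, a nonlinear maximization). Tracing equality through every link, particularly the one for $\sigma_1$ of a $2\times 2$ Kronecker sum with nonnormal summands, is delicate. We would first try to reduce each $A_i,B_j$ via a simultaneous unitary to upper triangular form, and then apply the equality characterization of Lemma \ref{lemma: saturation of BW inequality with M_2 Y transpose}.
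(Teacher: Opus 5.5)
Your equality analysis in case (a) is wrong, and case (a) is exactly where condition (ii) comes from. Consider the chain
\[
\sigma_1^2+\sigma_2^2\le\|X_{11}\|_F^2=4|\alpha+\beta|^2+2\|A_1'\|_F^2+2\|B_1'\|_F^2\le 8|\alpha|^2+8|\beta|^2+2\|A_1'\|_F^2+2\|B_1'\|_F^2\le 2(\|A\|_F^2+\|B\|_F^2)=\tfrac12 .
\]
Equality here forces only $\alpha=\beta$ and $A_2'=B_2'=0$. The scalar parts survive, and the other blocks do not vanish: already in (i) one has $X_{12}=A_1\otimes I_2\neq 0$. The condition $\operatorname{rank}X_{11}\le 2$, for $X_{11}=2\alpha I_4+A_1'\otimes I_2+I_2\otimes B_1'$ with eigenvalues $2\alpha\pm\lambda\pm\mu$, then splits into two branches:
\begin{itemize}
\item[(1)] $\alpha=0$ with $\det A_1'=\det B_1'$, which is (i);
\item[(2)] $\alpha\neq 0$, where one of $A_1',B_1'$ must vanish (a nonzero nilpotent would push the rank to $3$) and the other has eigenvalues $\pm2\alpha$. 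This is (ii) and its $A\leftrightarrow B$ mirror.
\end{itemize}
This is how the paper obtains (ii). There $X_{11}=I_2\otimes(2aI_2+B_1')$ with $2aI_2+B_1'$ of rank one, so $\sigma_1(X_{11})^2=\|2aI_2+B_1'\|_F^2=\tfrac14$ is attained twice inside the single block $X_{11}$. Your Step 3 check of (ii) is therefore also wrong. $X_{22}=-2aI_4$ has $\sigma_1^2=4|a|^2$, not $16a^2$. Moreover, $|b_{12}b_{21}|=4|a|^2$ together with $8|a|^2+|b_{12}|^2+|b_{21}|^2=\tfrac14$ gives $4|a|^2\le\tfrac1{16}$. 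So pairing $X_{11}$ with $X_{22}$ in configuration (ii) yields at most $\tfrac5{16}<\tfrac12$.

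Consequently your case (b) aims at the wrong target, and it is also where nearly all the work lies. What must be proved there is that saturation with $\sigma_1,\sigma_2$ taken from two different blocks produces nothing beyond configurations unitarily similar to (i) or (ii). For example, the pairing $X_{11},X_{22}$ saturates only at $A=B=\operatorname{diag}(2a,0,-2a,0)$ up to unitary similarity, which is (i) after regrouping the basis. Your proposed endpoint for that pairing ($A_i'=0$, $A=\operatorname{diag}(a,a,-a,-a)$) is not even a saturation point for it, and your proposal only says the optimization ``should yield'' the answer.

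The paper handles the two pairings separately:
\begin{itemize}
\item[(1)] For $X_{11},X_{12}$ (common $A_1$), it bounds $\lambda_{\max}$ through the $2\times2$ diagonal blocks of $X_{11}X_{11}^*$. In the normalization where each block has constant diagonal, this gives the necessary conditions $\alpha+\beta=\sqrt{b_{12}b_{21}}$ and $\alpha-\beta=\sqrt{b_{34}b_{43}}$. It then splits $X_{11}^*X_{11}$ into two positive semidefinite pieces whose top eigenvectors must coincide. This forces the $B$-parts to be normal and $|a_{12}a_{21}|=(|b_{12}|+|b_{21}|)^2=(|b_{34}|+|b_{43}|)^2$.
\item[(2)] For $X_{11},X_{22}$, it traces equality through $\sigma_1(X_{11})\le|\alpha+\beta|+\sigma_1(X_{11}-(\alpha+\beta)I_4)$, the step $(x+y)^2\le 2(x^2+y^2)$, and a permuted block bound. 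This aligns the phases of the eigenvalues and forces rank-one diagonal blocks.
\end{itemize}

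Finally, zero diagonals in (i) and (ii) are not forced: for example $A=B=\operatorname{diag}(\tfrac14,-\tfrac14,0,0)$ saturates. The stated forms hold only after conjugating each $2\times2$ block to constant diagonal, a normalization the paper imposes at the outset. That normalization should replace your ``expect to force zero diagonal''.
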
 

\begin{proof}
    We show the proof in Appendix \ref{appendix: proof 2 by 2 block-diagonal}.
\end{proof}

\begin{lemma}\label{lemma:one normal}
	The case where $A$ or $B$ is normal is reduced to Lemma \ref{lemma:2 by 2 block-diagonal}.
\end{lemma}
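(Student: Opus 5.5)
Since $\sigma_k(X)$ is invariant under $X\mapsto (U\otimes V)X(U\otimes V)^\dagger$, and replacing $(A,B)$ by $(UAU^\dagger, VBV^\dagger)$ preserves conditions \eqref{eq:traceless condition} and \eqref{eq:frob norm condition}, I would first reduce to the case where $A$ is normal. (If instead $B$ is normal, I would swap the two tensor factors with the swap unitary, which maps $A\otimes I+I\otimes B$ to $B\otimes I+I\otimes A$.) I would then diagonalize, taking $A=\operatorname{diag}(a_1,a_2,a_3,a_4)$ with $\sum a_i=0$. Then
\[
X=\bigoplus_{i=1}^4 (a_iI_4+B),
\]
so the singular values of $X$ are the union of those of the blocks $M_i:=a_iI+B$. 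The top two singular values of $X$ therefore come either from a single block, giving $\sigma_1^2(M_i)+\sigma_2^2(M_i)$, or from two different blocks, giving $\sigma_1^2(M_i)+\sigma_1^2(M_j)$ with $i\neq j$. I would follow the proof of the one-normal case in \cite{QIAN2021139}, which bounds each of these two expressions by $1/2$ through a chain of elementary inequalities. The plan is to record, at every link of that chain, exactly when equality holds.

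\emph{Two singular values from one block.} Here I would use $\sigma_1^2(M_i)+\sigma_2^2(M_i)\le \|a_iI+B\|_F^2-\sum_{k\ge 3}\sigma_k^2(M_i)$ together with $\|a_iI+B\|_F^2=4|a_i|^2+\|B\|_F^2$, which holds because $\operatorname{Tr}B=0$. I would then combine this with $|a_i|^2\le \tfrac34\|A\|_F^2$, which follows from the trace-zero condition on $A$ via Cauchy–Schwarz. Equality forces $\operatorname{rank}M_i\le 2$, a rigid profile for the $a_j$ (either $A=0$, or $a$ equal to a multiple of $(-3,1,1,1)$, or the profile from Lemma \ref{lemma:diag}), and an explicit relation between $\|A\|_F$ and $\|B\|_F$. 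The condition $\operatorname{rank}(a_iI+B)\le 2$ says that $B$ has the eigenvalue $-a_i$ with geometric multiplicity at least two. Together with the trace and norm constraints at equality, I expect this to pin $B$, after a unitary change of basis that commutes with $A$ or is compatible with its eigenspaces, to the form $(aI_2+N)\oplus(-aI_2)$, or to the rank-two form in case (i) of Lemma \ref{lemma:2 by 2 block-diagonal}.

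\emph{Top singular values from two blocks.} Here I would write $M_i-M_j=(a_i-a_j)I$ and take $u,v$ to be top right singular vectors of $M_i$ and $M_j$. The bound should come from an estimate of the form
\[
\sigma_1^2(M_i)+\sigma_1^2(M_j)\le \tfrac12\bigl(|a_i-a_j|^2\bigr)+\|B\|_F^2\text{-type terms},
\]
obtained from a parallelogram identity in $\|Bu\|$ and $\|Bv\|$, followed by the estimates $|a_i-a_j|^2\le 2\|A\|_F^2$ and $\|B\|^2\le\|B\|_F^2$. Equality in the step $\|Bu\|^2\le\|B\|_F^2$-type forces $B$ to act nontrivially only on $\operatorname{span}\{u,v\}$ and to vanish on its orthogonal complement. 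Equality in the trace-zero estimate for $A$ forces $a_k=0$ for $k\ne i,j$ and $a_i=-a_j$. Choosing a basis adapted to $\operatorname{span}\{u,v\}$ and its complement would then make both $A$ and $B$ $2\times2$ block-diagonal, with $A=\operatorname{diag}(a,-a,0,0)$. This is compatible with case (i) after a further unitary rotation.

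\emph{Main obstacle.} I expect the hardest step to be extracting structural conditions on $B$ from the equality cases, since the argument in \cite{QIAN2021139} may pass through spectral or majorization inequalities whose equality cases are not stated. If so, I would instead argue directly: at saturation $(A,B)$ maximizes $\sigma_1^2+\sigma_2^2$ under the constraints, so the two-dimensional top singular subspace of $X$ must be invariant in a way that, after projecting onto the second factor, yields a $B$-reducing subspace of dimension $2$. Once both $A$ (diagonal, hence block-diagonal in any basis splitting its eigenspaces) and $B$ are shown to be simultaneously $2\times 2$ block-diagonal in a common product basis, Lemma \ref{lemma:2 by 2 block-diagonal} applies and finishes the reduction.
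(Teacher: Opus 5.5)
\textbf{There is a genuine gap.} The inequality chains whose equality cases you plan to record do not prove the bound $\frac12$. Saturation therefore forces no equality along them, and the structures you predict are contradicted by saturating pairs the paper already lists.

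In the one-block case your chain ends at $4|a_i|^2+\|B\|_F^2\le 3\|A\|_F^2+\|B\|_F^2=\frac14+2\|A\|_F^2$. This exceeds $\frac12$ once $\|A\|_F^2>\frac18$, because the Frobenius step charges the scalar part $a_iI$ on all four singular directions. For instance, $A=\operatorname{diag}(-\frac38,\frac18,\frac18,\frac18)$, $B=\operatorname{diag}(-\frac18,-\frac18,\frac18,\frac18)$ from Lemma \ref{lemma:diag} saturates. Both top singular values come from the single block $a_1I+B=\operatorname{diag}(-\frac12,-\frac12,-\frac14,-\frac14)$, which has rank $4$. So your claim that equality forces $\operatorname{rank}M_i\le 2$ is false. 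A working bound has to compress to two dimensions, $\sigma_1^2(M_i)+\sigma_2^2(M_i)=\max_{U^*U=I_2}\bigl(2|a_i|^2+2\operatorname{Re}(\bar a_i\operatorname{Tr}(U^*BU))+\|BU\|_F^2\bigr)$, and then use $\operatorname{Tr}B=0$ to trade the cross term against $\|BU\|_F^2$.

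In the two-block case, no estimate of the form $\frac12|a_i-a_j|^2$ plus terms in $B$ alone can hold, because blocks with $a_i=a_j$ can carry the top two singular values. Already $B=0$, $A=\operatorname{diag}(a,a,-a,-a)$ violates it. At saturation, family (ii) of Lemma \ref{lemma:2 by 2 block-diagonal} has $A=\operatorname{diag}(a,a,-a,-a)$, and $\sigma_1=\sigma_2=\frac12$ come from the two identical rank-one blocks $aI+B$, with $a_3=a_4=-a\neq 0$. So your conclusion that $a_i=-a_j$ and $a_k=0$ for $k\neq i,j$ discards genuine saturators; a correct chain must also track $a_i+a_j$.

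Your fallback says maximality forces the top singular subspace of $X$ to yield a two-dimensional $B$-reducing subspace. That is an assertion rather than an argument: first-order optimality alone gives no such invariance.

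The paper does not re-derive any equality cases; it imports them from \cite{QIAN2021139}. For real $A,B$ it uses equation (53c) there: saturation with both top singular values in one $4\times 4$ block forces both matrices to be normal. It also uses Lemmas 18 and 19 there: saturation across two blocks forces $B$ to be $2\times 2$ block-diagonal. It then reduces complex $A,B$ to the real case via the splitting $\sigma_1^2(X)+\sigma_2^2(X)\le\sigma_1^2(X_1)+\sigma_2^2(X_1)+\sigma_1^2(X_2)+\sigma_2^2(X_2)$ into real and imaginary parts. Your direct complex-variable route could in principle avoid that reduction, but only after you replace both chains by ones that genuinely close at $\frac12$.
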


\begin{proof}
We review some conclusions given by \cite{QIAN2021139}.
	\begin{enumerate}
		\item Consider the case where $A$ and $B$ are real, $\sigma_1^2(X)$ and $\sigma_2^2(X)$ are from the same 4 by 4 block of $X$. According to equation (53c) of \cite{QIAN2021139}, $\sigma_1^2(X)+\sigma_2^2(X)=\frac{1}{2}$ only if $A$ and $B$ are both normal. 
		\item Consider the case where $A$ and $B$ are real, $\sigma_1^2(X)$ and $\sigma_2^2(X)$ are from different 4 by 4 blocks of $X$. According to Lemmas 18 and 19 of \cite{QIAN2021139}, $\sigma_1^2(X)+\sigma_2^2(X)=\frac{1}{2}$ only if $A$ is normal and $B$ is $2 \times 2$ block-diagonal.
		\item Consider the case where $A$ and $B$ are complex. According to equation (93) of \cite{QIAN2021139}, let $X=A \otimes I+ I \otimes B, X_1 \in M_{16}(\mathbb{R})$ and $X_2 \in M_{16}(\mathbb{R})$ be its real and imaginary parts, we have
		\begin{align}
			\sigma_1^2(X)+\sigma_2^2(X) \leq \sigma_1^2\left(X_1\right)+\sigma_2^2\left(X_1\right)+\sigma_1^2\left(X_2\right)+\sigma_2^2\left(X_2\right).
		\end{align}
	\end{enumerate}
    According to the three assertions above, we have $\sigma_1^2(X)+\sigma_2^2(X)=\frac{1}{2}$ only if $A$ is normal and $B$ is $2 \times 2$ block-diagonal (or vice versa). Thus, the case where $A$ or $B$ is normal is reduced to Lemma \ref{lemma:2 by 2 block-diagonal}.
\end{proof}

\begin{lemma}\label{lemma:B and transpose of -A equivalent}
    Let $-A^\top$ and $B$ be unitarily similar matrices in Conjecture \ref{cj:distillability conjecture}, then $\sigma_1^2(X) + \sigma_2^2(X) = \frac{1}{2}$ implies that they satisfy the conditions in Lemma \ref{lemma:2 by 2 block-diagonal}.
\end{lemma}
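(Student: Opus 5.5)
The plan is to reduce the statement to the equality case of Lemma~\ref{lemma: saturation of BW inequality with Y transpose}.

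\textbf{Reduction.} Write $B=-V A^\top V^*$ with $V$ unitary. Since $X=A\otimes I+I\otimes B$ and $(I\otimes V^*)X(I\otimes V)=A\otimes I-I\otimes A^\top$, we may assume $B=-A^\top$. The unitary $V$ is restored at the end; it does not affect block-diagonality up to a local unitary. The traceless and Frobenius conditions then become $\operatorname{Tr}A=0$ and $\|A\|_F^2=1/8$.

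\textbf{Vectorization.} Identify $\mathbb C^4\otimes\mathbb C^4$ with $M_4$ via $x\otimes y\mapsto xy^\top$. Under this identification
\[
X\,\mathrm{vec}(Y)=\mathrm{vec}\bigl(AY-YA^\top\bigr).
\]
By Ky Fan's maximum principle, $\sigma_1^2(X)+\sigma_2^2(X)$ is the maximum of $\|XY_1\|_F^2+\|XY_2\|_F^2$ over orthonormal pairs $Y_1,Y_2\in M_4$. The equality case should therefore be visible through Lemma~\ref{lemma: saturation of BW inequality with Y transpose} applied to $X\mapsto A$ and $Y\mapsto Y_i$.

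The naive estimate $\|AY_i-Y_iA^\top\|_F^2\le 2\|A\|_{(2),2}$ is too weak, since summing over $i$ gives $4\|A\|_{(2),2}$, which is not $\le 1/2$ in general. I would instead rely on the estimate already used in \cite{LIU202507} to prove the conjecture in this case. That argument (presumably) splits the contribution of the top two right singular vectors of $X$ and bounds each piece using Lemma~\ref{lemma: saturation of BW inequality with Y transpose} together with $\operatorname{Tr}A=0$ and $\|A\|_F^2=1/8$. Saturation of the final bound $1/2$ forces equality at every step.

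\textbf{Equality analysis.} Equality at every step means that for the top singular matrices $Y_i$, one of the two equality conditions of Lemma~\ref{lemma: saturation of BW inequality with Y transpose} holds, with the further requirement that $\|A\|_{(2),2}$ equals $\|A\|_F^2$ (the remaining singular values of $A$ vanish) or the analogous relation. Two cases follow.
\begin{itemize}
\item[(a)] In Case~1 we get $U^*AU=X_{11}\oplus X_{22}$ with $X_{11}$ of the form in Lemma~\ref{lemma: saturation of BW inequality with M_2 Y transpose}, which is exactly $2\times2$ block-diagonal. Then $B=-A^\top$ is block-diagonal in the basis $\bar U$. To get a single common basis, I would check that saturation forces $X_{11}$ to be normal enough that $\bar U$ and $U$ can be aligned, or else reparametrize via $V$.
\item[(b)] In Case~2 we get $r\ge 3$ and $X_{11}=\sigma_1(A)W$ with $W$ unitary. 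Combining this with $\operatorname{Tr}A=0$ and $\|A\|_F^2=1/8$ pins down $A$ up to unitary similarity. I would show that this is compatible with saturation only if $A$ has a block-diagonal form, for instance of type (ii) in Lemma~\ref{lemma:2 by 2 block-diagonal}.
\end{itemize}
Once $A$ and $B$ are simultaneously $2\times2$ block-diagonal, Lemma~\ref{lemma:2 by 2 block-diagonal} applies directly.

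\textbf{Main obstacle.} The hardest step is the first one: identifying the precise chain of inequalities from the proof in \cite{LIU202507} and verifying that equality in the final bound really forces equality in the Lemma~\ref{lemma: saturation of BW inequality with Y transpose} step. A second difficulty is reconciling the two different unitaries, $U$ on $A$ and $\bar U$ on $Y$, into a common block basis for $A$ and $-A^\top$. To handle this, I would first try to show that saturation forces $A$ to be unitarily similar to a matrix $M$ with $M^\top$ block-diagonal in the same basis, for example by choosing $U$ real after a phase adjustment.
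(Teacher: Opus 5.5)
\textbf{There is a genuine gap, and it starts with an error in your vectorization.} Under $x\otimes y\mapsto xy^\top$, the factor $I\otimes B$ acts as $Y\mapsto YB^\top$. So for $B=-A^\top$ you get $X\,\mathrm{vec}(Y)=\mathrm{vec}(AY-YA)$, an ordinary commutator. The twisted map $Y\mapsto AY-YA^\top$ is the one that arises for $B=-A$, i.e.\ in Lemma~\ref{lemma:B and -A equivalent}. The right tool here is therefore the plain B\"ottcher--Wenzel inequality $\|AY-YA\|_F\le\sqrt2\|A\|_F\|Y\|_F$ with its equality characterization from \cite{CHENG2010292}, as the paper uses, not Lemma~\ref{lemma: saturation of BW inequality with Y transpose}.

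Your claim that the naive estimate is too weak is also false, even in your twisted setting. Since $\sigma_1^2(A)+\sigma_2^2(A)\le\|A\|_F^2=1/8$, the two bounds sum to at most $1/2$. Having discarded the estimate that works, you hand the decisive chain of inequalities to an unidentified ``presumably'' argument from \cite{LIU202507}. Your equality analysis in (a) and (b) then consists only of steps you ``would check'' or ``would show''. The key step is therefore not proved.

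\textbf{The repair is short and follows the paper's route.} Let $Y_1,Y_2$ be the orthonormal matrices of the top two right singular vectors of $X$. Then $\|AY_i-Y_iA\|_F^2\le 2\|A\|_F^2\|Y_i\|_F^2=1/4$ for $i=1,2$, so the sum is at most $1/2$. Saturation forces equality in B\"ottcher--Wenzel for each $i$; note $A\neq0$ because $\|A\|_F^2=1/8$. By \cite{CHENG2010292}, $A$ is then unitarily similar to $M\oplus 0$ with $M\in M_2$ traceless. Hence $B=-VA^\top V^*$ is unitarily similar to $-M^\top\oplus 0$, and your case (b) with $r\ge 3$ never arises.

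Your second worry, finding a common basis for $A$ and $B$, is also unnecessary. Since $(P\otimes Q)X(P\otimes Q)^*=PAP^*\otimes I+I\otimes QBQ^*$, the singular values of $X$ are invariant under independent unitary similarities of $A$ and $B$. Block-diagonalizing each separately already puts you in the setting of Lemma~\ref{lemma:2 by 2 block-diagonal}.
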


\begin{proof}
    We begin by briefly reviewing the proof of the validity of Conjecture \ref{cj:distillability conjecture} when $B$ and $-A^\top$ are unitarily similar. The detailed proof can be found in \cite{LIU202507}. Thanks to the variational characterization of singular values and vectorization of matrices, we can transform Conjecture \ref{cj:distillability conjecture} to the following form
    \begin{align}
        \sigma_1(X)^2 + \sigma_2(X)^2 = \max_{\substack{V_1, V_2 \in M_4(\mathbb{C}), \operatorname{Tr} V_1^* V_2 = 0 \\ \|V_1\|_F = \|V_2\|_F = 1}} \|B V_1 + V_1 A^\top\|_F^2 + \|B V_2 + V_2 A^\top\|_F^2.
    \end{align}
    Considering that $B$ and $-A^\top$ are unitarily similar, we replace $B$ with $-UA^\top U^*$
    \begin{align}
        \sigma_1(X)^2 + \sigma_2(X)^2 =& \max_{\substack{V_1, V_2 \in M_4(\mathbb{C}), \operatorname{Tr} V_1^* V_2 = 0 \\ \|V_1\|_F = \|V_2\|_F = 1}} \|-UA^\top U^* V_1 + V_1 A^\top\|_F^2 + \|-UA^\top U^* V_2 + V_2 A^\top\|_F^2 \\
        =& \max_{\substack{V_1, V_2 \in M_4(\mathbb{C}), \operatorname{Tr} V_1^* V_2 = 0 \\ \|V_1\|_F = \|V_2\|_F = 1}} \|-A^\top U^* V_1 + U^* V_1 A^\top\|_F^2 + \|-A^\top U^* V_2 + U^* V_2 A^\top\|_F^2. 
    \end{align}
    According to \cite{CHENG2010292}, we have $\| XY-YX \|_F \leq \sqrt{2} \|X\|_F \|Y\|_F$. Equality holds if and only if 
    \begin{enumerate}
        \item $X$ and $Y$ are simultaneously unitarily similar to matrices in $M_2 \oplus 0$.
        \item $\operatorname{Tr}X = \operatorname{Tr}Y = 0$.
        \item $\operatorname{Tr}(X^* Y) = 0$.
    \end{enumerate}
    From these conditions the conclusion follows.
\end{proof}

\begin{lemma}\label{lemma:B and -A equivalent}
    Let $-A$ and $B$ be unitarily similar matrices in Conjecture \ref{cj:distillability conjecture}. Then $\sigma_1^2(X) + \sigma_2^2(X) = \frac{1}{2}$ implies that they satisfy the conditions in Lemma \ref{lemma:2 by 2 block-diagonal}.
\end{lemma}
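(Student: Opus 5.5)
We will follow the template of the proof of Lemma \ref{lemma:B and transpose of -A equivalent}, now using the version of the Böttcher--Wenzel inequality with a transpose, Lemma \ref{lemma: saturation of BW inequality with Y transpose}. Write $B=-UAU^*$ with $U$ unitary. By the variational characterization of singular values and vectorization,
\begin{align}
\sigma_1^2(X)+\sigma_2^2(X)=\max_{\substack{V_1,V_2\in M_4(\mathbb C),\ \operatorname{Tr}V_1^*V_2=0\\ \|V_1\|_F=\|V_2\|_F=1}}\sum_{j=1}^2\|BV_j+V_jA^\top\|_F^2 .
\end{align}
Substituting $B=-UAU^*$ and setting $W_j=U^*V_j$ (which preserves norms and orthogonality) gives terms of the form $\|{-A W_j}+W_jA^\top\|_F^2=\|AW_j-W_jA^\top\|_F^2$. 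The $j$-th term is then bounded by Lemma \ref{lemma: saturation of BW inequality with Y transpose}, with $\tilde X=A$ and $Y=W_j$:
\begin{align}
\|AW_j-W_jA^\top\|_F^2\le 2\|A\|_{(2),2}\|W_j\|_F^2 = 2\bigl(\sigma_1^2(A)+\sigma_2^2(A)\bigr).
\end{align}
This needs a sharper combination than the naive sum. I would follow the argument of \cite{LIU202507}: since $\|A\|_F^2=\|B\|_F^2=1/8$, traceless $A$ forces $\sigma_1^2(A)+\sigma_2^2(A)\le \|A\|_F^2$. Equality then requires that both $W_1$ and $W_2$ attain equality in the transpose Böttcher--Wenzel bound, and that $\operatorname{rank}A\le 2$.

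The next step is to apply the equality characterization. In Lemma \ref{lemma: saturation of BW inequality with Y transpose}, case 2 requires $A$ restricted to an $r\ge 3$ block to be a multiple of a unitary. Combined with $\operatorname{rank}A\le2$ and $\sigma_1^2+\sigma_2^2=\|A\|_F^2$, this would force $r\le 2$, a contradiction. So only case 1 survives. There is then a unitary $P$ with $P^*AP=A_{11}\oplus A_{22}$, where $A_{22}=0$ by the rank and norm count, and $A_{11}$ is of the form $\begin{bmatrix}x_{11}&x_{12}\\0&-x_{11}\end{bmatrix}$. Hence $A$ is unitarily similar to a $2\times2$ block-diagonal matrix $A_{11}\oplus 0$, and so is $B=-UAU^*$, via the unitary $UP$.

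The remaining issue is that $A$ and $B$ must be brought into block-diagonal form simultaneously by the local unitaries allowed in the conjecture, i.e. $X\mapsto (P_1\otimes P_2)^*X(P_1\otimes P_2)$. Since $A$ and $B$ enter $X$ through independent tensor factors, we may conjugate $A$ by $P$ and $B$ by $UP$ separately without changing the singular values of $X$. This is the main obstacle only superficially. After this step, both matrices are of the form $M_2\oplus M_2$, and Lemma \ref{lemma:2 by 2 block-diagonal} applies directly to pin down the saturating pairs (i) or (ii).

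The genuinely delicate step is the one before: verifying that equality in the sum over $j=1,2$ really forces equality in each individual Böttcher--Wenzel bound with the same $A$, and ruling out case 2 of Lemma \ref{lemma: saturation of BW inequality with Y transpose}. I would handle both by tracking the chain of inequalities in \cite{LIU202507}, where each step is an inequality whose slack must vanish.
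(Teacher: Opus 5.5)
Your proposal is correct and takes essentially the same route as the paper. You substitute $B=-UAU^*$ into the vectorized variational formula and apply the equality characterization of Lemma~\ref{lemma: saturation of BW inequality with Y transpose} with $A$ in the role of $X$. You then discard its second case because $\operatorname{rank}A\le 2$, and conclude that $A$, and hence $B$, is $2\times 2$ block-diagonal. Conjugating $A$ and $B$ separately is legitimate because $(P_1\otimes P_2)^*X(P_1\otimes P_2)=P_1^*AP_1\otimes I+I\otimes P_2^*BP_2$.

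Your worry about needing a ``sharper combination'' is unfounded. The termwise bound summed over $j=1,2$ already gives $4(\sigma_1^2(A)+\sigma_2^2(A))\le 4\|A\|_F^2=\frac12$. So saturation forces equality in each term and $\sigma_3(A)=\sigma_4(A)=0$. This is exactly the $\operatorname{rank}A\le 2$ condition the paper uses without deriving it, and tracelessness plays no role in that step.
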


\begin{proof}
    Similar to the proof of previous lemma, we have
    \begin{align}
        \sigma_1(X)^2 + \sigma_2(X)^2 =& \max_{\substack{V_1, V_2 \in M_4(\mathbb{C}), \operatorname{Tr} V_1^* V_2 = 0 \\ \|V_1\|_F = \|V_2\|_F = 1}} \|-A U^* V_1 + U^* V_1 A^\top\|_F^2 + \|-A U^* V_2 + U^* V_2 A^\top\|_F^2. 
    \end{align}
    According to Lemma \ref{lemma: saturation of BW inequality with Y transpose}, we have $\sigma_1^2(X) + \sigma_2^2(X)=\frac{1}{2}$ if and only if one of the following conditions holds:
    \begin{enumerate}
        \item There exists a unitary $W \in M_4$ such that
        \begin{align}
            W^* A W=E_{11} \oplus E_{22} \quad \text { and } \quad W^* U^* V_i \bar{W}=F_{11} \oplus 0 \quad i=1,2,
        \end{align}
        where $E_{11}, F_{11} \in M_2$ satisfy the equality condition in Lemma \ref{lemma: saturation of BW inequality with M_2 Y transpose}.
        \item There exists a unitary $W \in M_4$ such that
        \begin{align}
            W^* A W=E_{11} \oplus E_{22} \quad \text { and } \quad W^* U^* V_i\bar{W}=F_{11} \oplus 0 \quad i=1,2,
        \end{align}
        where $E_{11}, F_{11} \in M_r(r \geq 3), E_{11}=\sigma_1(A) Z$ where $Z$ is unitary, 
        \begin{align}
            E_{11} F_{11}+F_{11} E_{11}^\top=0,
        \end{align}
        and $\operatorname{rank} A \leq 2$.
    \end{enumerate}
    We can see that in the first case, $A$ is necessarily $2 \times 2$ block-diagonal, and so is $B$. In the second case, we assume that $\sigma_1(A)>0$ (the case where $A=0$ is excluded), $\operatorname{rank} E_{11} = \operatorname{rank} \sigma_1(A)Z \geq 3$, which contradicts the fact that $\operatorname{rank} A \leq 2$.
\end{proof}

\begin{lemma}\label{lemma:2 2 anti}
    Let $A$ and $B$ be the matrices in Conjecture \ref{cj:distillability conjecture} with the following form:
    \begin{align}
        A = \begin{bmatrix}
            0 & A_1 \\
            A_2 & 0
        \end{bmatrix}, \quad
        B = \begin{bmatrix}
            0 & B_1 \\
            B_2 & 0
        \end{bmatrix},
    \end{align}
    where $A_1, A_2, B_1, B_2 \in M_2(\mathbb{C})$. Then $\sigma_1^2(X) + \sigma_2^2(X) = \frac{1}{2}$ implies that $A$ and $B$ satisfy the condition in Lemma \ref{lemma:2 by 2 block-diagonal}.
\end{lemma}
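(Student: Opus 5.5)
Our plan is to exploit the $\mathbb{Z}_2$-grading carried by the off-diagonal block structure. Write $\mathbb{C}^4=E_1\oplus E_2$ with $E_1=\operatorname{span}\{e_1,e_2\}$ and $E_2=\operatorname{span}\{e_3,e_4\}$. Then $A$ and $B$ both map $E_1\to E_2$ and $E_2\to E_1$. It follows that $X=A\otimes I+I\otimes B$ maps $E_i\otimes E_j$ into $E_{i+1}\otimes E_j\oplus E_i\otimes E_{j+1}$, with indices taken mod $2$.

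We group the four $4$-dimensional sectors by parity. The even part is $\mathcal{H}_0=E_1\otimes E_1\oplus E_2\otimes E_2$ and the odd part is $\mathcal{H}_1=E_1\otimes E_2\oplus E_2\otimes E_1$. Then $X$ interchanges $\mathcal{H}_0$ and $\mathcal{H}_1$. After a permutation, $X$ is unitarily equivalent to $\begin{bmatrix}0&X_{01}\\X_{10}&0\end{bmatrix}$, so its singular values are the union of those of $X_{01}$ and $X_{10}$. Each of these is an $8\times 8$ matrix whose blocks are $A_i\otimes I_2$ and $I_2\otimes B_j$. We would then apply a second conjugation, of the type in Lemma \ref{lemma: saturation of BW inequality with Y transpose}, which is a partial transpose on one tensor factor. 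After it, $X_{01}$ and $X_{10}$ should look like Kronecker sums of the $2\times 2$ block-diagonal matrices $\tilde A=A_1\oplus A_2$ and $\tilde B=B_1\oplus B_2$, or of their transposes. The point of this step is that $\sigma_1^2(X)+\sigma_2^2(X)$ coincides with the same quantity for a block-diagonal pair $(\tilde A,\tilde B)$ with the same Frobenius norm. This is essentially how \cite{SIO2025152} proves this case.

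One obstacle is that $\tilde A$ need not be traceless. We therefore need the bound for block-diagonal pairs without the trace condition, or else we must subtract the trace. Removing the trace only lowers the Frobenius norm, so rescaling back to norm $1/4$ strictly increases the Ky Fan quantity unless the trace was already zero. Hence saturation forces $\operatorname{Tr}\tilde A=\operatorname{Tr}\tilde B=0$, which means $\operatorname{Tr}A_1=-\operatorname{Tr}A_2$ and similarly for the $B_j$. At that point Lemma \ref{lemma:2 by 2 block-diagonal} applies to $(\tilde A,\tilde B)$. Saturation then forces $(\tilde A,\tilde B)$ to be of form (i) or form (ii), up to the unitary reduction.

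The final step translates these two forms back to $A$ and $B$. In form (i), one block of each of $\tilde A,\tilde B$ vanishes, say $A_2=B_2=0$. The surviving $A_1$ and $B_1$ are off-diagonal $2\times2$ matrices. Then $A$ is supported on the entries $(1,4)$ and $(2,3)$. A permutation of basis vectors, applied simultaneously to $A$ and $B$ (for instance swapping $e_2$ and $e_4$), makes $A$ and $B$ $2\times2$ block-diagonal of the form in Lemma \ref{lemma:2 by 2 block-diagonal}(i). This uses that $X$ transforms covariantly under $P\otimes P$. Form (ii) is handled the same way: $A_1$ and $A_2$ are scalar, so $A$ is a multiple of $\begin{bmatrix}0&I\\-I&0\end{bmatrix}$. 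This matrix is normal, so we can unitarily diagonalize it to $\operatorname{diag}(ia,ia,-ia,-ia)$. We then check that the same unitary brings $B$ into the corresponding form, or we simply invoke Lemma \ref{lemma:one normal} at this point.

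We expect the main difficulty to be the precise unitary/transpose bookkeeping in the second step. In particular, we must verify that both $X_{01}$ and $X_{10}$ reduce to the same pair $(\tilde A,\tilde B)$. If they instead give $(\tilde A,\tilde B)$ and $(\tilde A^\top,\tilde B)$, we must check that the top two singular values can still be controlled. The first attempt will be to compute $X_{01}$ explicitly in the basis $e_1e_3,e_1e_4,e_2e_3,e_2e_4$ and so on, and to match it against Kronecker sums.
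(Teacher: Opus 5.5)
Your first step is correct and matches the paper's starting point. With your grading,
\[
X_{10}=\begin{bmatrix} I_2\otimes B_2 & A_1\otimes I_2\\ A_2\otimes I_2 & I_2\otimes B_1\end{bmatrix},\qquad X_{01}=\begin{bmatrix} I_2\otimes B_1 & A_1\otimes I_2\\ A_2\otimes I_2 & I_2\otimes B_2\end{bmatrix},
\]
and $X_{10}^*X_{10}$, $X_{01}^*X_{01}$ are exactly the matrices $H$, $K$ used in the paper.

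\textbf{The gap is your second step.} No unitary equivalence turns $X_{10}$, $X_{01}$ into Kronecker sums of $\tilde A=A_1\oplus A_2$ and $\tilde B=B_1\oplus B_2$, and a partial transpose is not a unitary equivalence at all. Moreover, $\sigma_1^2+\sigma_2^2$ is not preserved. The reason is structural: in $X_{10}$ the $A$-terms and $B$-terms occupy different block positions, so $X_{10}^*X_{10}$ produces sums of squares such as $A_2^*A_2\otimes I+I\otimes B_2^*B_2$. A block-diagonal Kronecker sum $A_i\otimes I+I\otimes B_j$ instead adds them inside one block.

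Here is a concrete counterexample. Take $A_1=aI_2$, $B_1=\pm aI_2$, $A_2=B_2=0$ with $|a|^2=1/16$, which is admissible. Then all nonzero eigenvalues of $X_{10}^*X_{10}$ and $X_{01}^*X_{01}$ equal $2|a|^2$, so $\sigma_1^2(X)+\sigma_2^2(X)=\tfrac14$ for both signs. But any Kronecker sum built from $\pm A_1\oplus\pm A_2$ and $\pm B_1\oplus\pm B_2$ (in either order, transposed or not) is diagonal with top singular value $2|a|$ or $|a|$. That gives $\tfrac12$ or $\tfrac18$, never $\tfrac14$. For a fixed identification the two sign choices land on opposite sides of $\tfrac14$. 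So there is not even a one-directional inequality that would transfer saturation to $(\tilde A,\tilde B)$, and Lemma~\ref{lemma:2 by 2 block-diagonal} cannot be invoked this way. (As the paper's appendix summarizes it, \cite{SIO2025152} does not reduce to block-diagonal pairs either.)

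\textbf{The auxiliary steps also fail.}
\begin{itemize}
\item Removing the traces shifts $\tilde X$ by a scalar multiple of $I_{16}$, which does not change the Ky Fan quantity monotonically. For $\tilde A=\tilde B=aI_2\oplus 0$ the value drops from $\tfrac12$ to $\tfrac14$ after renormalising, contrary to your claim that it strictly increases.
\item Without the trace condition the block-diagonal bound is false. For example, $A=\operatorname{diag}(1,0,0,0)$, $B=\epsilon\operatorname{diag}(1,1,0,0)$ give $2(1+\epsilon)^2>2(\|A\|_F^2+\|B\|_F^2)$ for $0<\epsilon<2$.
\item Your back-translation of form (i) is wrong. $A_2=0$ with $A_1$ off-diagonal puts $A$ on the entries $(1,4),(2,3)$, and no permutation places these in a single $2\times 2$ diagonal block.
\end{itemize}

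\textbf{What is needed instead} is a direct equality analysis of $H$ and $K$, as the paper does.
\begin{itemize}
\item Write $H=V_1^*V_1+V_2^*V_2$ with $V_1=[\,I_2\otimes B_2\ \ A_1\otimes I_2\,]$ and $V_2=[\,A_2\otimes I_2\ \ I_2\otimes B_1\,]$ (the two block rows of $X_{10}$). Since $V_1V_1^*=A_1A_1^*\otimes I+I\otimes B_2B_2^*$, the summands have nonzero eigenvalues $\sigma_i(A_1)^2+\sigma_j(B_2)^2$ and $\sigma_i(A_2)^2+\sigma_j(B_1)^2$. The same holds for $K$.
\item Bound $\sigma_1^2+\sigma_2^2$ via Ky Fan/Weyl in three cases: both top eigenvalues from $H$, both from $K$, or one from each.
\item Analyse the equality cases. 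Equality forces every $A_i,B_i$ to have rank at most one. Then either $A=0$ or $B=0$, or the two summands must share a top eigenvector. The latter forces $A$ and $B$ to be supported on the entries $(1,3)$ and $(3,1)$, and the swap $e_2\leftrightarrow e_3$ turns this into form (i).
\end{itemize}
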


\begin{proof}
    We show the proof in Appendix \ref{appendix:proof 2 2 anti}.
\end{proof}

We also discover that the saturation conditions of the distillability conjecture are deeply related to orbit type stratification. Here, we introduce some definitions and lemmas from \cite{MICHEL200111}, and validate our findings in Lemma \ref{lemma:diag}.

\begin{definition}[Stabilizer]
    Let a group $\mathcal G$ act on a manifold $\mathcal M$. For a given point $m \in \mathcal M$, the stabilizer $\mathcal{G}_m$ of $m$ is defined as the subgroup consisting of all elements in $\mathcal G$ that leave $m$ fixed, that is:
    \begin{align}
        \mathcal{G}_{m} = \{g \in \mathcal G \mid g \cdot m = m\}.
    \end{align}
    If two points on an orbit satisfy the relation $m' = g \cdot m$, then their stabilizers satisfy the conjugation relation $\mathcal{G}_{m'} = g\mathcal{G}_{m}g^{-1}$.
\end{definition}

\begin{definition}[Conjugate Subgroups]
    Let $H$ and $H'$ be two subgroups of a group $\mathcal{G}$. If there exists an element $g \in \mathcal{G}$ such that $H' = gHg^{-1}$, then the subgroups $H$ and $H'$ are said to be conjugate.
\end{definition}

\begin{definition}[Stratum]
    In the action of a group $\mathcal{G}$ on a space $\mathcal M$, if two orbits have the same conjugacy class of stabilizers, they are said to belong to the same orbit type. A stratum is defined as the union of all orbits with the same orbit type. Equivalently, two points in the space belong to the same stratum if and only if their stabilizers are conjugate.
\end{definition}

\begin{definition}[Maximal Symmetry]
    The set of all strata can be endowed with a natural partial order by the conjugacy classes of their stabilizers. For two strata $S_1$ and $S_2$, if the stabilizer of $S_1$ (up to conjugation) is a proper subgroup of the stabilizer of $S_2$, then the local symmetry of $S_1$ is said to be strictly less than that of $S_2$ (denoted as $S_1 < S_2$).
    
    If the symmetry of a stratum $S_{\max}$ is not strictly less than any other stratum under the above partial order, then this stratum is said to have maximal symmetry.
\end{definition}

\begin{lemma}\label{lemma: maximal symmetry stratum}
    In the smooth action of a compact group $\mathcal G$ on a finite-dimensional compact manifold $\mathcal M$, every $\mathcal G$-invariant smooth function has orbits of extrema on every stratum with maximal symmetry.
\end{lemma}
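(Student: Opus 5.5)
The plan is to reduce the statement to the existence of extrema of a continuous function on a compact orbit space, and then to show that an orbit of maximal symmetry is necessarily a critical orbit.

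\emph{Step 1: the quotient is compact.} Let $f:\mathcal M\to\mathbb R$ be $\mathcal G$-invariant and smooth. Since $\mathcal M$ is compact and $\mathcal G$ is compact, the orbit space $\mathcal M/\mathcal G$ is compact and Hausdorff. Because $f$ is invariant, it descends to a continuous function on $\mathcal M/\mathcal G$.

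\emph{Step 2: restriction to a stratum.} Fix a stratum $S$ of maximal symmetry, with stabilizer conjugacy class $(H)$. I would use the standard slice theorem (Palais), which is available because $\mathcal G$ is compact and acts smoothly. It gives the following facts.
\begin{itemize}
\item[(a)] $S$ is a smooth $\mathcal G$-invariant submanifold.
\item[(b)] Maximal symmetry means that no point near $S$ has a stabilizer strictly larger than a conjugate of $H$. Hence $S$ is closed, and therefore compact.
\item[(c)] Set $S^H=\{m\in S:\mathcal G_m=H\}$. Then $S=\mathcal G\cdot S^H$, and the normalizer $N(H)$ acts on $S^H$ with quotient $S/\mathcal G$.
\end{itemize}
Restricting $f$ to the compact set $S$, it attains its maximum and minimum, and these are attained on whole orbits. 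This gives orbits of extrema of $f|_S$.

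\emph{Step 3: extrema on $S$ are critical for $f$ on $\mathcal M$.} This is Michel's key point. At $m\in S^H$ the differential $df_m$ is $H$-invariant, since $f$ is invariant and $H$ fixes $m$. So its gradient, taken with respect to a $\mathcal G$-invariant Riemannian metric obtained by averaging with Haar measure, lies in the fixed subspace $(T_m\mathcal M)^H$.

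By the slice theorem, $(T_m\mathcal M)^H$ is exactly the tangent space to $\mathcal M^H$ at $m$. Near $m$, this coincides with the tangent space to $S$, because maximal symmetry excludes nearby points with larger stabilizer. Hence the gradient of $f$ at $m$ is tangent to $S$. Consequently, any critical point of $f|_S$ is a critical point of $f$ on $\mathcal M$. Combining this with Step 2 gives the claimed orbits of extrema, meaning extrema of $f$ on the stratum that are critical orbits of $f$.

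\emph{Main obstacle.} I expect the hardest part to be the identification $(T_m\mathcal M)^H=T_m(\mathcal M^H)=T_mS$ near $m$. This needs the linearized slice representation, and it needs maximal symmetry to rule out nearby orbit types with strictly larger stabilizers, which is also what guarantees that $S$ is closed. I would cite \cite{MICHEL200111} for this step rather than reprove the slice theorem.
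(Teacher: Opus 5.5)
Your outline is sound. The paper does not prove this lemma; it simply quotes it from Michel. What you wrote is essentially Michel's own argument:
\begin{itemize}
\item a stratum of maximal symmetry is closed, hence compact, so $f|_S$ attains its extrema on whole orbits;
\item at a point with stabilizer $H$, the gradient of an invariant function (for an invariant metric) lies in $(T_m\mathcal M)^H$, which is tangent to the stratum.
\end{itemize}
In Step 2(b), state the fact that makes closedness work: for $m'$ near $m$, $\mathcal G_{m'}$ is conjugate to a subgroup of $\mathcal G_m$. Maximality then forces a limit point of $S$ to have stabilizer conjugate to $H$.

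One claim in Step 3, repeated in your ``main obstacle'' paragraph, is false as stated. The space $(T_m\mathcal M)^H=T_m(\mathcal M^H)$ does not coincide with $T_mS$. The reason is that $T_mS$ also contains the orbit directions $T_m(\mathcal G\cdot m)$, which are generally not $H$-fixed. For example, for $SO(3)$ acting on $S^2$ there is a single stratum $S=S^2$, yet $(T_mS^2)^{SO(2)}=0$ at a pole.

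What is true, and all your argument uses, is the inclusion $(T_m\mathcal M)^H=T_m(\mathcal M^H)\subseteq T_mS$. It holds because near $m$ one has $\mathcal M^H=\mathcal M_H:=\{m':\mathcal G_{m'}=H\}\subseteq S$. With this inclusion, a critical point $m$ of $f|_S$ satisfies $\|\nabla f(m)\|^2=df_m(\nabla f(m))=0$, which is what you need.

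This step does not actually use maximality. $\mathcal M_H$ is open in $\mathcal M^H$ for every $H$: nearby stabilizers contain $H$ and are also conjugate into $H$, hence equal $H$. This is Michel's theorem for arbitrary strata, and maximality enters only through the closedness of $S$.

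For maximal $H$ you can also skip the submanifold structure of $S$ entirely. In that case $\mathcal M^H=\mathcal M_H$ globally, and it is a nonempty closed, hence compact, submanifold. A maximizer or minimizer $m$ of $f|_{\mathcal M^H}$ satisfies $\nabla f(m)\in T_m\mathcal M^H$ and $df_m|_{T_m\mathcal M^H}=0$. Hence $\nabla f(m)=0$, and $\mathcal G\cdot m\subseteq S$ is a critical orbit of extrema.
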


According to Lemma \ref{lemma: maximal symmetry stratum}, it can be seen that within the manifold $\{(A, B) \mid \norm{A}_F^2+\norm{B}_F^2 = \frac{1}{4}, \operatorname{Tr} A = \operatorname{Tr} B = 0, A,B \in M_4(\mathbb C)\}$, there exist four strata with maximal symmetry:
\begin{align}
    A = \operatorname{diag}(\lambda_1, \lambda_1, -\lambda_1, -\lambda_1)&, B = \operatorname{diag}(\lambda_1, \lambda_1, -\lambda_1, -\lambda_1) \\
    A = \operatorname{diag}(\lambda_1, \lambda_1, -\lambda_1, -\lambda_1)&, B = \operatorname{diag}(\lambda_2, -\frac{\lambda_2}{3}, -\frac{\lambda_2}{3}, -\frac{\lambda_2}{3}) \\
    A = \operatorname{diag}(\lambda_2, -\frac{\lambda_2}{3}, -\frac{\lambda_2}{3}, -\frac{\lambda_2}{3})&, B = \operatorname{diag}(\lambda_1, \lambda_1, -\lambda_1, -\lambda_1) \\
    A = \operatorname{diag}(\lambda_2, -\frac{\lambda_2}{3}, -\frac{\lambda_2}{3}, -\frac{\lambda_2}{3})&, B = \operatorname{diag}(\lambda_2, -\frac{\lambda_2}{3}, -\frac{\lambda_2}{3}, -\frac{\lambda_2}{3})
\end{align}
These exactly correspond to the equality conditions obtained when treating both $A$ and $B$ as normal matrices:
\begin{enumerate}
    \item $A=B=e^{i \phi} \operatorname{diag}\left(\frac{1}{4},-\frac{1}{4}, 0,0\right), \phi \in \mathbb{R}$.
    \item $A=e^{i \phi} \operatorname{diag}\left(-\frac{3}{8}, \frac{1}{8}, \frac{1}{8}, \frac{1}{8}\right), B=e^{i \phi} \operatorname{diag}\left(-\frac{1}{8},-\frac{1}{8}, \frac{1}{8}, \frac{1}{8}\right), \phi \in \mathbb{R}$.
\end{enumerate}

\begin{lemma}\label{lemma:critical point}
    The following two cases in Lemma \ref{lemma:2 by 2 block-diagonal} are critical points on $\mathcal{X} = \{(A, B) \in M_4(\mathbb{C}) \times M_4(\mathbb{C}) | \operatorname{Tr} A = \operatorname{Tr} B = 0, \norm{A}_F^2 + \norm{B}_F^2 = \frac{1}{4}\}$, 
\begin{enumerate}
        \item $A=\begin{bmatrix}0 & a_{12} & 0 & 0 \\ a_{21} & 0 & 0 & 0 \\ 0 & 0 & 0 & 0 \\ 0 & 0 & 0 & 0\end{bmatrix}, B=\begin{bmatrix}0 & b_{12} & 0 & 0 \\ b_{21} & 0 & 0 & 0 \\ 0 & 0 & 0 & 0 \\ 0 & 0 & 0 & 0\end{bmatrix}$, and $a_{12} a_{21}=b_{12} b_{21}$.
        \item $A=\begin{bmatrix}a & 0 & 0 & 0 \\ 0 & a & 0 & 0 \\ 0 & 0 & -a & 0 \\ 0 & 0 & 0 & -a\end{bmatrix}, B=\begin{bmatrix}a & b_{12} & 0 & 0 \\ b_{21} & a & 0 & 0 \\ 0 & 0 & -a & 0 \\ 0 & 0 & 0 & -a\end{bmatrix}$, and $b_{12} b_{21}=4 a^2$.
	\end{enumerate}
\end{lemma}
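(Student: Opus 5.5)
The plan is to verify criticality directly by a first-order (Lagrange multiplier) computation. Write $f(A,B)=\sigma_1^2(X)+\sigma_2^2(X)=\lambda_1(X^*X)+\lambda_2(X^*X)$ with $X=A\otimes I+I\otimes B$. The constraint set $\mathcal X$ is cut out by the real equations $\operatorname{Re}\operatorname{Tr}A=\operatorname{Im}\operatorname{Tr}A=\operatorname{Re}\operatorname{Tr}B=\operatorname{Im}\operatorname{Tr}B=0$ and $\|A\|_F^2+\|B\|_F^2=\frac14$. With respect to the real inner product $\langle (A,B),(A',B')\rangle=\operatorname{Re}\operatorname{Tr}(A^*A'+B^*B')$, the normal space at $(A,B)$ is therefore spanned by $(A,B)$, $(I,0)$, $(iI,0)$, $(0,I)$ and $(0,iI)$. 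Consequently $(A,B)$ is critical exactly when the Euclidean gradient $(G_A,G_B)$ of $f$ satisfies
\begin{align}
G_A=\mu A+\alpha I,\qquad G_B=\mu B+\beta I,\qquad \mu\in\mathbb R,\ \alpha,\beta\in\mathbb C.
\end{align}

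To compute the gradient, first check that there is a spectral gap $\lambda_2(X^*X)>\lambda_3(X^*X)$ at the two families. Then the Ky Fan sum is smooth near the point, and $df=\operatorname{Tr}\big(P\,d(X^*X)\big)=2\operatorname{Re}\operatorname{Tr}(P X^* dX)$, where $P$ is the orthogonal projector onto the top two eigenvectors of $X^*X$. Substituting $dX=dA\otimes I+I\otimes dB$ and taking partial traces gives
\begin{align}
G_A=2\operatorname{Tr}_2\big(XP\big),\qquad G_B=2\operatorname{Tr}_1\big(XP\big).
\end{align}

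The concrete work is to diagonalise $X^*X$ in each family, and the block structure makes this finite-dimensional.
\begin{itemize}
\item In case 1, $X$ is supported on $\operatorname{span}\{e_1,e_2\}^{\otimes 2}$. Its restriction is $A_2\otimes I_2+I_2\otimes B_2$, where $A_2,B_2$ are off-diagonal $2\times2$ blocks. In the basis $e_1e_1,e_2e_2,e_1e_2,e_2e_1$ this splits into two $2\times2$ blocks. The condition $a_{12}a_{21}=b_{12}b_{21}$ should make the two top singular vectors explicit, so that $XP$ is explicit. One then checks that its partial traces are off-diagonal and proportional to $A$ and $B$ respectively, with a common real factor $\mu$ ($\mu=4f=2$ by Euler homogeneity, since $f$ is homogeneous of degree $2$ and $\langle G,(A,B)\rangle=2f$).
\item In case 2, $X$ decomposes along the blocks of $A$ and $B$ into pieces $(\pm a)I_2+B_{jj}$. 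The largest singular values come from the $2a$-shifted piece $2aI+\begin{bmatrix}0&b_{12}\\ b_{21}&0\end{bmatrix}$ and from its analogue in the other block. One verifies the same proportionality, with the scalar parts absorbed into $\alpha I$ and $\beta I$.
\end{itemize}
Exploiting the $U(2)$ phase and scaling freedoms (conjugating by $\operatorname{diag}(e^{i\theta},1)$ and the global phase) lets me normalise $a_{12}=a_{21}$ or $b_{12}=b_{21}$ before computing, which keeps the calculation short.

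The main obstacle I expect is degeneracy. At these saturation points $\sigma_2$ may coincide with $\sigma_3$; for instance, case 1 with $|a_{12}|=|a_{21}|$ has highly symmetric spectra. In that situation $f$ is not differentiable, and "critical" must be understood in the Clarke/directional sense. My fallback there is the variational formula $f=\max_{\dim V=2}\operatorname{Tr}(P_VX^*X)$, which gives the directional derivative
\begin{align}
f'(A,B;H)=\max_{P}\,2\operatorname{Re}\operatorname{Tr}\big(PX^*(H_A\otimes I+I\otimes H_B)\big),
\end{align}
where the maximum runs over rank-2 projectors inside the top eigenspace. I would then show that for each tangent direction $H$ this maximum is $\le 0$, i.e. the point is a local stationary point in the sense used for maxima. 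To do so, I would show that every admissible $P$ yields partial traces of the form $\mu A+\alpha I$, $\mu B+\beta I$, using the invariance of the top eigenspace under the symmetry group fixing $(A,B)$.
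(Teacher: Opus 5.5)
Your proposal is correct, but it takes a different route from the paper.

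\textbf{The paper's route.} The paper never computes a gradient. It takes $\mathcal{G}=I_2\oplus\mathcal{U}(2)$ acting by simultaneous unitary similarity. The fixed-point set $\Sigma_{\mathcal G}$ is the set of pairs in $\mathcal X$ with $A,B\in M_2(\mathbb C)\oplus\mathbb C I_2$. The two families maximize $f$ on $\Sigma_{\mathcal G}$, because the conjecture is already known for $2\times 2$ block-diagonal pairs. Palais' principle of symmetric criticality then applies: $\mathcal G$-invariance forces the gradient at a fixed point into $T\Sigma_{\mathcal G}$, where it vanishes.

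\textbf{What each approach buys.} The paper's argument is structural and computation-free. However, it depends on the block-diagonal case of the conjecture. It also depends on smoothness of $f$ at these points, which the paper asserts via the smooth-SVD and Boman lemmas without checking $\sigma_2>\sigma_3$ there. Your Lagrange-multiplier check is self-contained and does not need the points to be maxima of anything. It identifies the real reason for smoothness, a spectral gap, and produces explicit multipliers.

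\textbf{Details to correct.} None of these is fatal.
\begin{enumerate}
\item Euler's relation gives $\mu(\|A\|_F^2+\|B\|_F^2)=\mu/4=2f=1$. So $\mu=8f=4$, not $4f=2$.
\item The degeneracy you fear does not occur. In case 1, on $V\otimes V$ with $V=\mathrm{span}\{e_1,e_2\}$, both $2\times2$ blocks of $X$ have determinant $b_{12}b_{21}-a_{12}a_{21}=0$ and Frobenius norm $\frac12$. Hence the singular values there are $\frac12,\frac12,0,0$. The remaining singular values are $|a_{12}|,|a_{21}|,|b_{12}|,|b_{21}|$, each $<\frac12$ since $A,B\neq 0$.
\item In case 2, write $X|_{V\otimes V}=I_2\otimes C$ with $C=2aI_2+N$, where $N$ is the off-diagonal part of $B_{11}$. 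The top two singular values come from the two copies of $C$, not from a piece in the other block. The remaining singular values are $|b_{12}|,|b_{21}|,2|a|$, all $<\frac12$ because $|b_{12}|+|b_{21}|=\frac12$ and $a\neq0$. So $\sigma_2>\sigma_3$ in both cases. The equality $\sigma_1=\sigma_2$ is harmless for the Ky Fan sum, and your Clarke fallback is unnecessary.
\item Conjugating by $\operatorname{diag}(e^{i\theta},1)$ aligns only the phases of $a_{12},a_{21}$, not their moduli. So $a_{12}=a_{21}$ cannot be arranged in general. No normalization is needed, though. The relevant restrictions of $X$ have rank one, so $XP=X(\Pi_V\otimes\Pi_V)$, where $\Pi_V$ is the projector onto $V$. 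This gives $G_A=4A$, $G_B=4B$ in case 1. In case 2 it gives $G_A=8a\Pi_V=4A+4aI$ and $G_B=4(C\oplus 0)=4B+4aI$. Both are exactly the required form.
\end{enumerate}
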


\begin{proof}
    We denote the subgroup $I_2 \oplus \mathcal{U}(2)$ of unitary group $\mathcal{U}(4)$ by $\mathcal{G}$, and $M_2(\mathbb{C}) \oplus \mathbb{C}I_2 \cap \mathcal{X}$ by $\Sigma_\mathcal{G}$, every element in $\Sigma_\mathcal{G}$ is invariant under similarity transformations by elements in $\mathcal{G}$. According to Lemma \ref{lemma: smooth svd} and \ref{lemma:smooth}, the map $f: (A, B) \mapsto \sigma_1^2(X) + \sigma_2^2(X)$ is smooth at the two cases in Lemma \ref{lemma:2 by 2 block-diagonal}. Since $f(A, B) = f(UAU^*, VBV^*)$ for any $U,V \in \mathcal{U}(4)$, $f$ is $\mathcal{G}$ invariant. By definition, $\mathcal{X}$, when equipped with Euclidean inner product, is a Riemannian manifold, and $\mathcal{G}$ is a group of isometries on $\mathcal{X}$. According to the theorem in section 2 of \cite{Palais1979}, we obtain that the two cases in Lemma \ref{lemma:2 by 2 block-diagonal} are not only critical points on $\Sigma_\mathcal{G}$, but also critical points on $\mathcal{X}$.
\end{proof}

Lemma \ref{lemma:critical point} partially explains why $2 \times 2$ block-diagonal form appears constantly in the numerical experiment in Section \ref{sec:num evidence}. 

\section{Numerical Evidence}\label{sec:num evidence}

In this section, we aim to obtain additional examples of matrices $A$ and $B$ that achieve equality in Conjecture \ref{cj:distillability conjecture} by framing the conjecture as an optimization problem.
Specifically, we maximize the sum of the squares of the two largest singular values of the matrix $X = A \otimes I + I \otimes B$, where $A, B \in M_{4}(\mathbb{R})$ are matrices whose only off-diagonal entries are the only non-zero free variables (since any traceless matrix is unitarily similar to a matrix with all diagonal entries equal to zero), and all non-zero entries are subject to the equality constraint $\|A\|_F^2 + \|B\|_F^2 = \frac{1}{4}$.

We employ a manifold optimization approach. We flatten the 24 off-diagonal entries of $A$ and $B$ into a single vector $u \in \mathbb{R}^{24}$. Based on the constraint $\|u\|_F^2 = \frac{1}{4}$, we introduce the variable substitution $v = 2u$, such that $\|v\|_F^2 = 1$. Consequently, the original problem is equivalent to finding the optimal parameter vector $v$ on the 24-dimensional unit sphere manifold $\mathbb{S}^{23} = \{v \in \mathbb{R}^{24} \mid \|v\|_F = 1\}$. During each evaluation of the objective function, the point $v$ on the manifold is mapped back via the inverse transformation $u = \frac{v}{2}$ to populate the off-diagonal entries of matrices $A$ and $B$.

We define the objective function as $f(v) = \sigma_1^2(X(v)) + \sigma_2^2(X(v))$. Although singular values themselves are not smooth functions of the matrix entries, we have the following lemma:
\begin{lemma}\cite{BARBARINO2025465}\label{lemma: smooth svd}
    For an $M \times N$ matrix $X(t)$ that is smooth on a certain interval of $\mathbb{R}$, there exists the following decomposition:
    \begin{align}
        X(t)=U(t) \Sigma(t) V(t)^*
    \end{align}
    where $U(t)$ and $V(t)$ are smooth unitary matrices, and $\Sigma(t)$ is a smooth, diagonal, real-valued matrix.
\end{lemma}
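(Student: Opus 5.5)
The plan is to deduce the statement from a smooth spectral decomposition of an associated Hermitian family, the Jordan--Wielandt dilation
\begin{align}
H(t)=\begin{bmatrix} 0 & X(t)\\ X(t)^* & 0\end{bmatrix}\in M_{M+N}(\mathbb{C}),
\end{align}
which is smooth in $t$ whenever $X(t)$ is. Its eigenvalues are $\pm\sigma_j(t)$, together with $|M-N|$ zeros. An eigenvector $(u,v)$ with eigenvalue $s$ satisfies $Xv=su$ and $X^*u=sv$. We are allowed to take $\Sigma(t)$ real and diagonal with possibly negative or crossing entries. So it suffices to produce smooth orthonormal families $u_j(t),v_j(t)$ and smooth real functions $s_j(t)$ with $Xv_j=s_ju_j$ and $X^*u_j=s_jv_j$. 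We then complete $\{u_j\}$ and $\{v_j\}$ smoothly to unitary bases. This completion is possible because the orthogonal complement of a smooth orthonormal family on an interval is a smooth vector bundle over a contractible base, hence trivial.

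\textbf{Step 1 (local construction away from coalescence).} Fix $t_0$ and group the distinct values of $\sigma_j(t_0)$ into clusters. By continuity of the spectrum, the Riesz projector of $H(t)$ onto each cluster $\{\pm\sigma\}$ is smooth near $t_0$, given by a contour integral of the smooth resolvent. So are the projectors $P(t)$ onto the left singular subspace and $Q(t)$ onto the right singular subspace. Kato's transformation function, the solution of $W'=[P',P]W$, gives smooth unitaries that carry the subspace at $t_0$ to the subspace at $t$. This reduces $X$ near $t_0$ to a smooth block-diagonal form $X_1(t)\oplus\cdots\oplus X_k(t)$, in which each block has all its singular values equal at $t_0$. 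If a cluster is simple, that block is $1\times1$ and its phase can be absorbed smoothly into $U$, which gives the local SVD.

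\textbf{Step 2 (coalescing clusters).} For a block $X_i(t)$ with $X_i(t_0)=\sigma U_0V_0^*$, we first rotate so that $X_i(t_0)=\sigma I$. We then pass to the smooth Hermitian-like family $X_i(t)-\sigma I$ and recurse on its first non-vanishing Taylor coefficient. Dividing by $(t-t_0)^m$ leaves a smooth family whose value at $t_0$ has strictly fewer coincidences, so Step 1 applies again. The recursion terminates after finitely many steps at every point where the contact between singular values has finite order. Local pieces on overlapping intervals are then glued along the interval. On an overlap, two smooth SVDs differ by a smooth block-unitary gauge commuting with $\Sigma$, and any such gauge extends from the overlap, so the local SVDs patch by induction over a locally finite cover.

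\textbf{Main obstacle.} The step I expect to fail, or at least to need the precise hypotheses of \cite{BARBARINO2025465}, is Step 2 at points of infinite-order contact. If $X_i(t)-\sigma I$ is flat at $t_0$, the Taylor recursion never terminates. Rellich-type examples such as $e^{-1/t^2}$ multiplying a rotating reflection suggest that no continuous choice of $U,V$ may exist there. My first attempt would be to exploit the extra freedom specific to the SVD. Unlike the Hermitian case, $U$ and $V$ are independent, and $\Sigma$ may carry signs, so a discontinuity in the ideal eigenvectors could perhaps be split between $U$ and $V^*$. If that fails, I would check whether \cite{BARBARINO2025465} imposes a nondegeneracy assumption, for example no flat coalescence. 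That assumption does hold for the application in Lemma \ref{lemma:critical point} and in the manifold optimization, because there the curves $X(v(t))$ are polynomial in $v$ along geodesics of $\mathbb{S}^{23}$, hence real-analytic, so every contact has finite order and Steps 1--2 go through.
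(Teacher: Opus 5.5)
The paper only quotes this lemma, so your proposal has to stand on its own. It cannot, because the obstacle you isolate is decisive. As literally stated, with smooth meaning $C^\infty$, the lemma is false, and the extra freedom of the SVD does not rescue it. Push your own Rellich example through. Let $X(t)=e^{-1/t^2}R(t)$ for $t\neq 0$ and $X(0)=0$, where
\[
R(t)=\begin{bmatrix}\cos(2/t) & \sin(2/t)\\ \sin(2/t) & -\cos(2/t)\end{bmatrix}.
\]
This $X$ is $C^\infty$ on $(-1,1)$ and satisfies $X^*X=e^{-2/t^2}I_2$. Suppose $X=U\Sigma V^*$ with $U,V,\Sigma$ merely continuous. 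Then $\Sigma^2=V^*X^*XV=e^{-2/t^2}I_2$, so on $(0,1)$ the diagonal entries of $\Sigma$ are continuous and nonvanishing. Hence $\Sigma=e^{-1/t^2}D$ there, with $D$ a constant sign matrix. This gives $R(t)=U(t)DV(t)^*$ on $(0,1)$. The right-hand side has a limit as $t\to 0^+$, while $R_{11}(t)=\cos(2/t)$ does not. Signs in $\Sigma$ and the independence of $U$ and $V$ do not help, because the product $UDV^*$ is forced. The statement therefore has to be restricted, for example to real-analytic $X(t)$ (the analytic SVD of Bunse-Gerstner, Byers, Mehrmann and Nichols), or to $C^k$ families under a nondegeneracy hypothesis such as pairwise distinct singular values. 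Your proposal should be recast as a proof of such a version rather than left conditional.

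Even for analytic $X$, Step~2 is wrong at clusters with $\sigma>0$. After normalizing $X_i(t_0)=\sigma I$, take an SVD $U_1\Sigma_1V_1^*$ of $(X_i-\sigma I)/(t-t_0)^m$. This does not give an SVD of $X_i=\sigma I+(t-t_0)^mU_1\Sigma_1V_1^*$, because $\sigma U_1^*V_1$ need not be diagonal. Write $C$ for the leading coefficient. Then $X_i^*X_i=\sigma^2 I+(t-t_0)^m\sigma(C+C^*)+O((t-t_0)^{m+1})$, so the splitting is governed by the Hermitian part of $C$, not by its singular values. For instance, with $C=\begin{bmatrix}0&1\\0&0\end{bmatrix}$ the singular vectors of $\sigma I+\epsilon C$ tend to $(e_1\pm e_2)/\sqrt2$ on both sides, which has nothing to do with the SVD of $C$.

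A correct local step for $\sigma>0$ uses the analytic polar factorization $X_i=W(X_i^*X_i)^{1/2}$, valid because $X_i$ is invertible near $t_0$, together with Rellich's theorem for the Hermitian family $(X_i^*X_i)^{1/2}$. For $\sigma=0$ your division by $(t-t_0)^m$ is exact, and this is where negative entries of $\Sigma$ arise. A simpler route is to apply Rellich's theorem once to the analytic dilation $H(t)$ on the whole interval. Then use $J=I_M\oplus(-I_N)$, which satisfies $JHJ=-H$, to pair each eigenvector $(u_j,v_j)$ of a branch $s_j\not\equiv 0$ with $(u_j,-v_j)$. This yields orthonormal families $\{\sqrt2 u_j\}$ and $\{\sqrt2 v_j\}$ and removes the need for gluing.

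Your remark about the application is also off. Lemma~\ref{lemma:smooth} requires smoothness along all $C^\infty$ curves, not only along analytic geodesics. What the paper actually uses is smoothness of $\sigma_1^2+\sigma_2^2$ near points with $\sigma_2(X)>\sigma_3(X)$, and that already follows from your Step~1. There $\sigma_1^2+\sigma_2^2=\operatorname{Tr}(PX^*X)$, where $P$ is the Riesz projector of $X^*X$ onto its two largest eigenvalues, and this expression is real-analytic in $(A,B)$.
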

The entries of the matrix $\Sigma(t)$ in the above lemma may be non-negative, and their absolute values are equal to the singular values of $X(t)$. Our objective function, $\sigma_1^2(X) + \sigma_2^2(X)$ being the sum of the squares of the singular values is independent of their signs; it is smooth along any curve at any point (with the exception of points where $\sigma_2(X) = \sigma_3(X)$; however, since the set of such points has measure zero within the entire domain, we may disregard them for the purposes of numerical experiments). Furthermore, regarding the relationship between smoothness along arbitrary curves and smoothness within a neighborhood of a point, we have the following fact.
\begin{lemma}\cite{Boman_1967}\label{lemma:smooth}
    Let $f: \mathbb{R}^d \to \mathbb{R}$. If, for any smooth curve $u \in C^{\infty}\left(\mathbb{R}, \mathbb{R}^d\right)$, the composite function $f \circ u$ is a smooth function on $\mathbb{R}$, then $f$ is a smooth function on $\mathbb{R}^d$.
\end{lemma}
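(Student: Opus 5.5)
The plan is the classical argument of Boman, organized around one auxiliary tool: the \emph{special curve lemma}. This lemma says that if a sequence $x_k \to x$ in $\mathbb{R}^d$ converges \emph{fast}, then there is a smooth curve $c \in C^\infty(\mathbb{R},\mathbb{R}^d)$ with $c(t_k)=x_k$ and $c(0)=x$. Here fast means that $k^n|x_k-x|$ stays bounded for every $n$, and $t_k \downarrow 0$ is a suitable sequence, say $t_k = 1/k^2$. I would prove it first. Fix a smooth function $\varphi$ that is $0$ for $t\le 0$ and $1$ for $t\ge 1$, and interpolate between consecutive $x_k$ on $[t_{k+1},t_k]$ by $x_{k+1}+\varphi\bigl((t-t_{k+1})/(t_k-t_{k+1})\bigr)(x_k-x_{k+1})$. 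The $n$-th derivative on that interval is bounded by $C_n (t_k-t_{k+1})^{-n}|x_k-x_{k+1}|$. Since $(t_k-t_{k+1})^{-n}$ grows only polynomially in $k$, fast convergence forces all derivatives to tend to $0$ as $t\to 0$. Hence the curve, extended by $x$ for $t\le 0$, is smooth. Any convergent sequence has a fast-converging subsequence, which is how the lemma will be used.

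\textbf{Step 1 (continuity and local boundedness).} Suppose that $f$ is not continuous at $x$, so there are $x_k\to x$ with $|f(x_k)-f(x)|\ge \varepsilon$. Pass to a fast subsequence and take the curve $c$ from the lemma. Then $f\circ c$ is smooth, hence continuous at $0$, which is a contradiction. The same argument gives local boundedness of $f$ and of all the functions built below.

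\textbf{Step 2 (differentiability).} Smoothness along lines gives the directional derivatives $D_v f(x)=\frac{d}{dt}f(x+tv)|_{t=0}$. The main obstacle is to show that $v\mapsto D_vf(x)$ is linear and that $f$ is Fréchet differentiable. I would first show that for every smooth curve $c$ one has $(f\circ c)'(t)=D_{c'(t)}f(c(t))$. For this, compare $f(c(t+s))$ with $f(c(t)+s c'(t))$ along auxiliary curves. Argue by contradiction: assume that the difference quotient $\bigl(f(x+s_k v_k)-f(x)\bigr)/s_k$ fails to converge to the expected value for some $s_k\to 0$ and $v_k\to v$. Choose a fast subsequence and run the special curve lemma through the points $x+s_kv_k$ with $c'(0)=v$. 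Because $f\circ c$ is differentiable at $0$, this yields a contradiction. Applying the same argument to sums $v+w$ through the curve $c(t)=x+tv+t w$ together with $x+tv+\psi(t)w$ gives additivity. Homogeneity is immediate from reparametrizing lines. Uniformity of the difference quotient in the direction, which is again obtained by contradiction via fast sequences and Step 1, upgrades this to Fréchet differentiability with derivative $df(x)$.

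\textbf{Step 3 (induction).} Each partial derivative $\partial_i f$ again satisfies the hypothesis. Indeed, for any smooth $u$, the chain rule from Step 2 gives $(f\circ(u+te_i\cdot s))$-type identities. More directly, $\partial_if(u(t))$ can be written as the derivative in $s$ at $s=0$ of $f(u(t)+se_i)$. Smoothness of this in $t$ follows from applying Step 2 to the curves $t\mapsto u(t)+\lambda(t)e_i$ and to the formula $(f\circ c)'=df(c)c'$, with polarization in $c'$. Hence each $\partial_i f$ is continuous by Step 1, so $f\in C^1$. Iterating this with $\partial_i f$ in place of $f$ yields $f\in C^k$ for every $k$, that is, $f$ is smooth.
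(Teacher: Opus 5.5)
\textbf{There is a genuine gap in Steps 2 and 3, at exactly the hard part of Boman's theorem.} The paper gives no proof of this lemma; it only cites \cite{Boman_1967}. So your proposal has to stand on its own. Your special curve lemma and the continuity argument of Step 1 are fine.

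\textbf{Step 2.} The curve you invoke need not exist. A smooth $c$ with $c(0)=x$, $c'(0)=v$ satisfies $c(\tau)=x+\tau v+O(\tau^2)$. It can therefore pass through the points $x+s_kv_k$ only if the component of $v_k-v$ transverse to $v$ is $O(s_k)$. Passing to a subsequence cannot arrange this, because $s_k$ and $v_k$ are coupled; take $s_k=e^{-k}$ and $v_k=v+w/k$ with $w\perp v$.

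More fundamentally, Step 2 only ever uses that $f\circ c$ is differentiable, and no argument of that kind can give additivity of $v\mapsto D_vf(x)$. Take $f(\xi,\eta)=\xi^2\eta/(\xi^2+\eta^2)$ with $f(0,0)=0$. It is odd, homogeneous of degree one, and has bounded gradient off the origin. From these facts one checks three things:
\begin{itemize}
\item $f\circ u\in C^1$ for every smooth curve $u$;
\item the chain rule $(f\circ u)'(t)=D_{u'(t)}f(u(t))$ holds;
\item $\bigl(f(s_kv_k)-f(0)\bigr)/s_k=f(v_k)\to f(v)$ whenever $s_k\to 0$ and $v_k\to v$.
\end{itemize}
Yet $D_vf(0)=f(v)$ is not additive: $f(1,0)=f(0,1)=0$ while $f(1,1)=1/2$. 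So linearity of $df(x)$ can only come from the higher derivatives of the functions $f\circ c$, and Step 2 never touches them.

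\textbf{Step 3.} This step does not repair the gap. Its claim that $\partial_i f$ again satisfies the hypothesis would by itself give $f\in C^1$, via Step 1 and the classical criterion on continuous partials. So it carries the whole proof, and it is not justified.
\begin{itemize}
\item The chain rule $(f\circ c)'(t)=df(c(t))c'(t)$ only evaluates $df$ at $c(t)$ in the direction $c'(t)$.
\item The curve $u+\lambda e_i$ recovers $\partial_if(u(t_0))=(f\circ(u+\lambda e_i))'(t_0)-(f\circ u)'(t_0)$ only at an instant where $\lambda(t_0)=0$ and $\lambda'(t_0)=1$. This cannot hold on an interval.
\item No combination of such identities, polarization included, decouples the base point $c(t)$ from the direction $c'(t)$.
\end{itemize}
Smoothness of $t\mapsto \partial_s f(u(t)+se_i)|_{s=0}$ is a two-parameter statement. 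It is essentially Boman's theorem for the function $(t,s)\mapsto f(u(t)+se_i)$ on $\mathbb{R}^2$, so the induction is circular.

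\textbf{What is missing} is a source of \emph{uniform} higher-order control. Argue by contradiction with a gluing lemma for infinitely many rescaled curve pieces rather than for points only. For example, use short line segments through $x_n,x_n+h_n,\dots,x_n+kh_n$, traversed at speeds tending to $0$ fast. This shows that, for every $k$, the iterated differences satisfy $|\Delta_h^kf(x)|\le C_k|h|^k$ locally uniformly in $x$ and $h$. These bounds are then converted into $C^\infty$ regularity. That is the step where the hypothesis on all derivatives of $f\circ u$ is genuinely used.
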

Consequently, since $f(v)$ is locally smooth almost everywhere, it is possible to use an optimization method based on the Riemannian gradient. The code is provided in Appendix \ref{appendix: numerical code}.

\begin{example}\label{example: saturation}
	Here, we present five pairs of matrices $A$ and $B$ for which $\sigma_1^2(X) + \sigma_2^2(X) \geq 0.499999$ (retaining only four significant figures) below.
	\begin{enumerate}
		\item The first pair
		\begin{itemize}
			\item Matrices $A$ and $B$:
			\begin{align}
				A &= \begin{bmatrix}
					0 & 0.0002 & -0 & -0.1239 \\
					0.0001 & 0 & 0.0003 & -0.0046 \\
					-0 & 0.0006 & 0 & -0.3117 \\
					-0.0572 & -0.0021 & -0.1439 & 0 
				\end{bmatrix}, \\
				B &= \begin{bmatrix}
					0 & -0 & -0.1379 & -0.0948 \\
					-0 & 0 & 0.0632 & 0.0435 \\
					-0.2113 & 0.0969 & 0 & 0 \\
					-0.1453 & 0.0667 & 0 & 0 
				\end{bmatrix}.
			\end{align}
			\item The 16-dimensional vector formed by the singular values of matrix $X$:\\
			$[0.5, 0.5, 0.3355, 0.3355, 0.2821, 0.2821, 0.1841, 0.1841, 0.1548, 0.1548, 0, \cdots, 0]$.
			\item The difference between the vector of singular values of $A$ (sorted in descending order) and the vector of the absolute values of its eigenvalues (sorted in descending order):
			$[0.1076, -0.0731, 0, -0]$.
			\item The difference between the vector of singular values of $B$ (sorted in descending order) and the vector of the absolute values of its eigenvalues (sorted in descending order):
			$[0.0542, -0.0438, 0, -0]$.
		\end{itemize}
		
		\item The second pair
		\begin{itemize}
			\item Matrices $A$ and $B$:
			\begin{align}
				A &= \begin{bmatrix}
					0 & -0 & -0.0932 & 0 \\
					0 & 0 & 0.1866 & 0 \\
					0.0006 & -0.0012 & 0 & 0.0024 \\
					-0 & -0 & -0.3704 & 0 
				\end{bmatrix}, \\
				B &= \begin{bmatrix}
					0 & -0.002 & -0 & 0 \\
					0.1181 & 0 & 0.1335 & -0.1937 \\
					-0 & -0.0022 & 0 & 0 \\
					0 & 0.0032 & 0 & 0 
				\end{bmatrix}.
			\end{align}
			\item The 16-dimensional vector formed by the singular values of matrix $X$:\\
			$[0.5, 0.5, 0.4251, 0.4251, 0.2632, 0.2632, 0.0044, 0.0044, 0.0027, 0.0027, 0, \cdots, 0]$.
			\item The difference between the vector of singular values of $A$ (sorted in descending order) and the vector of the absolute values of its eigenvalues (sorted in descending order):
			$[0.391, -0.0313, 0, -0]$.
			\item The difference between the vector of singular values of $B$ (sorted in descending order) and the vector of the absolute values of its eigenvalues (sorted in descending order):
			$[0.2292, -0.0296, 0, -0]$.
		\end{itemize}
		
		\item The third pair
		\begin{itemize}
			\item Matrices $A$ and $B$:
			\begin{align}
				A &= \begin{bmatrix}
					0 & -0 & 0.3911 & -0.0554 \\
					0 & 0 & 0.0599 & -0.0085 \\
					-0.0982 & -0.0151 & 0 & 0 \\
					0.0139 & 0.0021 & -0 & 0 
				\end{bmatrix}, \\
				B &= \begin{bmatrix}
					0 & -0.0729 & 0.1262 & 0.0861 \\
					0.0729 & 0 & -0.0004 & 0.0533 \\
					-0.1262 & 0.0004 & 0 & -0.0928 \\
					-0.0861 & -0.0533 & 0.0928 & 0 
				\end{bmatrix}.
			\end{align}
			\item The 16-dimensional vector formed by the singular values of matrix $X$:\\
			$[0.5, 0.5, 0.3996, 0.3996, 0.2003, 0.2003, 0.2003, 0.2003, 0.1004, 0.1004, 0, \cdots, 0]$.
			\item The difference between the vector of singular values of $A$ (sorted in descending order) and the vector of the absolute values of its eigenvalues (sorted in descending order):
			$[0.1994, -0.0999, 0, -0]$.
			\item The difference between the vector of singular values of $B$ (sorted in descending order) and the vector of the absolute values of its eigenvalues (sorted in descending order):
			$[0, -0, 0, -0]$.
		\end{itemize}
		
		\item The fourth pair
		\begin{itemize}
			\item Matrices $A$ and $B$:
			\begin{align}
				A &= \begin{bmatrix}
					0 & -0 & -0.2686 & 0 \\
					0 & 0 & -0.1585 & 0 \\
					-0.0851 & -0.0502 & 0 & -0.0561 \\
					0 & -0 & -0.177 & 0 
				\end{bmatrix}, \\
				B &= \begin{bmatrix}
					0 & -0.002 & 0 & -0.0215 \\
					-0.0045 & 0 & 0.0277 & 0 \\
					0 & 0.0125 & 0 & 0.1335 \\
					-0.0474 & 0 & 0.295 & 0 
				\end{bmatrix}.
			\end{align}
			\item The 16-dimensional vector formed by the singular values of matrix $X$:\\
			$[0.5, 0.5, 0.3586, 0.3586, 0.3001, 0.3001, 0.1358, 0.1358, 0.1137, 0.1137, 0, \cdots, 0]$.
			\item The difference between the vector of singular values of $A$ (sorted in descending order) and the vector of the absolute values of its eigenvalues (sorted in descending order):
			$[0.1567, -0.0882, 0, -0]$.
			\item The difference between the vector of singular values of $B$ (sorted in descending order) and the vector of the absolute values of its eigenvalues (sorted in descending order):
			$[0.0982, -0.0661, 0, -0]$.
		\end{itemize}
		
		\item The fifth pair
		\begin{itemize}
			\item Matrices $A$ and $B$:
			\begin{align}
				A &= \begin{bmatrix}
					0 & 0.1114 & 0.186 & 0.0059 \\
					-0.1114 & 0 & -0.0241 & 0.0517 \\
					-0.186 & 0.0241 & 0 & 0.0877 \\
					-0.0059 & -0.0517 & -0.0877 & 0 
				\end{bmatrix}, \\
				B &= \begin{bmatrix}
					0 & 0 & -0.2695 & 0.1174 \\
					-0 & 0 & 0.1093 & -0.0476 \\
					0.1553 & -0.063 & 0 & 0 \\
					-0.0676 & 0.0274 & -0 & 0 
				\end{bmatrix}.
			\end{align}
			\item The 16-dimensional vector formed by the singular values of matrix $X$:\\
			$[0.5, 0.5, 0.3172, 0.3172, 0.2408, 0.2408, 0.2408, 0.2408, 0.1828, 0.1828, 0, \cdots, 0]$.
			\item The difference between the vector of singular values of $A$ (sorted in descending order) and the vector of the absolute values of its eigenvalues (sorted in descending order):
			$[0, -0, 0, -0]$.
			\item The difference between the vector of singular values of $B$ (sorted in descending order) and the vector of the absolute values of its eigenvalues (sorted in descending order):
			$[0.0764, -0.058, 0, -0]$.
		\end{itemize}
	\end{enumerate}
\end{example}

In the above five examples for matrices $A$ and $B$, the last two entries of the difference vector formed by subtracting the vector of absolute values of eigenvalues (arranged in descending order) from the vector of singular values (also arranged in descending order) were close to zero. For matrices of this type, we have the following lemma:

\begin{lemma}\cite{Horn_Johnson_1991} \label{lemma: semi normal}
    Let $A \in M_n(\mathbb{C})$, with singular values $\sigma_1(A) \geq \cdots \geq \sigma_n(A) \geq 0$ and eigenvalues $\left\{\lambda_i(A)\right\}$ satisfying $\left|\lambda_1(A)\right| \geq \cdots \geq\left|\lambda_n(A)\right|$. $A$ is a normal matrix if and only if $\sigma_i(A)=\left|\lambda_i(A)\right|$ for all $i=1, \ldots, n$. More generally, if $\sigma_i(A)= \left|\lambda_i(A)\right|$ for $i=1, \ldots, k$, then $A$ is unitarily similar to $D \oplus B$, where $D= \operatorname{diag}\left(\lambda_1, \ldots, \lambda_k\right)$ and $B \in M_{n-k}(\mathbb{C})$.
\end{lemma}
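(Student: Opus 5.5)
I would prove the statement by induction on $k$. The engine is a one-step reduction: an eigenvalue whose modulus equals the top singular value splits off as an orthogonal direct summand. The normal case is then the special case $k=n$. The converse direction, normal $\Rightarrow$ $\sigma_i(A)=|\lambda_i(A)|$, is immediate: a normal $A$ is unitarily diagonalizable, $A=U\operatorname{diag}(\lambda_1,\dots,\lambda_n)U^*$, so $AA^*=U\operatorname{diag}(|\lambda_1|^2,\dots,|\lambda_n|^2)U^*$ and its eigenvalues are the $|\lambda_i|^2$.

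\textbf{Base step ($k=1$).} Suppose $\sigma_1(A)=|\lambda_1(A)|$, and let $x$ be a unit vector with $Ax=\lambda_1 x$. I claim $A^*x=\bar\lambda_1 x$. Indeed, $\langle A^*x,x\rangle=\overline{\langle Ax,x\rangle}=\bar\lambda_1$. On the other hand, $\|A^*x\|\le \|A^*\|=\sigma_1(A)=|\lambda_1|$. So
\begin{align}
|\lambda_1| = |\langle A^*x,x\rangle| \le \|A^*x\|\,\|x\| \le |\lambda_1|,
\end{align}
and both inequalities are equalities. Equality in Cauchy--Schwarz forces $A^*x$ to be a scalar multiple of $x$, and that scalar must be $\bar\lambda_1$. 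Hence $\operatorname{span}\{x\}$ is invariant under both $A$ and $A^*$, and so is its orthogonal complement. Completing $x$ to a unitary $U$ whose first column is $x$ gives $U^*AU=\lambda_1\oplus A'$ with $A'\in M_{n-1}(\mathbb C)$.

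\textbf{Inductive step.} Write $A$ in the form $\lambda_1\oplus A'$ from the base step and compare the spectral data of $A$ and $A'$. The singular values of $A$ are the multiset $\{|\lambda_1|\}\cup\{\sigma_j(A')\}$, and the eigenvalues are $\{\lambda_1\}\cup\{\lambda_j(A')\}$. Since $|\lambda_1|=\sigma_1(A)$ is the maximum of both lists, deleting one copy of it from each sorted list gives two facts:
\begin{itemize}
\item $\sigma_j(A')=\sigma_{j+1}(A)$;
\item $|\lambda_j(A')|=|\lambda_{j+1}(A)|$.
\end{itemize}
Ties in modulus are harmless because only the moduli are compared. The hypothesis for $i=2,\dots,k$ therefore becomes $\sigma_j(A')=|\lambda_j(A')|$ for $j=1,\dots,k-1$. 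The induction hypothesis then gives a unitary $V$ with $V^*A'V=\operatorname{diag}(\lambda_2,\dots,\lambda_k)\oplus B$. Conjugating $A$ by $U(1\oplus V)$ yields $D\oplus B$ with $D=\operatorname{diag}(\lambda_1,\dots,\lambda_k)$.

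\textbf{Normal case.} Taking $k=n$ leaves $B$ empty, so $A$ is unitarily diagonal and hence normal.

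\textbf{Main obstacle.} The only delicate point is the base step. We need the singular-value equality to yield a \emph{reducing} subspace, not merely an invariant one. This is exactly where the Cauchy--Schwarz equality argument for $A^*x$ is used, and it uses that $|\lambda_1|$ is the largest singular value. The bookkeeping of multiplicities in the inductive step is routine once the top entry of each sorted list is removed.
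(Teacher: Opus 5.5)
Your proof is correct and complete. The Cauchy--Schwarz step giving $A^*x=\bar\lambda_1 x$ is exactly what makes $\operatorname{span}\{x\}$ a reducing subspace, and the inductive bookkeeping is sound, including relabeling the eigenvalues of $A'$ as $\lambda_2,\dots,\lambda_n$. The paper gives no proof of its own and only cites Horn and Johnson; your argument is essentially the standard one behind that citation, phrased without coordinates. In Schur-form language: if $U^*AU$ is upper triangular with $(1,1)$ entry $\lambda_1$, then $\sigma_1(A)^2$ is at least the squared norm of its first row, so $\sigma_1=|\lambda_1|$ forces the rest of that row to vanish, and the same induction on $k$ finishes.
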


\begin{lemma}\label{lemma: semi normal 0}
    Let $A \in M_n(\mathbb{C})$ be an invertible matrix whose singular values satisfy $\sigma_1(A) \geq \dots \geq \sigma_n(A) > 0$ and whose eigenvalues satisfy $|\lambda_1(A)| \geq \dots \geq |\lambda_n(A)| > 0$. If $\sigma_i(A) = |\lambda_i(A)|$ holds for all $i = k+1, \dots, n$, then $A$ is unitarily similar to $D \oplus B$, where $D = \operatorname{diag}(\lambda_{k+1}(A), \dots, \lambda_n(A))$ and $B \in M_k(\mathbb{C})$.
\end{lemma}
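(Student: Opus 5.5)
The plan is to reduce the statement to Lemma \ref{lemma: semi normal} by passing to the inverse matrix $A^{-1}$. Inversion reverses the order of both the singular values and the eigenvalue moduli, so a condition on the \emph{smallest} ones for $A$ becomes a condition on the \emph{largest} ones for $A^{-1}$. The largest ones are exactly what Lemma \ref{lemma: semi normal} controls.

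\textbf{Step 1 (spectral data of $A^{-1}$).} Since $A$ is invertible, take an SVD $A=W\Sigma V^*$. Then $A^{-1}=V\Sigma^{-1}W^*$, so the singular values of $A^{-1}$ are $1/\sigma_i(A)$. In descending order,
\begin{align}
\sigma_j(A^{-1})=\frac{1}{\sigma_{n+1-j}(A)}, \qquad j=1,\dots,n.
\end{align}
Likewise the eigenvalues of $A^{-1}$ are $1/\lambda_i(A)$, with the same algebraic multiplicities. We label them as $\mu_j:=1/\lambda_{n+1-j}(A)$, so that
\begin{align}
|\mu_1|\ge\cdots\ge|\mu_n| \quad\text{and}\quad |\mu_j|=\frac{1}{|\lambda_{n+1-j}(A)|}.
\end{align}
When several eigenvalues have equal modulus, we simply transport whatever labeling of the $\lambda_i(A)$ is fixed in the statement; the ordering by modulus is unaffected.

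\textbf{Step 2 (apply Lemma \ref{lemma: semi normal}).} The hypothesis $\sigma_i(A)=|\lambda_i(A)|$ for $i=k+1,\dots,n$ becomes $\sigma_j(A^{-1})=|\mu_j|$ for $j=1,\dots,n-k$. Lemma \ref{lemma: semi normal}, applied with $n-k$ in place of $k$, gives a unitary $U$ such that
\begin{align}
U^*A^{-1}U=\operatorname{diag}(\mu_1,\dots,\mu_{n-k})\oplus C, \qquad C\in M_k(\mathbb{C}).
\end{align}
Because the left-hand side is invertible, $C$ is invertible.

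\textbf{Step 3 (invert back).} Inverting the block-diagonal form blockwise gives
\begin{align}
U^*AU=\operatorname{diag}(\lambda_n(A),\dots,\lambda_{k+1}(A))\oplus C^{-1}.
\end{align}
A permutation matrix is unitary, so conjugating by one reorders the diagonal block into $D=\operatorname{diag}(\lambda_{k+1}(A),\dots,\lambda_n(A))$. If the order $B\oplus D$ is preferred, a further permutation swaps the two blocks. In either case $A$ is unitarily similar to $D\oplus B$ with $B=C^{-1}\in M_k(\mathbb{C})$.

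\textbf{Main obstacle.} The only delicate point is the bookkeeping in Step 1. Lemma \ref{lemma: semi normal} requires the eigenvalue moduli of $A^{-1}$ to be indexed in descending order, and the entries of the resulting diagonal block must be the correct eigenvalues. Defining $\mu_j=1/\lambda_{n+1-j}(A)$ settles this, because $t\mapsto 1/t$ is order-reversing on $(0,\infty)$. This is also the only place where invertibility is essential: it makes all the reciprocals finite and nonzero, and it guarantees that $C$ is invertible.
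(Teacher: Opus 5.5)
Your proof is correct and takes the same approach as the paper, which simply applies Lemma~\ref{lemma: semi normal} to $A^{-1}$. You also write out the index reversal $\sigma_j(A^{-1})=1/\sigma_{n+1-j}(A)$, $\mu_j=1/\lambda_{n+1-j}(A)$ and the blockwise inversion back to $A$, both of which the paper leaves to the reader.
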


\begin{proof}
    The proof is established by applying Lemma \ref{lemma: semi normal} to the inverse of matrix $A$.
\end{proof}

Lemma \ref{lemma: semi normal 0} indicates that the invertible cases discussed in Example \ref{example: saturation} can all be unitarily diagonalized or unitarily block-diagonalized into $2 \times 2$ blocks. Furthermore, restricting our consideration to invertible matrices is a reasonable assumption, as the set of invertible matrices constitutes a dense subset. Specifically, we have the following lemma:
\begin{lemma}\label{lemma: dense subset GL}
Define the following sets:
\begin{equation}
\begin{aligned}
&\mathcal X = \{(A, B) \in M_4(\mathbb{C}) \times M_4(\mathbb{C}) | \operatorname{Tr} A = \operatorname{Tr} B = 0, \norm{A}_F^2 + \norm{B}_F^2 = \frac{1}{4}\}, \\
&\mathcal Y = \{(A, B) \in M_4(\mathbb{C}) \times M_4(\mathbb{C}) | \operatorname{Tr} A = \operatorname{Tr} B = 0, \norm{A}_F^2 + \norm{B}_F^2 = \frac{1}{4}, A, B \in GL_4(\mathbb{C})\}.
\end{aligned}
\end{equation}
The set $\mathcal{Y}$ is a dense subset of $\mathcal{X}$. If Conjecture \ref{cj:distillability conjecture} holds for $\mathcal{Y}$, then Conjecture \ref{cj:distillability conjecture} also holds for $\mathcal{X}$. \end{lemma}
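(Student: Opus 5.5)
The statement has two parts: density of $\mathcal Y$ in $\mathcal X$, and transfer of the inequality from $\mathcal Y$ to $\mathcal X$. The second part follows from the first by continuity, so the real work is density.

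\textbf{Density.} Fix $(A,B)\in\mathcal X$. Let $\mathcal T=\{(A,B)\in M_4(\mathbb C)^2 \mid \operatorname{Tr}A=\operatorname{Tr}B=0\}$. This is a complex vector space of dimension $30$, hence irreducible. Set
\begin{align}
    \mathcal Z=\{(A,B)\in\mathcal T \mid \det A\,\det B=0\}.
\end{align}
$\mathcal Z$ is the zero set of a polynomial on $\mathcal T$. This polynomial is not identically zero on $\mathcal T$, because $\operatorname{diag}(1,-1,1,-1)$ is traceless and invertible.

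The zero set of a nonzero polynomial on $\mathbb C^{30}$ has empty interior. Hence $\mathcal T\setminus\mathcal Z$ is dense in $\mathcal T$, and in particular there are pairs $(A_n,B_n)\in\mathcal T\setminus\mathcal Z$ with $(A_n,B_n)\to(A,B)$.

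To land on the sphere, rescale: put $(A_n',B_n')=\frac{1}{2}(A_n,B_n)/\sqrt{\|A_n\|_F^2+\|B_n\|_F^2}$. The denominator tends to $1/2\neq 0$, so this is well defined for large $n$. Scaling by a positive constant preserves tracelessness and invertibility. Therefore $(A_n',B_n')\in\mathcal Y$ and $(A_n',B_n')\to(A,B)$.

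Concretely, one can use an explicit curve instead of the polynomial argument. Take $A_t=A+tD$ with $D=\operatorname{diag}(1,-1,1,-1)$. The function $\det(A+tD)$ is a polynomial in $t$ whose leading coefficient is $\det D\neq0$. It therefore has finitely many roots, so $A_t$ is invertible for all small $t\neq0$. Treat $B$ the same way, then normalize.

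\textbf{Transfer.} Write $f(A,B)=\sigma_1^2(X)+\sigma_2^2(X)$, where $X=A\otimes I+I\otimes B$. The map $(A,B)\mapsto X$ is linear. Singular values are continuous, since Weyl's inequality gives $|\sigma_i(X)-\sigma_i(X')|\le\|X-X'\|$. Hence $f$ is continuous on $\mathcal X$.

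If $f\le 1/2$ on $\mathcal Y$, then for any $(A,B)\in\mathcal X$ choose $(A_n',B_n')\in\mathcal Y$ converging to it. Then
\begin{align}
    f(A,B)=\lim_{n} f(A_n',B_n')\le \tfrac12 .
\end{align}

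\textbf{Main obstacle.} The only delicate step is staying simultaneously traceless, invertible and on the norm sphere. Perturbing inside the traceless subspace (for example along the invertible traceless direction $D$) and then normalizing by a positive scalar handles all three constraints at once.
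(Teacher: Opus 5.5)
Your proof is correct and is essentially the paper's argument: the paper uses your ``concrete'' variant. It perturbs both $A$ and $B$ along the invertible traceless diagonal $E=\operatorname{diag}(1,1,-1,-1)$, notes that $\det(A+tE)$ has leading coefficient $\det E=1$ and hence finitely many roots, chooses $t_n\downarrow 0$ avoiding those roots, renormalizes by $2s_n$, and concludes by continuity of $\sigma_1^2+\sigma_2^2$. Your empty-interior polynomial argument is a cleaner packaging of the same density step, and your continuity justification via the Weyl bound $|\sigma_i(X)-\sigma_i(X')|\le\|X-X'\|$ makes the transfer step more precise than the paper's closing sentence.
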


\begin{proof}
    Let $(A, B) \in \mathcal{X}$. To prove that $\mathcal{Y}$ is dense in $\mathcal{X}$, we must construct a sequence $\{(A_n, B_n)\}_{n=1}^{\infty} \subset \mathcal{Y}$ such that $\lim_{n \to \infty} (A_n, B_n) = (A, B)$ under the norm $\norm{\cdot}: A,B \mapsto \sqrt{\norm{A}_F^2 + \norm{B}_F^2}$. 
    
    First, we construct the diagonal matrix $E = \operatorname{diag}(1, 1, -1, -1) \in M_4(\mathbb{C})$. It is easily seen that $\operatorname{Tr}(E) = 0$ and $\det(E) = 1 \neq 0$; thus, $E \in GL_4(\mathbb{C})$. Introducing a parameter $t \in \mathbb{R}$, we define the perturbed matrices:
    \begin{equation}
    \tilde{A}(t) = A + tE, \quad \tilde{B}(t) = B + tE.
    \end{equation}
    By the linearity of the trace operator and the fact that $(A, B) \in \mathcal{X}$, we have $\operatorname{Tr}(\tilde{A}(t)) = \operatorname{Tr}(A) + t\operatorname{Tr}(E) = 0$ for any $t \in \mathbb{R}$. Similarly, $\operatorname{Tr}(\tilde{B}(t)) = 0$. 
    
We consider the determinant polynomials $P_A(t) = \det(\tilde{A}(t))$ and $P_B(t) = \det(\tilde{B}(t))$; both are polynomials in $t$ of degree at most 4. For $t \neq 0$, consider the asymptotic behavior of $P_A(t)$:
    \begin{equation}
    \lim_{t \to \infty} \frac{P_A(t)}{t^4} = \lim_{t \to \infty} \det\left(\frac{1}{t}A + E\right) = \det(E) = 1.
    \end{equation}
    This indicates that the coefficient of the $t^4$ term in $P_A(t)$ is $1$; thus, $P_A(t)$ is not identically zero. Similarly, it can be shown that $P_B(t) \not\equiv 0$. 
    The number of real roots for both $P_A(t)$ and $P_B(t)$ is finite. Therefore, there must exist a strictly monotonically decreasing sequence of positive real numbers $\{t_n\}_{n=1}^{\infty}$ converging to $0$, such that for any $n \geq 1$, we have $P_A(t_n) \neq 0$ and $P_B(t_n) \neq 0$. 
    
    Let $\tilde{A}_n = \tilde{A}(t_n)$ and $\tilde{B}_n = \tilde{B}(t_n)$, it follows that $\tilde{A}_n, \tilde{B}_n \in GL_4(\mathbb{C})$. Furthermore, to satisfy the norm equality constraint within $\mathcal{Y}$:
    \begin{equation}
        s_n = \sqrt{\lVert \tilde{A}_n \rVert_F^2 + \lVert \tilde{B}_n \rVert_F^2}.
    \end{equation}
    Since $\tilde{A}_n, \tilde{B}_n \in GL_4(\mathbb{C})$, they must be non-zero matrix; consequently, $s_n > 0$. We now define the normalized sequences:
    \begin{equation}
        A_n = \frac{1}{2s_n} \tilde{A}_n, \quad B_n = \frac{1}{2s_n} \tilde{B}_n.
    \end{equation}
    Next, we verify that for any $n \ge 1$, $(A_n, B_n) \in \mathcal{Y}$:
    \begin{itemize}
        \item Trace-free property: $\operatorname{Tr}(A_n) = \frac{1}{2s_n}\operatorname{Tr}(\tilde{A}_n) = 0$; similarly, $\operatorname{Tr}(B_n) = 0$. 
        \item Invertibility: Since $s_n > 0$, we have $\det(A_n) = \left(\frac{1}{2s_n}\right)^4 \det(\tilde{A}_n) \neq 0$; thus, $A_n \in GL_4(\mathbb{C})$. Similarly, $B_n \in GL_4(\mathbb{C})$. 
        \item Norm constraint:
        \begin{equation}
            \lVert A_n \rVert_F^2 + \lVert B_n \rVert_F^2 = \frac{1}{4s_n^2} \left( \lVert \tilde{A}_n \rVert_F^2 + \lVert \tilde{B}_n \rVert_F^2 \right) = \frac{s_n^2}{4s_n^2} = \frac{1}{4}.
        \end{equation}
    \end{itemize}
    This demonstrates that the sequence $\{(A_n, B_n)\}_{n=1}^{\infty} \subset \mathcal{Y}$. 
    
    Finally, we compute the limit of this sequence:
    \begin{align}
        \lim_{n \to \infty} s_n &= \sqrt{\lVert A \rVert_F^2 + \lVert B \rVert_F^2} = \frac{1}{2}, \\
        \lim_{n \to \infty} A_n &= \lim_{n \to \infty} \frac{1}{2s_n} \tilde{A}_n = \frac{1}{2 \cdot (1/2)} A = A, \\
        \lim_{n \to \infty} B_n &= \lim_{n \to \infty} \frac{1}{2s_n} \tilde{B}_n = \frac{1}{2 \cdot (1/2)} B = B.
    \end{align}
    In summary, for any point $(A, B) \in \mathcal{X}$, there exists a sequence contained entirely within $\mathcal{Y}$ that converges to $(A, B)$; this completes the proof that $\mathcal{Y}$ is dense in $\mathcal{X}$. If there exists a counterexample to Conjecture \ref{cj:distillability conjecture} within $\mathcal{X}$ suppose, for instance, that $\sigma_1^2(\hat X) + \sigma_2^2(\hat X) = \frac{1}{2} + \epsilon$, then, since $\mathcal{Y}$ is dense in $\mathcal{X}$ and $\sigma_1^2(X) + \sigma_2^2(X)$ is a continuous function, there must exist a pair $(\tilde A, \tilde B) \in \mathcal{Y}$ such that $\sigma_1^2(\tilde X) + \sigma_2^2(\tilde X) = \frac{1 + \epsilon}{2}$. 
\end{proof}

\section{Conclusion}\label{sec:conclusion}

In this paper, we have investigated the saturation conditions of the distillability conjecture, identify seven distinct saturation scenarios. Crucially, our analysis reveals that more complex matrix structures, including those involving only one normal matrix, unitary similarity between the matrices, or matrices whose two $2 \times 2$ diagonal blocks are zero, can all be reduced to the $2 \times 2$ block-diagonal conditions established in Lemma \ref{lemma:2 by 2 block-diagonal}. To further substantiate our theoretical findings, we incorporate a manifold optimization approach. By justifying the smoothness of singular values and the applicability of dense subsets, we provide compelling numerical evidence that underscores the critical importance of the $2 \times 2$ block-diagonal structure in achieving equality within the distillability conjecture. These synthesized results not only clarify the boundary conditions of the distillability conjecture but also provide a potential pathway for future research toward its complete resolution.

\section*{ACKNOWLEDGMENTS}
Authors were supported by the NNSF of China (Grant No. 12471427). 

\section*{DISCLOSURE STATEMENT}
The authors report there are no competing interests to declare.

\appendix

\section{Proof of Lemma \ref{lemma:diag}}\label{sec:proof of lemma diag}

Let $A=\operatorname{diag}(a_1,a_2,a_3,a_4)$ and $B=\operatorname{diag}(b_1,b_2,b_3,b_4)$ with complex numbers $a_i,b_j\in \mathbb{C}$, such that $\sum_j a_j=\sum_k b_k=0$ and $\sum_j(\left| a_j \right|^2+\left| b_j \right|^2)=1/4$. Evidently $X$ is a diagonal matrix with diagonal elements $a_j+b_k$ with the subscript $j,k=1,2,3,4$. Hence the set of singular values $\sigma_j(X)$ is the same as the set of absolute values of $X$'s eigenvalues $\left| a_j+b_k \right|$. Up to the subscript permutation, we can assume that $\left| a_1+b_1 \right|$ is the largest singular value. We have two cases in terms of the second largest singular value, 
\begin{enumerate}
	\item $\sigma_2(X) = \left| a_2+b_2 \right|$, $\sigma_1(X)^2 + \sigma_2(X)^2 = \left| a_1+b_1 \right|^2+\left| a_2+b_2 \right|^2$. \label{case: normal 1}
	\item $\sigma_2(X) = \left| a_1+b_2 \right|$, $\sigma_1(X)^2 + \sigma_2(X)^2 = \left| a_1+b_1 \right|^2+\left| a_1+b_2 \right|^2$. \label{case: normal 2}
\end{enumerate}
In case \ref{case: normal 1}, we claim that $a_1 = b_1 = -a_2 = -b_2$, $\left| a_1 \right| = \left| b_1 \right| = \left| a_2 \right| = \left| b_2 \right| = \frac{1}{4}$ and $b_3 = b_4 = a_3 = a_4 = 0$. In fact, let us suppose that $\sigma_1(X)^2 + \sigma_2(X)^2 = \left| a_1+b_1 \right|^2+\left| a_2+b_2 \right|^2$ reaches the maximum $\frac{1}{2}$. By calculations, we have
\begin{align}
	\left| a_1 - b_1 \right|^2 + \left| a_2 - b_2 \right|^2 &= 2 \left( \left| a_1 \right|^2 + \left| b_1 \right|^2 + \left| a_2 \right|^2 + \left| b_2 \right|^2 \right) - \left(\left| a_1 + b_1 \right|^2 + \left| a_2 + b_2 \right|^2\right) \\
	&\leq 2 \times \frac{1}{4} - \frac{1}{2} = 0. 
\end{align}
Since $\left| a_1 - b_1 \right|^2 + \left| a_2 - b_2 \right|^2$ is nonnegative, we have $a_1 = b_1$ and $a_2 = b_2$. Consequently, we obtain 
\begin{align}
	\left| 2 a_1 \right|^2 + \left| 2 a_2 \right|^2 = \frac{1}{2}.
\end{align}
Since $\sum_{i=1}^4 \left| a_i \right|^2 + \left| b_i \right|^2 = \frac{1}{4}$, we have $b_3 = b_4 = a_3 = a_4 = 0$. According to \eqref{eq:traceless condition}, we obtain $a_1 = b_1 = -a_2 = -b_2$ and $\left| a_1 \right| = \left| b_1 \right| = \left| a_2 \right| = \left| b_2 \right| = \frac{1}{4}$. One can verify that $\left| a_1 + b_1 \right|$ and $\left| a_2 + b_2 \right|$ are two largest singular values of $X$. We have proven the claim \ref{case: normal 1}.
Next we investigate case \ref{case: normal 2}. We have
\begin{eqnarray}\label{eq:func}
	\left| a_1+b_1 \right|^2+\left| a_1+b_2 \right|^2=\frac{1}{2}(\left| b_1-b_2 \right|^2+\left| 2a_1+b_1+b_2 \right|^2).
\end{eqnarray}
The maximum of the expression above is reached when $a_1$ and $b_1+b_2$ have the same phase. Up to a global phase on the matrix $X$ in Conjecture \ref{cj:distillability conjecture}, we can assume that $a_1\ge0$ and $b_1+b_2\ge0$. Additionally, we can assume that $b_1 = e_1 + h i$, $b_2 = e_2 - h i$, $b_1 + b_2 = e_1 + e_2 = c_1$, and $\left| b_1 \right|^2 + \left| b_2 \right|^2 = e_1^2 + e_2^2 + 2 h^2 = c_2$. By calculations, we have
\begin{align}
	\left| b_1 - b_2 \right|^2 &= (e_1 - e_2)^2 + 4 h^2 = 4 h^2 + 2 e_1^2 + 2 e_2^2 - (e_1 + e_2)^2 \\
	&= 2 c_2 - c_1^2.
\end{align}
Then equation \eqref{eq:func} becomes
\begin{align}
	\frac{1}{2}\left(\left| b_1-b_2 \right|^2+\left| 2a_1+b_1+b_2 \right|^2\right) &= \frac{1}{2}\left(2 c_2 - c_1^2 + \left(2 a_1 + c_1\right)^2\right) \\
	&= 2 a_1^2 + 2 a_1 c_1 +  c_2. \label{eq: expre to be opt}
\end{align}
To determine under what conditions the expression \eqref{eq: expre to be opt} reaches its maximum $\frac{1}{2}$, we can use techniques from optimization to find out the necessary conditions for reaching the maximum. Here, we treat the expression \eqref{eq: expre to be opt} as a function $f$ with variables $c_1, c_2, a_1$ to be optimized
\begin{align}
	f(c_1, c_2, a_1) = 2 a_1^2 + 2 a_1 c_1 + c_2,
\end{align}
under the constraints defined as follows
\begin{align}
	&g_1(c_1, b_3, b_4) = c_1 + b_3 + b_4 = 0, \label{eq: constraint 1} \\
	&g_2(a_1, a_2, a_3, a_4) = a_1 + a_2 + a_3 + a_4 = 0, \label{eq: constraint 2} \\
	&g_3(a_1, a_2, a_3, a_4, c_2, b_3, b_4) = c_2 + \left| b_3 \right|^2 + \left| b_4 \right|^2 + a_1^2 + \left| a_2 \right|^2 + \left| a_3 \right|^2 + \left| a_4 \right|^2 - \frac{1}{4} = 0. \label{eq: constraint 3} 
\end{align}
Suppose that $c_1$ and $a_1$ are constants, which means the sum of $b_3$ and $b_4$ as well as the sum of $a_2, a_3$ and $a_4$ are fixed, we can let $b_3 = b_4$ and $a_2 = a_3 = a_4$ while minimizing $\left| b_3 \right|^2 + \left| b_4 \right|^2$ and $\left| a_2 \right|^2 + \left| a_3 \right|^2 + \left| a_4 \right|^2$ to make $c_2$ larger, thus increasing the value of $f$. Since $a_1$ and $c_1$ are real, based on the reasoning above, \eqref{eq: constraint 1} and \eqref{eq: constraint 2}, we can assume that $b_3 = b_4 = b \in \mathbb{R}$ and $a_2 = a_3 = a_4 = a \in \mathbb{R}$. According to constraints \eqref{eq: constraint 1}, \eqref{eq: constraint 2} and \eqref{eq: constraint 3}, we have
\begin{align}
	c_1 &= -2b, \\
	a_1 &= -3a, \\
	c_2 &= \frac{1}{4} - 12a^2 - 2b^2, \\
	f(c_1, c_2, a_1) &= 2 \times (-3a)^2 + 2 \times (-3a) \times (-2b) + \frac{1}{4} - 12a^2 - 2b^2 \\
	&= 12ab + 6a^2 - 2b^2 + \frac{1}{4}.
\end{align}
Recall that $c_1 = b_1 + b_2 = -2b$, so the minimum of $c_2$ is achieved when $b_1 = b_2$, thus $c_2 \geq 2b^2$. According to \eqref{eq: constraint 3}, we can reformulate the optimization problem as follows
\begin{align}
	\min -f(c_1, c_2, a_1) = -f(a, b) = -12ab - 6a^2 + 2b^2 - \frac{1}{4},
\end{align}
under the constraint
\begin{align}
	g(a, b) = c_2 - 2b^2 = \frac{1}{4} - 12a^2 - 4b^2 \geq 0. 
\end{align}
We define the Lagrangian $l:(a, b, \mu) \mapsto -12ab - 6a^2 + 2b^2 - \frac{1}{4} + \mu (\frac{1}{4} - 12a^2 - 4b^2)$, the two following conditions are necessary for reaching the minimum of $-f$
\begin{align}
	\frac{\partial l}{\partial a} &= -12b - 12a - 24 \mu a = 0, \\
	\frac{\partial l}{\partial b} &= -12a + 4b - 8 \mu b = 0.
\end{align}
By calculations, we have $b = a = 0$ or $\mu = 1$ or $-1$. In case $b = a = 0$, we have $A = 0$, it is already covered by Lemma \ref{lemma:B=0} and Lemma \ref{lemma:A=0}, we only need to examine the case where $\mu = 1$ or $-1$. When $\mu = 1$, we have $b = -3a$, $f(a, b) = -36a^2 + 6a^2 - 18a^2 + \frac{1}{4} \leq \frac{1}{4}$, thus this case is excluded. When $\mu = -1$, we have $b = a$, $f(a, b) = 16a^2 + \frac{1}{4}$. To attain the expected value $\frac{1}{2}$, $a = \frac{1}{8}$ or $-\frac{1}{8}$, we have $A = \operatorname{diag}\left( -\frac{3}{8}, \frac{1}{8}, \frac{1}{8}, \frac{1}{8}\right)$, $B = \operatorname{diag}\left(b_1, b_2, \frac{1}{8}, \frac{1}{8}\right)$ or $A = \operatorname{diag}\left( \frac{3}{8}, -\frac{1}{8}, -\frac{1}{8}, -\frac{1}{8}\right)$, $B = \operatorname{diag}\left(b_1, b_2, -\frac{1}{8}, -\frac{1}{8}\right)$. The case $a = \frac{1}{8}$ and the case $a = -\frac{1}{8}$ are similar, we will only work on the case $a = \frac{1}{8}$
\begin{align}
	b_1 + b_2 &= -\frac{1}{4}, \\
	b_1^2 + b_2^2 &= \frac{1}{32}. 
\end{align}
By calculations, we have $b_1 = b_2 = -\frac{1}{8}$. Recall that we perform a global phase change at first. Now, $A = e^{i\phi}\operatorname{diag}\left( -\frac{3}{8}, \frac{1}{8}, \frac{1}{8}, \frac{1}{8}\right)$, $B = e^{i\phi}\operatorname{diag}\left(-\frac{1}{8}, -\frac{1}{8}, \frac{1}{8}, \frac{1}{8}\right)$, $\phi \in \mathbb{R}$. The case $a = -\frac{1}{8}$ is already covered by the phase change $e^{i\phi}$. After verification, $\left| a_1 + b_1 \right|^2 + \left| a_1 + b_2 \right|^2 = \frac{1}{2}$, equations \eqref{eq:traceless condition} and \eqref{eq:frob norm condition} are indeed satisfied.

\section{Proof of Lemma \ref{lemma:2 by 2 block-diagonal}}\label{appendix: proof 2 by 2 block-diagonal}

We begin by reviewing how to prove the distillability conjecture.
\begin{align}
	A = 
	\begin{bmatrix}
		A_{11} & 0 \\
		0 & A_{22}
	\end{bmatrix}
	=
	\begin{bmatrix}
		a & a_{12} & 0 & 0 \\
		a_{21} & a & 0 & 0 \\
		0 & 0 & -a & a_{34} \\
		0 & 0 & a_{43} & -a
	\end{bmatrix},
	\quad
	B = 
	\begin{bmatrix}
		B_{11} & 0 \\
		0 & B_{22}
	\end{bmatrix}
	=
	\begin{bmatrix}
		b & b_{12} & 0 & 0 \\
		b_{21} & b & 0 & 0 \\
		0 & 0 & -b & b_{34} \\
		0 & 0 & b_{43} & -b
	\end{bmatrix}.
\end{align}
Then we calculate $X = A \otimes I + I \otimes B$, perform a permutation transformation and get $Y$
\begin{align}
	Y = \begin{bmatrix}
		Y_{11} & 0 & 0 & 0 \\
		0 & Y_{22} & 0 & 0 \\
		0 & 0 & Y_{33} & 0 \\
		0 & 0 & 0 & Y_{44}
	\end{bmatrix},
\end{align}
where
\begin{align}
	Y_{11} = 
	\begin{bmatrix}
		b+a & b_{12} & a_{12} & 0 \\
		b_{21} & b+a & 0 & a_{12} \\
		a_{21} & 0 & b+a & b_{12} \\
		0 & a_{21} & b_{21} & b+a
	\end{bmatrix}, 
	Y_{22} = 
	\begin{bmatrix}
		-b+a & b_{34} & a_{12} & 0 \\
		b_{43} & -b+a & 0 & a_{12} \\
		a_{21} & 0 & -b+a & b_{34} \\
		0 & a_{21} & b_{43} & -b+a
	\end{bmatrix}, \\
	Y_{33} = 
	\begin{bmatrix}
		b-a & b_{12} & a_{34} & 0 \\
		b_{21} & b-a & 0 & a_{34} \\
		a_{43} & 0 & b-a & b_{12} \\
		0 & a_{43} & b_{21} & b-a
	\end{bmatrix},
	Y_{44} = 
	\begin{bmatrix}
		-b-a & b_{34} & a_{34} & 0 \\
		b_{43} & -b-a & 0 & a_{34} \\
		a_{43} & 0 & -b-a & b_{34} \\
		0 & a_{43} & b_{43} & -b-a
	\end{bmatrix}.
\end{align}
We observe that 
\begin{align}
	Y_{11} = A_{11} \otimes I_2 + I_2 \otimes B_{11}, \\
	Y_{22} = A_{11} \otimes I_2 + I_2 \otimes B_{22}, \\
	Y_{33} = A_{22} \otimes I_2 + I_2 \otimes B_{11}, \\
	Y_{44} = A_{22} \otimes I_2 + I_2 \otimes B_{22}.
\end{align}
There are three different cases we have to consider
\begin{enumerate}[label = \roman*.]
	\item The largest two singular values are the two largest singular values of $Y_{11}$, \label{1st case: 2 2 bloc-diag}
	\item $\sigma_1(Y)$ is the largest singular value of $Y_{11}$, and $\sigma_2(Y)$ is the largest singular value of $Y_{22}$, \label{2nd case: 2 2 bloc-diag}
	\item $\sigma_1(Y)$ is the largest singular value of $Y_{11}$, and $\sigma_2(Y)$ is the largest singular value of $Y_{44}$. \label{3rd case: 2 2 bloc-diag}
\end{enumerate}

The saturation conditions for the three cases above are respectively demonstrated by Lemmas \ref{lemma: saturation of 1st case: 2 2 bloc-diag}, \ref{lemma: saturation of 2nd case: 2 2 bloc-diag} and \ref{lemma: saturation of 3rd case: 2 2 bloc-diag}.

\textbf{Initially, we consider under what conditions the inequality is saturated in case \ref{1st case: 2 2 bloc-diag}}

\begin{lemma}\label{lemma: saturation of 1st case: 2 2 bloc-diag}
    In case \ref{1st case: 2 2 bloc-diag}, $\sigma_1^2(X) + \sigma_2^2(X) = \frac{1}{2}$ implies that $A$ and $B$ satisfy one of the following conditions:
    \begin{enumerate}
    	\item $A = \begin{bmatrix}
    		0 & a_{12} & 0 & 0 \\
    		a_{21} & 0 & 0 & 0 \\
    		0 & 0 & 0 & 0 \\
    		0 & 0 & 0 & 0
    	\end{bmatrix}, 
    	B = \begin{bmatrix}
    		0 & b_{12} & 0 & 0 \\
    		b_{21} & 0 & 0 & 0 \\
    		0 & 0 & 0 & 0 \\
    		0 & 0 & 0 & 0
    	\end{bmatrix}$, and $a_{12}a_{21} = b_{12}b_{21}$. \label{case: first case of case i} \\
    	\item $A = \begin{bmatrix}
    		a & 0 & 0 & 0 \\
    		0 & a & 0 & 0 \\
    		0 & 0 & -a & 0 \\
    		0 & 0 & 0 & -a
    	\end{bmatrix}, 
    	B = \begin{bmatrix}
    		a & b_{12} & 0 & 0 \\
    		b_{21} & a & 0 & 0 \\
    		0 & 0 & -a & 0 \\
    		0 & 0 & 0 & -a
    	\end{bmatrix}$, and $b_{12}b_{21} = 4a^2$. \label{case: second case of case i}
    \end{enumerate}
\end{lemma}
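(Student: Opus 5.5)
In case i the two largest singular values of $X$ are the two largest singular values of $Y_{11}=A_{11}\otimes I_2+I_2\otimes B_{11}$. The plan is to bound $\sigma_1^2(Y_{11})+\sigma_2^2(Y_{11})$ by a chain of elementary inequalities and read off what forces equality in each link.

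\textbf{Step 1: reduce to the $2\times 2$ problem.} First I would record the norm bookkeeping. Write $A_{11}=aI_2+A_0$ and $B_{11}=bI_2+B_0$ with $A_0,B_0$ traceless. Then
\begin{align}
\|A\|_F^2+\|B\|_F^2=4|a|^2+4|b|^2+\|A_0\|_F^2+\|A_{22}^0\|_F^2+\|B_0\|_F^2+\|B_{22}^0\|_F^2=\tfrac14 .
\end{align}
On the other hand,
\begin{align}
Y_{11}=(a+b)I_4+Z, \qquad Z=A_0\otimes I_2+I_2\otimes B_0 .
\end{align}
Here $Z$ is traceless and $\|Z\|_F^2=2\|A_0\|_F^2+2\|B_0\|_F^2$, because the cross term vanishes when $A_0$ and $B_0$ are traceless. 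This gives
\begin{align}
\sigma_1^2(Y_{11})+\sigma_2^2(Y_{11})\le \tfrac12\|Y_{11}\|_F^2+\tfrac12\big(\sigma_1^2-\sigma_4^2+\sigma_2^2-\sigma_3^2\big).
\end{align}
A cleaner route is to use the (already proved) validity of the conjecture for $2\times2$ block-diagonal pairs, applied to the pair $(A_{11}\oplus 0,\;B_{11}\oplus 0)$ after rescaling. That pair has the same $\sigma_1,\sigma_2$ as $Y_{11}$, but it is not traceless unless $a=b=0$. So I would treat the scalar parts separately.

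\textbf{Step 2: the case $a=b=0$.} Here $Y_{11}=A_0\otimes I+I\otimes B_0$, and I would compute its spectrum explicitly. If $A_0$ has eigenvalues $\pm\alpha$ and $B_0$ has eigenvalues $\pm\beta$, the eigenvalues of $Y_{11}$ are $\pm\alpha\pm\beta$, while its singular values are controlled through $Y_{11}^*Y_{11}$. The key identity to use is
\begin{align}
\sigma_1^2+\sigma_2^2\le \tfrac12\|Y_{11}\|_F^2+\big\|\,\cdot\,\big\|,
\end{align}
or, more concretely, the vectorization form used in Lemma~\ref{lemma:B and transpose of -A equivalent}:
\begin{align}
\sigma_1^2+\sigma_2^2=\max\big(\|B_0V_1+V_1A_0^\top\|_F^2+\|B_0V_2+V_2A_0^\top\|_F^2\big)
\end{align}
over orthonormal pairs $V_1,V_2\in M_2$. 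Bounding each term by $\|B_0\|_F^2+\|A_0\|_F^2$, plus cross terms that must cancel, gives a total of at most $2(\|A_0\|_F^2+\|B_0\|_F^2)\le\tfrac12$. Equality then forces two things: all the norm must sit in $A_0,B_0$, so $A_{22}=B_{22}=0$; and the two extremal $V_i$ must saturate the cross terms. I expect the latter to yield $\det A_0=\det B_0$, i.e.\ $a_{12}a_{21}=b_{12}b_{21}$, since the maximizing $V$ pair aligns the eigenvalues $\pm\alpha$ with $\pm\beta$. This is condition 1.

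\textbf{Step 3: the case $(a,b)\neq 0$ (the main obstacle).} The scalar shift $(a+b)I_4$ adds $|a+b|^2$ per singular value and a cross term $2\operatorname{Re}\overline{(a+b)}\,\langle\cdot\rangle$. The budget is $4|a|^2+4|b|^2$ counted against $2|a+b|^2\le 4|a|^2+4|b|^2$, with equality iff $a=b$. I would therefore first show that $a=b$ is forced.

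Next I would show that the remaining budget forces $A_0=0$, $B_{22}^0=A_{22}^0=0$, and $B_0$ with $b_{12}b_{21}=4a^2$. To get these, I would analyse $Y_{11}=2aI+I\otimes B_0$ directly. Its singular values are those of $2aI_2+B_0$, each doubled, so
\begin{align}
\sigma_1^2+\sigma_2^2=2\sigma_1^2(2aI+B_0)\le 2\|2aI+B_0\|_F^2 ,
\end{align}
with equality iff $2aI+B_0$ has rank one, i.e.\ $\det(2aI+B_0)=0$. This is precisely $4a^2=b_{12}b_{21}$. Checking that the full budget $\tfrac14$ is exhausted then gives condition 2.

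The delicate part is justifying that $A_0\neq0$ together with $a\neq0$ cannot saturate. I would handle this by a Lagrange/perturbation comparison analogous to Appendix~\ref{sec:proof of lemma diag}: show that the bound obtained is strictly below $\tfrac12$ unless one of the two mixed contributions vanishes.
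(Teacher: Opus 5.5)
There is a genuine gap, and Step 3 sets out to prove a false claim.

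\textbf{The core inequality is never set up.} You already have the ingredients: $\|Z\|_F^2=2\|A_0\|_F^2+2\|B_0\|_F^2$ and $|a+b|^2\le 2|a|^2+2|b|^2$. But your Step 1 display is an identity. Step 2 replaces the argument by ``cross terms that must cancel'' and ``I expect \dots\ $\det A_0=\det B_0$''. Likewise, ``first show $a=b$'' in Step 3 is never justified. The paper's proof is the one-line chain
\begin{align*}
\sigma_1^2(Y_{11})+\sigma_2^2(Y_{11})\le\|Y_{11}\|_F^2=4|a+b|^2+2\|A_0\|_F^2+2\|B_0\|_F^2\le 2\big(\|A\|_F^2+\|B\|_F^2\big)=\tfrac12 .
\end{align*}
Hence saturation holds iff $a=b$, $a_{34}=a_{43}=b_{34}=b_{43}=0$, and $\operatorname{rank}Y_{11}\le 2$. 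The paper then reads the cases off the vanishing $3\times 3$ minors of $Y_{11}$:
\begin{itemize}
\item $a=0$ with $a_{12}a_{21}=b_{12}b_{21}$;
\item $a_{12}=a_{21}=0$ with $b_{12}b_{21}=4a^2$;
\item $b_{12}=b_{21}=0$ with $a_{12}a_{21}=4a^2$.
\end{itemize}
Equivalently, $Y_{11}$ has eigenvalues $2a\pm\alpha\pm\beta$ with $\alpha^2=a_{12}a_{21}$ and $\beta^2=b_{12}b_{21}$, and at least two of them must vanish. Your rank-one computation for $2aI_2+B_0$ is exactly this mechanism. It has to be applied to $Y_{11}$ in general, not only after $A_0=0$ has been assumed.

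\textbf{The claim in Step 3 is false.} You assert that $A_0\neq 0$ together with $a\neq 0$ cannot saturate. Take $B=\operatorname{diag}(\tfrac18,\tfrac18,-\tfrac18,-\tfrac18)$ and $A=\begin{bmatrix}1/8&1/4\\1/4&1/8\end{bmatrix}\oplus(-\tfrac18 I_2)$. Then $Y_{11}=(2aI_2+A_0)\otimes I_2$, where $2aI_2+A_0$ has rank one and $\|2aI_2+A_0\|_F^2=\tfrac14$. So the two largest singular values of $X$ both equal $\tfrac12$ and come from $Y_{11}$; the other blocks have singular values at most $\tfrac14$. This is a case i saturation with $a=\tfrac18$ and $A_0\neq 0$, so no Lagrange or perturbation comparison can exclude it.

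What is true is that $A_0$ and $B_0$ cannot both be nonzero when $a\neq 0$. If $\beta=0$, $\alpha=\pm 2a$ and $B_0\neq 0$, then $2aI_2+A_0$ is diagonalizable with eigenvalues $0$ and $4a$. Hence $Y_{11}$ is similar to $B_0\oplus(4aI_2+B_0)$, which has rank $3$.

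The surviving case $B_0=0$, $a_{12}a_{21}=4a^2$ is condition 2 with $A$ and $B$ exchanged. Conjugating $X$ by the swap operator turns $A\otimes I+I\otimes B$ into $B\otimes I+I\otimes A$, so this exchange preserves the singular values. The paper explicitly identifies the two cases this way. Condition 2 must be read up to this exchange, and your argument has to invoke it rather than trying to force $A_0=0$.
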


\begin{proof}
    We consider
    \begin{align}
    	\sigma_1(Y)^2 + \sigma_2(Y)^2 &= \sigma_1(Y_{11})^2 + \sigma_2(Y_{11})^2 \leq \left\| Y_{11} \right\|_F^2 \label{ineq: 1st case 1} \\ 
    	&= 4\left| b+a \right|^2 + 2(\left| a_{12} \right|^2 + \left| a_{21} \right|^2 + \left| b_{12} \right|^2 + \left| b_{21} \right|^2) \\
    	&\leq 8 (\left| a \right|^2 + \left| b \right|^2) + 2(\left| a_{12} \right|^2 + \left| a_{21} \right|^2 + \left| b_{12} \right|^2 + \left| b_{21} \right|^2) \leq 2 (\left\| A \right\|_F^2 + \left\| B \right\|_F^2) = \frac{1}{2}. \label{ineq: 1st case 2}
    \end{align}
    The two inequalities in \eqref{ineq: 1st case 2} take equalities if and only if $a_{34} = a_{43} = b_{34} = b_{43} = \left| b-a \right| = 0 $. The inequality in \eqref{ineq: 1st case 1} takes equality if and only if $\operatorname{rank} Y_{11} \leq 2$, which is equivalent to all third-order minors of $Y_{11}$ being zero, we obtain 3 cases
    \begin{enumerate}
    	\item $a=0, a_{12}a_{21} = b_{12}b_{21}$.
    	\item $a_{12} = a_{21} = 0, b_{12}b_{21} = 4a^2$.
    	\item $b_{12} = b_{21} = 0, a_{12}a_{21} = 4a^2$.
    \end{enumerate}
    We can verify that all of the three cases above saturate the original conjecture. In fact, the second and the third one are equivalent by exchanging $A$ and $B$. Thus, up to unitary similarity, we obtain the following two cases while excluding the case we already discuss.
    \begin{enumerate}
    	\item $A = \begin{bmatrix}
    		0 & a_{12} & 0 & 0 \\
    		a_{21} & 0 & 0 & 0 \\
    		0 & 0 & 0 & 0 \\
    		0 & 0 & 0 & 0
    	\end{bmatrix}, 
    	B = \begin{bmatrix}
    		0 & b_{12} & 0 & 0 \\
    		b_{21} & 0 & 0 & 0 \\
    		0 & 0 & 0 & 0 \\
    		0 & 0 & 0 & 0
    	\end{bmatrix}$, and $a_{12}a_{21} = b_{12}b_{21}$. \\
    	\item $A = \begin{bmatrix}
    		a & 0 & 0 & 0 \\
    		0 & a & 0 & 0 \\
    		0 & 0 & -a & 0 \\
    		0 & 0 & 0 & -a
    	\end{bmatrix}, 
    	B = \begin{bmatrix}
    		a & b_{12} & 0 & 0 \\
    		b_{21} & a & 0 & 0 \\
    		0 & 0 & -a & 0 \\
    		0 & 0 & 0 & -a
    	\end{bmatrix}$, and $b_{12}b_{21} = 4a^2$.
    \end{enumerate}
\end{proof}

\textbf{Secondly, we consider under what condition the inequality is saturated in case \ref{2nd case: 2 2 bloc-diag}}

\begin{lemma}\label{lemma: saturation of 2nd case: 2 2 bloc-diag}
    In case \ref{2nd case: 2 2 bloc-diag}, the saturation condition is included in case \ref{1st case: 2 2 bloc-diag}.
\end{lemma}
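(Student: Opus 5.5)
In case ii we have $\sigma_1(Y)=\sigma_1(Y_{11})$ and $\sigma_2(Y)=\sigma_1(Y_{22})$. The plan is to bound each of these by a Frobenius-type quantity, then show that equality forces one of the two blocks to carry everything, which puts us back in case i.

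\textbf{Step 1: bound each block.} By Lemma \ref{lemma:B=0}-type reasoning applied blockwise, $\sigma_1^2(Y_{jj})\le \|Y_{jj}\|_F^2/2$ whenever $Y_{jj}$ has rank at least two. Instead, we use the conjecture for $2\times 2$ traceless blocks, which is the block-diagonal case already verified in \cite{LIU202507}. This gives
\[
\sigma_1^2(A_{11}\otimes I_2+I_2\otimes B)\le 2\big(\|A_{11}\|_F^2+\|B\|_F^2\big)
\]
for traceless $2\times 2$ $A_{11},B$. The traceless parts are $A_{11}-aI$, $B_{11}-bI$ and $B_{22}+bI$. Writing $\tilde A_{11}=A_{11}-aI$, $\tilde B_{11}=B_{11}-bI$ and $\tilde B_{22}=B_{22}+bI$, we get
\[
Y_{11}=(a+b)I_4+\tilde A_{11}\otimes I+I\otimes \tilde B_{11},\qquad Y_{22}=(a-b)I_4+\tilde A_{11}\otimes I+I\otimes\tilde B_{22}.
\]
For traceless $2\times2$ $P,Q$, the Kronecker sum $P\otimes I+I\otimes Q$ has eigenvalues $\pm\lambda\pm\mu$ and $\sigma_1^2\le 2(\|P\|_F^2+\|Q\|_F^2)/2\cdot$(const). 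I will take the sharp form
\[
\sigma_1^2(cI+P\otimes I+I\otimes Q)\le |c|^2+\|P\|_F^2+\|Q\|_F^2+2|c|\,(\cdots).
\]
To be safe, I would rather use the cruder bound $\sigma_1^2(Y_{jj})\le \tfrac12\|Y_{jj}\|_F^2+(\text{correction})$, proved via the pairing $\sigma_1^2\le \|M\|_F^2-\sigma_2^2$ together with the $\pm$ symmetry of the spectrum of a traceless Kronecker sum.

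\textbf{Step 2: sum and compare with the constraint.} Adding the two bounds gives $4|a|^2$ terms and $\|\tilde A_{11}\|_F^2$ counted twice, while $\tilde B_{11}$ and $\tilde B_{22}$ each appear once. Compare this with $2(\|A\|_F^2+\|B\|_F^2)=\tfrac12$, in which $\tilde A_{11}$ appears only once in $\|A\|_F^2$. The inequality still closes only because each block bound really has $\sigma_1^2\le\tfrac12\|\cdot\|_F^2$-type weight on the shared part. Saturation then forces every intermediate inequality to be tight.

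\textbf{Step 3: extract the equality conditions.} The tightness conditions are: $\tilde A_{22}=0$; the $a$-weights matching, which forces $b=0$ or $a=0$; and equality in each block estimate. In each subcase I will show that one of $Y_{11}$ or $Y_{22}$ has rank at most 2 and carries $\sigma_1$ and $\sigma_2$ together. Alternatively, $B_{11}$ and $B_{22}$ turn out to agree up to the permutation swapping the blocks of $B$, so that $\sigma_1(Y_{22})$ is also $\sigma_2(Y_{11})$. Either way the configuration reduces to case i, and Lemma \ref{lemma: saturation of 1st case: 2 2 bloc-diag} then applies.

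\textbf{Main obstacle.} The hard part is a sharp per-block estimate for $\sigma_1^2(cI+P\otimes I+I\otimes Q)$ with the right weights, together with a description of its equality cases. I would first compute $Y_{jj}^*Y_{jj}$ explicitly in the $2\times2$ setting, reducing to a $4\times4$ Hermitian problem. If that gets too messy, I would fall back on the Böttcher–Wenzel-type Lemma \ref{lemma: saturation of BW inequality with M_2 Y transpose} to control the cross terms.
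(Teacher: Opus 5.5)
\textbf{There is a genuine gap.} The estimate the whole argument rests on is exactly the one you defer as the ``main obstacle.'' The bounds you do write down are either false or too weak.

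\textbf{Step 1 does not work as written.} The opening claim $\sigma_1^2(Y_{jj})\le\frac12\|Y_{jj}\|_F^2$ whenever $\operatorname{rank}Y_{jj}\ge 2$ is false. Taking $A_{11}=B_{11}=\operatorname{diag}(1,0)$ gives $Y_{11}=\operatorname{diag}(2,1,1,0)$, with $\sigma_1^2=4>3$.

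Your replacement $\sigma_1^2(A_{11}\otimes I_2+I_2\otimes B)\le 2(\|A_{11}\|_F^2+\|B\|_F^2)$ follows from the triangle inequality alone. It puts weight $2$ on the shared factor. Adding the bounds for $Y_{11}$ and $Y_{22}$ therefore spends $4\|\tilde A_{11}\|_F^2$, while $2(\|A\|_F^2+\|B\|_F^2)$ supplies only $2\|\tilde A_{11}\|_F^2$. The ``sharp form'' with $(\cdots)$ and the bound with a ``correction'' term are not actual statements. The $\pm\lambda\pm\mu$ symmetry concerns eigenvalues, is destroyed by the shift $cI$, and gives nothing about singular values.

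\textbf{The missing idea is the paper's per-block estimate.} Set $M=aI+B_{11}$, so $Y_{11}=\begin{bmatrix}M&a_{12}I\\a_{21}I&M\end{bmatrix}$. The diagonal blocks of the positive semidefinite matrix $Y_{11}Y_{11}^*$ are $|a_{12}|^2I+MM^*$ and $|a_{21}|^2I+MM^*$. Bound $\lambda_{\max}(Y_{11}Y_{11}^*)$ by the sum of the largest eigenvalues of these two blocks, and use $\lambda_{\max}(MM^*)\le\|M\|_F^2$. This gives $\sigma_1^2(Y_{11})\le|a_{12}|^2+|a_{21}|^2+2\|aI+B_{11}\|_F^2$, and likewise $\sigma_1^2(Y_{22})\le|a_{12}|^2+|a_{21}|^2+2\|aI+B_{22}\|_F^2$. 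These carry exactly the weights your Step 2 needs: one on the shared off-diagonal part of $A_{11}$, two on the rest. Combined with $|a+b|^2+|a-b|^2=2|a|^2+2|b|^2$, the sum is at most $\frac12$. Equality forces, in particular, $a_{34}=a_{43}=0$, $(a+b)^2=b_{12}b_{21}$ and $(a-b)^2=b_{34}b_{43}$.

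\textbf{Step 3 is asserted rather than derived.} Since $4|a+b|^2+4|a-b|^2=8|a|^2+8|b|^2$ holds identically, there is no ``matching of $a$-weights,'' and nothing forces $ab=0$ at that stage. The relation between $a$ and $b$ only comes out of a second, finer bound. The paper uses that $P=aI+B_{11}$ and $Q=aI+B_{22}$ are now rank one. It splits $Y_{11}^*Y_{11}=C_{11}^{(1)}+C_{11}^{(2)}$ via the SVD of $P$ (and similarly for $Y_{22}$), and demands a common top eigenvector. This yields $|a_{12}a_{21}|=p^2=q^2$ and normality of $P$ and $Q$, i.e.\ $b_{21}=\bar b_{12}$ and $b_{43}=\bar b_{34}$. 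The subcases $p=0$ or $q=0$ are handled separately.

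\textbf{The endpoint you aim for is also not what occurs.} In the nontrivial saturating configuration the paper obtains, $B_{11}$ and $B_{22}$ become equal after a diagonal unitary similarity. Then $Y_{11}=Y_{22}$ has rank three, and both top singular values of $X$ equal $\sigma_1(Y_{11})$. So neither ``one block of rank at most two carries $\sigma_1$ and $\sigma_2$'' nor ``$\sigma_1(Y_{22})=\sigma_2(Y_{11})$'' holds. Reaching case i requires a unitary similarity of $B$ that mixes its two diagonal blocks, bringing it to $\operatorname{diag}(|c|,|c|,-|c|,-|c|)$. One then imposes the rank-one condition $a_{12}a_{21}=4|c|^2$ on $\begin{bmatrix}2|c|&a_{12}\\a_{21}&2|c|\end{bmatrix}$.
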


\begin{proof}
    We first review how to prove case \ref{2nd case: 2 2 bloc-diag}, and find out some necessary conditions of saturation.

    We consider
    \begin{align}
    	Z_{11} = Y_{11} {Y_{11}}^* = 
    	\begin{bmatrix}
    		{Z_{11}}^{(11)} & {Z_{11}}^{(12)} \\
    		{{Z_{11}}^{(12)}}^* & {Z_{11}}^{(22)}
    	\end{bmatrix},
    	\quad
    	Z_{22} = Y_{22} {Y_{22}}^* = 
    	\begin{bmatrix}
    		{Z_{22}}^{(11)} & {Z_{22}}^{(12)} \\
    		{{Z_{22}}^{(12)}}^* & {Z_{22}}^{(22)}
    	\end{bmatrix},
    \end{align}
    where
    \begin{align}
    	{Z_{11}}^{(11)} = 
    	\begin{bmatrix}
    		|b + a|^2 + |b_{12}|^2 + |a_{12}|^2 & (b + a) \bar b_{21} + (\bar b + \bar a) b_{12} \\
    		(\bar b + \bar a) b_{21} + (b + a) \bar b_{12} & |b + a|^2 + |b_{21}|^2 + |a_{12}|^2
    	\end{bmatrix}, \\
    	{Z_{11}}^{(22)} = 
    	\begin{bmatrix}
    		|b + a|^2 + |b_{12}|^2 + |a_{21}|^2 & (b + a) \bar b_{21} + (\bar b + \bar a) b_{12} \\
    		(\bar b + \bar a) b_{21} + (b + a) \bar b_{12} & |b + a|^2 + |b_{21}|^2 + |a_{21}|^2
    	\end{bmatrix}, \\
    	{Z_{22}}^{(11)} = 
    	\begin{bmatrix}
    		|a - b|^2 + |b_{34}|^2 + |a_{12}|^2 & (a - b) \bar b_{43} + (\bar a - \bar b) b_{34} \\
    		(\bar a - \bar b) b_{43} + (a - b) \bar b_{34} & |a - b|^2 + |b_{43}|^2 + |a_{12}|^2 
    	\end{bmatrix}, \\
    	{Z_{22}}^{(22)} = 
    	\begin{bmatrix}
    		|a - b|^2 + |b_{34}|^2 + |a_{21}|^2 & (a - b) \bar b_{43} + (\bar a - \bar b) b_{34} \\
    		(\bar a - \bar b) b_{43} + (a - b) \bar b_{34} & |a - b|^2 + |b_{43}|^2 + |a_{21}|^2
    	\end{bmatrix}.
    \end{align}
    The largest eigenvalue of a positive semidefinite matrix is bounded above by the sum of the largest eigenvalues of two diagonal blocks of the positive semidefinite matrix.
    \begin{align}
    	\sigma_1(Y_{11})^2 + \sigma_1(Y_{22})^2 &= \lambda_{\max}(Z_{11}) + \lambda_{\max}(Z_{22}) \\ 
    	&\leq \lambda_{\max}({Z_{11}}^{(11)}) + \lambda_{\max}({Z_{11}}^{(22)}) + \lambda_{\max}({Z_{22}}^{(11)}) + \lambda_{\max}({Z_{22}}^{(22)}). \label{ineq: 2nd case 1}
    \end{align}
    Without loss of generality, we can only work with $Z_{11}^{(11)}$ and apply the result to $Z_{11}^{(22)}$, $Z_{22}^{(11)}$ and $Z_{22}^{(22)}$. We first decompose $Z_{11}^{(11)}$ into the following form
    \begin{align}
    	Z_{11}^{(11)} = | a_{12} |^2 I_2 + \tilde Z_{11}^{(11)}.
    \end{align}
    Since 
    \begin{align}
    	\det \tilde Z_{11}^{(11)} &= (|b + a|^2 + |b_{12}|^2)(|b + a|^2 + |b_{21}|^2) - |(b + a) \bar b_{21} + (\bar b + \bar a) b_{12}|^2 \\
    	&\geq (|b + a|^2 + |b_{12}|^2)(|b + a|^2 + |b_{21}|^2) - (|(b + a) \bar b_{21}| + |(\bar b + \bar a) b_{12}|)^2 \label{ineq: 2nd case 6} \\
    	&= |b+a|^4 + |b_{12} b_{21}|^2 - 2|b + a|^2 |b_{21}||b_{12}| \geq 0, \label{ineq: 2nd case 7}
    \end{align}
    and $\operatorname{Tr} \tilde Z_{11}^{(11)} \geq 0$, $\tilde Z_{11}^{(11)}$ is positive semi-definite, we have
    \begin{align}
    	\lambda_{\max}(Z_{11}^{(11)}) \leq |a_{12}|^2 + 2 |b+a|^2 + |b_{12}|^2 + |b_{21}|^2. \label{ineq: 2nd case 2}
    \end{align}
    Similarly, we obtain
    \begin{align}
    	\lambda_{\max}(Z_{11}^{(22)}) &\leq |a_{21}|^2 + 2 |b+a|^2 + |b_{12}|^2 + |b_{21}|^2, \label{ineq: 2nd case 3}\\
    	\lambda_{\max}(Z_{22}^{(11)}) &\leq |a_{12}|^2 + 2 |b-a|^2 + |b_{34}|^2 + |b_{43}|^2, \label{ineq: 2nd case 4}\\
    	\lambda_{\max}(Z_{22}^{(22)}) &\leq |a_{21}|^2 + 2 |b-a|^2 + |b_{34}|^2 + |b_{43}|^2. \label{ineq: 2nd case 5}
    \end{align}
    Finally, we have
    \begin{align}
    	\sigma_1(Y_{11})^2 + \sigma_1(Y_{22})^2 &\leq \lambda_{\max}({Z_{11}}^{(11)}) + \lambda_{\max}({Z_{11}}^{(22)}) + \lambda_{\max}({Z_{22}}^{(11)}) + \lambda_{\max}({Z_{22}}^{(22)}) \\
    	&\leq 4|b+a|^2 + 4|b-a|^2 + 2(|b_{34}|^2 + |b_{43}|^2 + |b_{12}|^2 + |b_{21}|^2 + |a_{21}|^2 + |a_{21}|^2) \leq \frac{1}{2}.
    \end{align}
    
    The sum of the two largest singular values squared equals to $\frac{1}{2}$ if and only if \eqref{ineq: 2nd case 1}, \eqref{ineq: 2nd case 2}, \eqref{ineq: 2nd case 3}, \eqref{ineq: 2nd case 4} and \eqref{ineq: 2nd case 5} hold with equalities. \eqref{ineq: 2nd case 1} is rather difficult to tackle, we ignore it for now. Regarding \eqref{ineq: 2nd case 2}-\eqref{ineq: 2nd case 5}, it suffices to obtain under what conditions \eqref{ineq: 2nd case 2} holds with equality and apply the same method to \eqref{ineq: 2nd case 3}-\eqref{ineq: 2nd case 5}.
    
    Actually, Eq. \eqref{ineq: 2nd case 2} holds with equality if and only if \eqref{ineq: 2nd case 6} and \eqref{ineq: 2nd case 7} hold with equalities, which is equivalent to
    \begin{align}
    	&(b+a)\bar b_{21} \text{ and } (\bar b + \bar a)b_{12} \text{ are of the same phase,} \\
    	&|b+a|^2 - |b_{12}b_{21}| = 0.
    \end{align}
    Let $m_1 = |b+a|, m_2 = |b_{12}|, m_3 = |b_{21}|, e^{i\varphi_1} = \frac{b+a}{|b+a|}, e^{i\varphi_2} = \frac{b_{12}}{|b_{12}|}, e^{i\varphi_3} = \frac{b_{21}}{|b_{21}|}$. We have
    \begin{align}
    	m_1 m_3 &= m_1 m_2, \\
    	\varphi_1 - \varphi_3 &= \varphi_2 - \varphi_1, \\
    	m_1^2 &= m_2 m_3.
    \end{align}
    So $2\varphi_1 = \varphi_2 + \varphi_3$, $b+a = \sqrt{b_{12} b_{21}}$. We apply the same method to \eqref{ineq: 2nd case 3}-\eqref{ineq: 2nd case 5}, and obtain
    \begin{align}
    	b+a &= \sqrt{b_{12} b_{21}}, \label{eq: b+a}\\
    	a-b &= \sqrt{b_{34} b_{43}}. \label{eq: b-a}
    \end{align}
    These two equalities above are necessary for the sum of the two largest singular values squared to equal $\frac{1}{2}$. 
    
    We take advantage of these two equalities and prove the case \ref{2nd case: 2 2 bloc-diag} of the distillability conjecture in a new way. With this new method, we find out more necessary conditions of saturation.
    
    Let us first reformulate $\left\| A \right\|_F^2 + \left\| B \right\|_F^2 = \frac{1}{4}$. Note that $a_{34} = a_{43} = 0$ because $a_{34}$ and $a_{43}$ are not present in either $Y_{11}$ or $Y_{22}$. We have
    \begin{equation}
    	\begin{aligned}
    		\frac{1}{4} &= 4|a|^2 + 4|b|^2 + |a_{12}|^2 + |a_{21}|^2 + |b_{12}|^2 + |b_{21}|^2 + |b_{34}|^2 + |b_{43}|^2 \\
    		&= 2 (|a + b|^2 + |a - b|^2) + |a_{12}|^2 + |a_{21}|^2 + |b_{12}|^2 + |b_{21}|^2 + |b_{34}|^2 + |b_{43}|^2 \\
    		&= 2|b_{12} b_{21}| + 2|b_{34} b_{43}| + |a_{12}|^2 + |a_{21}|^2 + |b_{12}|^2 + |b_{21}|^2 + |b_{34}|^2 + |b_{43}|^2 \\
    		&= (|b_{12}| + |b_{21}|)^2 + (|b_{34}| + |b_{43}|)^2 + |a_{12}|^2 + |a_{21}|^2. \label{eq: norm cond}
    	\end{aligned}
    \end{equation}
    Then we partition $Y_{11}$ and $Y_{22}$ as follows,
    \begin{align}
    	Y_{11} = \begin{bmatrix}
    		P & a_{12} I \\
    		a_{21} I & P
    	\end{bmatrix}, \quad
    	Y_{22} = \begin{bmatrix}
    		Q & a_{12} I \\
    		a_{21} I & Q
    	\end{bmatrix},
    \end{align}
    with 
    \begin{align}
    	P = \begin{bmatrix}
    		\sqrt{b_{21} b_{12}} & b_{12} \\
    		b_{21} & \sqrt{b_{21} b_{12}}
    	\end{bmatrix}, \quad
    	Q = \begin{bmatrix}
    		\sqrt{b_{43} b_{34}} & b_{34} \\
    		b_{43} & \sqrt{b_{43} b_{34}}
    	\end{bmatrix}.
    \end{align}
    We notice that $P$ and $Q$'s determinants are zero, thus $P$ and $Q$ both have only one nonzero singular value. We perform singular value decompositions to $P$ and $Q$, 
    \begin{align}
    	P = T \Sigma_p U^*, &\quad Q = V \Sigma_q W^*, \label{eq: normalPQ}\\
    	\Sigma_p = \mathrm{diag}(p, 0), &\quad \Sigma_q = \mathrm{diag}(q, 0), \\
    	p = \left\| P \right\|_F = |b_{12}| + |b_{21}|, &\quad q = \left\| Q \right\|_F = |b_{34}| + |b_{43}|. \label{eq: b_{ij} and p,q}
    \end{align}
    There are three cases, (1) $p = q = 0$, (2) $p \neq 0, q \neq 0$, and (3) $p \text{ or } q = 0$.
    
    \begin{enumerate}
    	\item[(1)] We first work with the case where $p = q = 0$.
    \end{enumerate}
    
    This means $|b_{12}| = |b_{21}| = |b_{34}| = |b_{43}| = 0$. Recall that $b+a = \sqrt{b_{12}b_{21}}$ and $b-a = \sqrt{b_{34}b_{43}}$, so $b = a = 0$. Now $\sigma_1(Y_{11}) = \sigma_1(Y_{22}) = \max(|a_{12}|, |a_{21}|)$. Thus, $\sigma_1^2(Y_{11}) + \sigma_1^2(Y_{22}) = \frac{1}{2}$ if and only if $|a_{12}| = \frac{1}{2}, a_{21} = 0$ or $|a_{21}| = \frac{1}{2}, a_{12} = 0$. In fact, $|a_{12}| = \frac{1}{2}, a_{21} = 0$ and $|a_{21}| = \frac{1}{2}, a_{12} = 0$ are equivalent by taking the transpose of $A$. In this case, $A$ has only one nonzero entry and $B = 0$, which is already covered by the previous discussion
    \begin{align}
    	A = \begin{bmatrix}
    		0 & a_{12} & 0 & 0 \\
    		0 & 0 & 0 & 0 \\
    		0 & 0 & 0 & 0 \\
    		0 & 0 & 0 & 0
    	\end{bmatrix}, B = 0, |a_{12}| = \frac{1}{2}. 
    \end{align}
    
    \begin{enumerate}
    	\item[(2)] Then, we work with the case where $p$ and $q$ are both nonzero. 
    \end{enumerate}
    
    We consider $C_{11} = Y_{11}^* Y_{11}$. By calculations, we have
    \begin{align}
    	C_{11} &= Y_{11}^* Y_{11} = \begin{bmatrix}
    		P^* P + |a_{21}|^2 I & a_{12} P^* + \bar a_{21} P \\
    		\bar a_{12} P + a_{21} P^* & |a_{12}|^2 I + P^* P
    	\end{bmatrix} \label{eq: C11}\\
    	&= \begin{bmatrix}
    		P^* P & a_{12} P^* \\
    		\bar a_{12} P & |a_{12}|^2 I
    	\end{bmatrix}
    	+
    	\begin{bmatrix}
    		|a_{21}|^2 I & \bar a_{21} P \\
    		a_{21} P^* & P^* P
    	\end{bmatrix} \\
    	&= \underbrace{\begin{bmatrix}
    			U & 0 \\
    			0 & T
    		\end{bmatrix}
    		\begin{bmatrix}
    			\Sigma_p^2 & a_{12} \Sigma_p \\
    			\bar a_{12} \Sigma_p & |a_{12}|^2 I 
    		\end{bmatrix}
    		\begin{bmatrix}
    			U^* & 0 \\
    			0 & T^*
    	\end{bmatrix}}_{C_{11}^{(1)}}
    	+
    	\underbrace{\begin{bmatrix}
    			T & 0 \\
    			0 & U
    		\end{bmatrix}
    		\begin{bmatrix}
    			|a_{21}|^2 I & \bar a_{21} \Sigma_p \\
    			a_{21} \Sigma_p & \Sigma_p^2 
    		\end{bmatrix}
    		\begin{bmatrix}
    			T^* & 0 \\
    			0 & U^*
    	\end{bmatrix}}_{C_{11}^{(2)}}.
    \end{align}
    By permutation similarity, we have 
    \begin{align}
    	&\begin{bmatrix}
    		\Sigma_p^2 & a_{12} \Sigma_p \\
    		\bar a_{12} \Sigma_p & |a_{12}|^2 I 
    	\end{bmatrix} \text{ is equivalent to } 
    	\begin{bmatrix}
    		p^2 & a_{12} p & 0 & 0 \\
    		\bar a_{12} p & |a_{12}|^2 & 0 & 0 \\
    		0 & 0 & 0 & 0 \\
    		0 & 0 & 0 & |a_{12}|^2
    	\end{bmatrix}, \\
    	&\text{ whose largest eigenvalues are } |a_{12}|^2 + p^2,\\
    	&\begin{bmatrix}
    		|a_{21}|^2 I & \bar a_{21} \Sigma_p \\
    		a_{21} \Sigma_p & \Sigma_p^2 
    	\end{bmatrix} \text{ is equivalent to } 
    	\begin{bmatrix}
    		|a_{21}|^2 & 0 & 0 & 0 \\
    		0 & |a_{21}|^2 & \bar a_{21} p & 0 \\
    		0 & a_{21} p & p^2 & 0 \\
    		0 & 0 & 0 & 0
    	\end{bmatrix}, \\
    	&\text{ whose largest eigenvalues are } |a_{21}|^2 + p^2.
    \end{align}
    Thus
    \begin{align}
    	\sigma_1(Y_{11})^2 = \lambda_{\max}(C_{11}) \leq \lambda_{\max}(C_{11}^{(1)}) + \lambda_{\max}(C_{11}^{(2)}) = 2 p^2 + |a_{12}|^2 + |a_{21}|^2. \label{eq: ineq11}
    \end{align}
    Similarly, we apply the same method to $C_{22} = Y_{22}^* Y_{22}$, we have
    \begin{align}
    	C_{22} &= Y_{22}^* Y_{22} = C_{22}^{(1)} + C_{22}^{(2)} \\ 
    	&= \begin{bmatrix}
    		W & 0 \\
    		0 & V
    	\end{bmatrix}
    	\begin{bmatrix}
    		\Sigma_q^2 & a_{12} \Sigma_q \\
    		\bar a_{12} \Sigma_q & |a_{12}|^2 I 
    	\end{bmatrix}
    	\begin{bmatrix}
    		W^* & 0 \\
    		0 & V^*
    	\end{bmatrix}
    	+
    	\begin{bmatrix}
    		V & 0 \\
    		0 & W
    	\end{bmatrix}
    	\begin{bmatrix}
    		|a_{21}|^2 I & \bar a_{21} \Sigma_q \\
    		a_{21} \Sigma_q & \Sigma_q^2 
    	\end{bmatrix}
    	\begin{bmatrix}
    		V^* & 0 \\
    		0 & W^*
    	\end{bmatrix},
    \end{align}
    and
    \begin{align}
    	\sigma_1(Y_{22})^2 = \lambda_{\max}(C_{22}) \leq 2 q^2 + |a_{12}|^2 + |a_{21}|^2. \label{eq: ineq22}
    \end{align}
    According to \eqref{eq: norm cond}, we have $\sigma_1(Y_{11})^2 + \sigma_1(Y_{22})^2 \leq 2 p^2 + 2q^2 +2 |a_{12}|^2 + 2 |a_{21}|^2 = \frac{1}{2}$. 
    
    Now we have proven the distillability conjecture of case \ref{2nd case: 2 2 bloc-diag} in a new way, we can find out more necessary conditions of saturation.
    
    $\sigma_1(Y_{11})^2 + \sigma_1(Y_{22})^2 = \frac{1}{2}$, if and only if \eqref{eq: ineq11} and \eqref{eq: ineq22} take equalities, which is equivalent to
    \begin{align}
    	\exists x_1 \in \mathbb{C}^4, &C_{11} x_1 = (2p^2 + |a_{12}|^2 + |a_{21}|^2) x_1, \label{eq: Z11 eig} \\
    	&C_{11}^{(1)} x_1 = (p^2 + |a_{12}|^2) x_1, \label{eq: Z111 eig} \\
    	&C_{11}^{(2)} x_1 = (p^2 + |a_{21}|^2) x_1, \label{eq: Z112 eig} \\
    	\exists x_2 \in \mathbb{C}^4, &C_{22} x_2 = (2q^2 + |a_{12}|^2 + |a_{21}|^2) x_2, \\
    	&C_{22}^{(1)} x_2 = (q^2 + |a_{12}|^2) x_2, \\
    	&C_{22}^{(2)} x_2 = (q^2 + |a_{21}|^2) x_2. 
    \end{align}
    We first work with $C_{11}, C_{11}^{(1)}, C_{11}^{(2)}$ and $x_1$. By calculations, we obtain $x_1^{(1)}$ and $x_1^{(2)}$, the corresponding normalized eigenvectors of respectively $C_{11}^{(1)}$ and $C_{11}^{(2)}$'s largest eigenvalues
    \begin{align}
    	x_1^{(1)} = e^{i \beta} \frac{1}{\sqrt{p^2 + |a_{12}|^2}} \begin{bmatrix}
    		U & 0 \\
    		0 & T
    	\end{bmatrix} 
    	\begin{bmatrix}
    		p \\
    		0 \\
    		\bar a_{12} \\
    		0
    	\end{bmatrix}, \quad
    	x_1^{(2)} = e^{i \gamma}\frac{1}{\sqrt{p^2 + |a_{21}|^2}} \begin{bmatrix}
    		T & 0 \\
    		0 & U
    	\end{bmatrix} 
    	\begin{bmatrix}
    		\bar a_{21} \\
    		0 \\
    		p \\
    		0
    	\end{bmatrix},
    \end{align}
    where $\beta, \gamma \in \mathbb{R}$. Suppose that $x_1 = [\xi_1, \xi_2, \xi_3, \xi_4]^\top$, according to \eqref{eq: Z11 eig}, \eqref{eq: Z111 eig} and \eqref{eq: Z112 eig}, we have: 
    \begin{align}
    	x_1 = x_1^{(1)} = x_1^{(2)}.
    \end{align}
    Then
    \begin{align}
    	e^{i \gamma} \frac{1}{\sqrt{p^2 + |a_{21}|^2}} T \begin{bmatrix} \bar a_{21} \\ 0 \end{bmatrix} = \begin{bmatrix} \xi_1 \\ \xi_2 \end{bmatrix} = e^{i \beta} \frac{1}{\sqrt{p^2 + |a_{12}|^2}} U \begin{bmatrix} p \\ 0 \end{bmatrix} \label{eq: UT1} \\ 
    	= e^{i (\beta - \gamma)}\frac{\sqrt{p^2 + |a_{21}|^2}}{\sqrt{p^2 + |a_{12}|^2}} \begin{bmatrix} \xi_3 \\ \xi_4 \end{bmatrix} = e^{i (2 \beta - \gamma)}\frac{\sqrt{p^2 + |a_{21}|^2}}{p^2 + |a_{12}|^2} T \begin{bmatrix} \bar a_{12} \\ 0 \end{bmatrix} \label{eq: UT2}. 
    \end{align}
    Recall that $T$ and $U$ are unitary matrices with orthonormal columns, where we partition $U = [U_1 \ U_2]$ and $T = [T_1 \ T_2]$, $U_i, T_i \in \mathbb{C}^2$. Thus, we have:
    \begin{gather}
    	e^{2i\gamma} \frac{\bar a_{21}}{\sqrt{p^2 + |a_{21}|^2}} = e^{2i\beta} \frac{\bar a_{12} \sqrt{p^2 + |a_{21}|^2}}{p^2 + |a_{12}|^2}, \label{eq: A entry} \\
    	e^{i \gamma}\frac{\bar a_{21}}{\sqrt{p^2 + |a_{21}|^2}} T_1 = e^{i \beta}\frac{p}{\sqrt{p^2 + |a_{12}|^2}} U_1. \label{eq: unitary}
    \end{gather}
    According to \eqref{eq: A entry} and \eqref{eq: unitary}, we have $|a_{12}a_{21}| = p^2$. According to \eqref{eq: unitary}, we know that $T_1$ and $U_1$ are colinear, since $U_1, U_2$ and $T_1, T_2$ are two orthonormal basis on $\mathbb{C}^2$, $U_2$ and $T_2$ are also colinear, which means $U = T \mathrm{diag}(e^{i\psi}, e^{i\phi})$. Recall that $P = T \Sigma_p U^*$, thus, $P$ is a normal matrix, which is equivalent to $2 \sqrt{b_{21} b_{12}} = \lambda_{\max}(P) = \sigma_1(P) = |b_{12}| + |b_{21}|$. Consequently, we have $b_{21} = \bar{b}_{12}$. Then we apply the same method to $C_{22}, C_{22}^{(1)}, C_{22}^{(2)}$ and $x_2$, we obtain that $Q$ is normal, $b_{34} = \bar{b}_{43}$, and $|a_{12} a_{21}| = q^2$. 
    
    Now let us recapitulate the conditions we have obtained:
    \begin{align}
    	&b_{12} = \bar{b}_{21}, \label{eq: 69} \\
    	&b_{34} = \bar{b}_{43}, \\
    	&(|b_{34}| + |b_{43}|)^2 = q^2 = |a_{12} a_{21}| = p^2 = (|b_{12}| + |b_{21}|)^2, \label{eq: a12a21}\\
    	&a + b = \sqrt{b_{12} b_{21}}, \\
    	&a - b = \sqrt{b_{43} b_{34}}. 
    \end{align} 
    Consequently, we have:
    \begin{align}
    	&|b_{12}| = |b_{21}| = |b_{34}| = |b_{43}|, \\
        &a = |b_{12}|, \\
        &b = 0.
    \end{align}
    Without loss of generality, we can assume that $b_{12} = b_{34} = c$, then we have:
    \begin{align}
        A = \begin{bmatrix}
            |c| & a_{12} & 0 & 0 \\
            a_{21} & |c| & 0 & 0 \\
            0 & 0 & -|c| & 0 \\
            0 & 0 & 0 & -|c|
        \end{bmatrix}, \quad
        B = \begin{bmatrix}
            0 & c & 0 & 0 \\
            \bar c & 0 & 0 & 0 \\
            0 & 0 & 0 & c \\
            0 & 0 & \bar c & 0
        \end{bmatrix}.
    \end{align}
    We can see that $B$ is unitarily similar to $\operatorname{diag}(|c|, |c|, -|c|, -|c|)$, thus, we let $B = \operatorname{diag}(|c|, |c|, -|c|, -|c|)$. Now we calculate $X$:
    \begin{align}
        X = X_1 \oplus X_2,
    \end{align}
    where
    \begin{align}
        X_1 = \begin{bmatrix}
            2|c|I_2 \oplus 0 & a_{12} I_4 \\
            a_{21} I_4 & 2|c|I_2 \oplus 0
        \end{bmatrix}, \quad X_2 = \operatorname{diag}(0,0,-2|c|,-2|c|,0,0,-2|c|,-2|c|).
    \end{align}
    By calculations, $X_1$ is unitarily invariant to $X_1^{(11)} \oplus X_1^{(11)} \oplus X_1^{(22)}$, where
    \begin{align}
        X_1^{(11)} = \begin{bmatrix}
            2|c| & a_{12} \\
            a_{21} & 2|c|
        \end{bmatrix}, \quad 
        X_1^{(22)} = \begin{bmatrix}
            0 & a_{12} I_2 \\
            a_{21} I_2 & 0
        \end{bmatrix}.
    \end{align}
    If $\sigma_1(X) = \sigma_1(X_1^{(11)})$ and $\sigma_2(X) = \sigma_2(X_1^{(11)})$, $\sigma_1^2(X_1^{(11)}) + \sigma_2^2(X_1^{(11)}) = \|X_1^{(11)}\|_F^2 = \frac{1}{4} < \frac{1}{2}$, thus, this case is excluded. Besides, the case where $\sigma_1(X)$ and $\sigma_2(X)$ come from the diagonal part is also excluded, as it is included in previous discussion regarding normal $A$ and $B$. Now, we have $\sigma_1(X) = \sigma_1(X_1^{(11)})$ and $\sigma_2(X) = \sigma_1(X_1^{(11)})$. To maximize $\sigma_1^2(X) + \sigma_2^2(X)$, $X_1^{(11)}$ need to be of rank $1$, which means $b_{12} b_{21} = 4|c|^2$. This case is included in case \ref{1st case: 2 2 bloc-diag}.
    
    \begin{enumerate}[resume]
    	\item[(3)] Finally, we work with the case where $p$ or $q$ equals $0$.
    \end{enumerate}
    
    We first work with $p=0, q\neq0$ and apply the same method to $q=0, p\neq0$. Since $p=0$, according to \eqref{eq: norm cond}, \eqref{eq: b_{ij} and p,q} and \eqref{eq: b+a}, we have $(|b_{34}| + |b_{43}|)^2 + |a_{12}|^2 + |a_{21}|^2 = \frac{1}{4}$, $b_{12} = b_{21} = 0$ and $b+a = 0$. $Y_{11}$ and $Y_{22}$ are as follows:
    \begin{align}
    	Y_{11} &= 
    	\begin{bmatrix}
    		0 & 0 & a_{12} & 0 \\
    		0 & 0 & 0 & a_{21} \\
    		a_{12} & 0 & 0 & 0 \\
    		0 & a_{21} & 0 & 0
    	\end{bmatrix}, \\
    	Y_{22} &= \begin{bmatrix}
    		\sqrt{b_{43} b_{34}} & b_{34} & a_{12} & 0 \\
    		b_{43} & \sqrt{b_{43} b_{34}} & 0 & a_{12} \\
    		a_{21} & 0 & \sqrt{b_{43} b_{34}} & b_{34} \\
    		0 & a_{21} & b_{43} & \sqrt{b_{43} b_{34}}
    	\end{bmatrix}.
    \end{align}
    It is trivial that $\sigma_1(Y_{11}) = \max(|a_{12}|, |a_{21}|)$. According to \eqref{eq: ineq22}, we have $\sigma_1^2(Y_{22}) \leq 2q^2 + |a_{12}|^2 + |a_{21}|^2$. 
    \begin{align}
    	\sigma_1^2(Y_{11}) + \sigma_1^2(Y_{22}) \leq \max(|a_{12}|^2, |a_{21}|^2) + 2q^2 + |a_{12}|^2 + |a_{21}|^2 \leq \frac{1}{2}. \label{eq: p=0}
    \end{align}
    According to the discussion on $Y_{22}^* Y_{22}$ in case (2), the first inequality in \eqref{eq: p=0} takes equality only if $b_{34} = \bar b_{43}$ and $|a_{12} a_{21}| = q^2$. The second inequality in \eqref{eq: p=0} takes equality if and only if $a_{21} = 0$ or $a_{12} = 0$. According to \eqref{eq: b_{ij} and p,q}, $|b_{43}| = |b_{34}| = 0$. Now, let us recapitulate what we obtain
    \begin{gather}
    	b_{12} = b_{21} = b_{34} = b_{43} = a_{34} = a_{43} = 0, \\
    	a_{12} =0 \text{ or } a_{21} = 0, \\
    	b+a = \sqrt{b_{12}b_{21}} = b-a = \sqrt{b_{34}b_{43}} = 0. 
    \end{gather}
    This reduces to the case where $A$ only has one nonzero entry and $B = 0$. We can apply the same method to $q=0, p\neq0$ and get the same result. The case where $A$ or $B$ is zero is already covered by former discussion.
\end{proof}

\textbf{Lastly, we consider under what condition the inequality is saturated in case \ref{3rd case: 2 2 bloc-diag}}

\begin{lemma}\label{lemma: saturation of 3rd case: 2 2 bloc-diag}
    In case \ref{3rd case: 2 2 bloc-diag}, the saturation condition is included in case \ref{1st case: 2 2 bloc-diag}.
\end{lemma}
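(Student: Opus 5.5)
The plan is to handle case iii, where $\sigma_1(Y)=\sigma_1(Y_{11})$ and $\sigma_2(Y)=\sigma_1(Y_{44})$, by copying the strategy of Lemma \ref{lemma: saturation of 2nd case: 2 2 bloc-diag}. The difference is that $Y_{11}=A_{11}\otimes I_2+I_2\otimes B_{11}$ and $Y_{44}=A_{22}\otimes I_2+I_2\otimes B_{22}$ share no blocks, so all eight off-diagonal entries of $A$ and $B$ can be present.

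First I bound each block. I write $Y_{11}Y_{11}^*$ and $Y_{44}Y_{44}^*$ in $2\times 2$ block form and use that the largest eigenvalue of a positive semidefinite matrix is at most the sum of the largest eigenvalues of its two diagonal blocks. Splitting off the scalar part, as in \eqref{ineq: 2nd case 2}--\eqref{ineq: 2nd case 5}, gives
\begin{align}
\sigma_1^2(Y_{11})&\le 4|a+b|^2+|a_{12}|^2+|a_{21}|^2+|b_{12}|^2+|b_{21}|^2,\\
\sigma_1^2(Y_{44})&\le 4|a+b|^2+|a_{34}|^2+|a_{43}|^2+|b_{34}|^2+|b_{43}|^2.
\end{align}
Adding these and using $|a+b|^2\le 2(|a|^2+|b|^2)$ (with $A_{22}$ and $B_{22}$ contributing diagonal entries $-a$ and $-b$), I expect the sum to be at most $2(\|A\|_F^2+\|B\|_F^2)=\tfrac12$. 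Equality then forces three things: $b=a$; the determinant-type equalities $a+b=\sqrt{b_{12}b_{21}}$ (with the phase relation) and $-(a+b)=\sqrt{b_{34}b_{43}}$; and, symmetrically in the $A$-factor, $a+b=\sqrt{a_{12}a_{21}}$ and $-(a+b)=\sqrt{a_{34}a_{43}}$.

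Second, I redo the refined argument of case (2). I write $Y_{11}=\begin{bmatrix}P&a_{12}I\\a_{21}I&P\end{bmatrix}$, where $P$ has rank at most one, and split $C_{11}=Y_{11}^*Y_{11}$ as $C_{11}^{(1)}+C_{11}^{(2)}$. Equality requires a common top eigenvector, which forces $P$ to be normal, i.e.\ $b_{21}=\bar b_{12}$, and $|a_{12}a_{21}|=p^2$. I do the same for $Y_{44}$ with $A_{22},B_{22}$. Combined with the norm identity analogous to \eqref{eq: norm cond}, this pins all moduli to be equal. Exchanging the roles of $A$ and $B$ (which swaps the tensor factors) then gives $A_{11},A_{22}$ normal as well.

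Third, I reduce. At this point $A$ and $B$ are both normal, so saturation is already described by the diagonal analysis of Lemma \ref{lemma:diag}, whose solutions are $2\times2$ block-diagonal of the form in case i. Alternatively, I split into subcases where $a+b=0$ (or one of $p,q$ vanishes), mimicking subcases (1) and (3); these reduce either to $A$ or $B$ being zero, or to $X$ with $\sigma_1,\sigma_2$ both coming from one rank-two block, which is case i.

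The main obstacle is the first step. The weights on $|a+b|^2$ versus the off-diagonal entries must actually sum to at most $\tfrac12$, but $Y_{11}$ and $Y_{44}$ each carry $4|a+b|^2$, so the naive budget could double-count. If the bound fails, I would instead apply Lemma \ref{lemma: saturation of BW inequality with Y transpose} to the variational form $\|BV+VA^\top\|_F$ restricted to the two blocks.
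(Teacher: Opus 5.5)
Your first step fails, and Steps 2 and 3 depend entirely on its equality conditions.

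\textbf{The per-block bound is false as stated.} Take $a=b=\tfrac{1}{2}$ and $a_{12}=a_{21}=b_{12}=b_{21}=1$ (scale is irrelevant). Then $A_{11}=B_{11}$ is real symmetric with eigenvalues $\tfrac{3}{2},-\tfrac{1}{2}$. So $Y_{11}$ has eigenvalues $3,1,1,-1$ and $\sigma_1^2(Y_{11})=9$, while your right-hand side is $8$. The case ii bounds actually give $4|a+b|^2+|a_{12}|^2+|a_{21}|^2+2|b_{12}|^2+2|b_{21}|^2$, because $b_{12},b_{21}$ enter both diagonal blocks of $Y_{11}Y_{11}^*$.

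\textbf{No bound of this shape can close in case iii.} This is exactly the double counting you feared. $Y_{11}$ and $Y_{44}$ both carry $\pm(a+b)$, so the sum charges $8|a+b|^2$, which can reach $16(|a|^2+|b|^2)$. But $\tfrac{1}{2}=8|a|^2+8|b|^2+2\sum(\text{off-diagonal moduli squared})$ only has $8(|a|^2+|b|^2)$ for the diagonal part. In case ii this was harmless because $Y_{22}$ carries $a-b$, and $4|a+b|^2+4|a-b|^2=8|a|^2+8|b|^2$.

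Since your chain does not end at $\tfrac{1}{2}$, saturation forces equality in none of its steps. So $b=a$, $a+b=\sqrt{b_{12}b_{21}}$, the normality of $P$, and everything downstream are unsupported. A concrete check: $A_{11}=B_{11}=\tfrac{1}{8}\begin{bmatrix}1&1\\1&1\end{bmatrix}$, $A_{22}=B_{22}=-A_{11}$ is a saturating configuration in case iii, with $\sigma_1(Y_{11})=\sigma_1(Y_{44})=\tfrac{1}{2}$. Yet your per-block inequality is strict there, and $a+b=\tfrac{1}{4}\neq\tfrac{1}{8}=\sqrt{b_{12}b_{21}}$.

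\textbf{The fallback does not apply.} The lemma on $\|XY-YX^\top\|_F$ needs the same matrix on both sides. That is why the paper can use it only when $B$ is unitarily similar to $-A$.

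\textbf{The missing idea} is to charge the scalar part with weight $2$ per block instead of $4$. This is the paper's route.
\begin{enumerate}
\item Write $Y_{11}=(a+b)I_4+N_{11}$, with $N_{11}$ the zero-diagonal part. Use $\sigma_1(Y_{11})\le|a+b|+\sigma_1(N_{11})$, then $(x+y)^2\le 2x^2+2y^2$.
\item After row and column permutations, $N_{11}$ is a direct sum of two $2\times 2$ matrices. Each has squared Frobenius norm $s_1=|a_{12}|^2+|a_{21}|^2+|b_{12}|^2+|b_{21}|^2$ and determinant $\pm(b_{12}b_{21}-a_{12}a_{21})$. Hence $\sigma_1^2(Y_{11})\le 2|a+b|^2+2s_1$.
\item With the analogous bound for $Y_{44}$, the total is $4|a+b|^2+2(s_1+s_2)\le 8|a|^2+8|b|^2+2(s_1+s_2)=\tfrac{1}{2}$, which closes exactly.
\item The equality cases give $a=b$, $a_{12}a_{21}=b_{12}b_{21}$ and $a_{34}a_{43}=b_{34}b_{43}$. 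They also force $a+b$ to be an eigenvalue of $N_{11}$ of modulus $\sigma_1(N_{11})$; the spectrum of $N_{11}$ is then $\pm 2\sqrt{a_{12}a_{21}},0,0$. The same holds for $Y_{44}$.
\item It follows that all eight off-diagonal moduli equal $|a|$ and each diagonal block is normal of rank one. After separate unitary similarities, $A=B=\operatorname{diag}(2a,0,-2a,0)$, which lies in case i.
\end{enumerate}
Your Step 3 idea of invoking Lemma \ref{lemma:diag} once $A$ and $B$ are known to be normal would be a valid way to finish, but only on top of a correct first step.
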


\begin{proof}
    We first review how to prove case \ref{3rd case: 2 2 bloc-diag}, and find out some necessary conditions of saturation based on the proof.

    $Y_{11}$ and $Y_{44}$ can be decomposed into the following form
    \begin{align}
    	&Y_{11} = {Y_{11}}^{(1)} + {Y_{11}}^{(2)} = 
    	\begin{bmatrix}
    		b+a & 0 & 0 & 0 \\
    		0 & b+a & 0 & 0 \\
    		0 & 0 & b+a & 0 \\
    		0 & 0 & 0 & b+a
    	\end{bmatrix}
    	+
    	\begin{bmatrix}
    		0 & b_{12} & a_{12} & 0 \\
    		b_{21} & 0 & 0 & a_{12} \\
    		a_{21} & 0 & 0 & b_{12} \\
    		0 & a_{21} & b_{21} & 0
    	\end{bmatrix}, \\
    	&Y_{44} = {Y_{44}}^{(1)} + {Y_{44}}^{(2)} = 
    	\begin{bmatrix}
    		-b-a & 0 & 0 & 0 \\
    		0 & -b-a & 0 & 0 \\
    		0 & 0 & -b-a & 0 \\
    		0 & 0 & 0 & -b-a
    	\end{bmatrix}
    	+
    	\begin{bmatrix}
    		0 & b_{34} & a_{34} & 0 \\
    		b_{43} & 0 & 0 & a_{34} \\
    		a_{43} & 0 & 0 & b_{34} \\
    		0 & a_{43} & b_{43} & 0
    	\end{bmatrix}.
    \end{align}
    Then, according to the triangular inequality of the operator norm, which is the largest singular value in our scenario, we have
    \begin{align}
    	\sigma_1(Y)^2 + \sigma_2(Y)^2 &= \sigma_1(Y_{11})^2 + \sigma_1(Y_{44})^2 \leq \left(\sigma_1({Y_{11}}^{(1)}) + \sigma_1({Y_{11}}^{(2)})\right)^2 + \left(\sigma_1({Y_{44}}^{(1)}) + \sigma_1({Y_{44}}^{(2)})\right)^2 \label{ineq: 104}\\
    	&\leq 2 \left(\sigma_1({Y_{11}}^{(1)})^2 + \sigma_1({Y_{11}}^{(2)})^2 + \sigma_1({Y_{44}}^{(1)})^2 + \sigma_1({Y_{44}}^{(2)})^2\right) \label{ineq: 105}.
    \end{align}
    Without loss of generality, we can only work with $Y_{11}$ and apply the result to $Y_{44}$. In fact, there are permutation matrices $P$ and $Q$ such that
    \begin{align}
    	\tilde Y_{11}^{(2)} = P Y_{11}^{(2)} Q = \begin{bmatrix}
    		a_{21} & b_{12} & 0 & 0 \\
    		b_{21} & a_{12} & 0 & 0 \\
    		0 & 0 & a_{12} & b_{12} \\
    		0 & 0 & b_{21} & a_{21}
    	\end{bmatrix}, \label{eq: 106}
    \end{align}
    Then, we obtain 
    \begin{align}
    	\sigma_1(Y_{11}^{(2)})^2 \leq |a_{21}|^2 + |a_{12}|^2 + |b_{12}|^2 + |b_{21}|^2. \label{ineq: 107}
    \end{align}
    In the same way, we obtain
    \begin{align}
    	\sigma_1({Y_{44}}^{(2)})^2 \leq |a_{43}|^2 + |a_{34}|^2 + |b_{34}|^2 + |b_{43}|^2. \label{ineq: 108}
    \end{align}
    Now, we have
    \begin{align}
    	\sigma_1(Y)^2 + \sigma_2(Y)^2 &\leq 2 \left(\sigma_1({Y_{11}}^{(1)})^2 + \sigma_1({Y_{11}}^{(2)})^2 + \sigma_1({Y_{44}}^{(1)})^2 + \sigma_1({Y_{44}}^{(2)})^2\right) \\
    	&= 2 \left( 2|b+a|^2 + |a_{43}|^2 + |a_{34}|^2 + |b_{34}|^2 + |b_{43}|^2 + |a_{21}|^2 + |a_{12}|^2 + |b_{12}|^2 + |b_{21}|^2 \right) \\
    	&= 2 \left( 4 |a|^2 + 4 |a|^2 + |a_{43}|^2 + |a_{34}|^2 + |b_{34}|^2 + |b_{43}|^2 + |a_{21}|^2 + |a_{12}|^2 + |b_{12}|^2 + |b_{21}|^2 \right).
    \end{align}
    Recall that $\left\| A \right\|_F^2 + \left\| B \right\|_F^2 = \frac{1}{4} = 4 |a|^2 + 4 |a|^2 + |a_{43}|^2 + |a_{34}|^2 + |b_{34}|^2 + |b_{43}|^2 + |a_{21}|^2 + |a_{12}|^2 + |b_{12}|^2 + |b_{21}|^2$, we have $\sigma_1(Y)^2 + \sigma_2(Y)^2 \leq \frac{1}{2}$.
    
    Now, we have proven case \ref{3rd case: 2 2 bloc-diag}, we will find out some necessary conditions of saturation. Since $b-a$ does not present in either $Y_{11}$ or $Y_{44}$, In order to maximize $\sigma_1^2(Y_{11}) + \sigma_1^2(Y_{22})$, $b - a$ needs to be zero. $\sigma_1^2(Y_{11}) + \sigma_1^2(Y_{22}) = \frac{1}{2}$ if and only if \eqref{ineq: 104}, \eqref{ineq: 105}, \eqref{ineq: 107} and \eqref{ineq: 108} take equalities. We first deal with the part involves $Y_{11}$
    \begin{align}
    	\sigma_1(Y_{11}) &\leq \sigma_1({Y_{11}}^{(1)}) + \sigma_1({Y_{11}}^{(2)}), \label{ineq: 112}\\
    	\left(\sigma_1({Y_{11}}^{(1)}) + \sigma_1({Y_{11}}^{(2)})\right)^2 &\leq 2\left( \sigma_1^2({Y_{11}}^{(1)}) + \sigma_1^2({Y_{11}}^{(2)}) \right), \label{ineq: 113}\\
    	\sigma_1(Y_{11}^{(2)})^2 &\leq |a_{21}|^2 + |a_{12}|^2 + |b_{12}|^2 + |b_{21}|^2. \label{ineq: 114}
    \end{align}
    Then, we apply the same method to the part involves $Y_{44}$
    \begin{align}
    	\sigma_1(Y_{44}) &\leq \sigma_1({Y_{44}}^{(1)}) + \sigma_1({Y_{44}}^{(2)}), \\
    	\left(\sigma_1({Y_{44}}^{(1)}) + \sigma_1({Y_{44}}^{(2)})\right)^2 &\leq 2\left( \sigma_1^2({Y_{44}}^{(1)}) + \sigma_1^2({Y_{44}}^{(2)}) \right), \\
    	\sigma_1(Y_{44}^{(2)})^2 &\leq |a_{43}|^2 + |a_{34}|^2 + |b_{34}|^2 + |b_{43}|^2. 
    \end{align}
    \eqref{ineq: 112} takes equality if and only if 
    \begin{align}
    	&Y_{11}^{(2)} y = \lambda_1 y, \label{eq: 118}\\
    	&|\lambda_1| = \sigma_1(Y_{11}^{(2)}), 
    \end{align}
    and $\lambda_1$ and $b+a$ are of the same phase. \eqref{ineq: 113} takes equality if and only if
    \begin{align}
    	\sigma_1(Y_{11}^{(1)}) = \sigma_1(Y_{11}^{(2)}) = |b+a|.
    \end{align}
    \eqref{ineq: 114} takes equality if and only if the two diagonal blocks of $\tilde Y_{11}^{(2)}$ are singular. According to what we have obtained above, we have:
    \begin{align}
    	\sigma_1(Y_{11}^{(1)}) = \sigma_1(Y_{11}^{(2)}) = |\lambda_1| = |b+a|, \\
    	\lambda_1 = b+a, b_{12}b_{21} = a_{12}a_{21}.
    \end{align}
    Under the assumption that $b_{12}b_{21} = a_{12}a_{21}$, the spectrum of $Y_{11}^{(2)}$ is $\{ 2\sqrt{a_{12}a_{21}}, -2\sqrt{a_{12}a_{21}}, 0, 0 \}$, thus, we have $b+a = 2\sqrt{a_{12}a_{21}}$ or $b+a = -2\sqrt{a_{12}a_{21}}$. Since $|\lambda_1| = \sigma_1({Y_{11}^{(2)}})$, we have
    \begin{align}
    	4|a_{12}a_{21}| = |a_{21}|^2 + |a_{12}|^2 + |b_{12}|^2 + |b_{21}|^2,
    \end{align}
    recall that $b_{12}b_{21} = a_{12}a_{21}$, we obtain
    \begin{align}
    	(|a_{21}| - |a_{12}|)^2 + (b_{21}| - |b_{12}|)^2 = 0,
    \end{align}
    thus, $|a_{12}| = |a_{21}|, |b_{12}| = |b_{21}|$. Since $b_{12}b_{21} = a_{12}a_{21}$, we have $|a_{12}| = |a_{21}| = |b_{12}| = |b_{21}|$. We apply the same method to $Y_{44}$, and recapitulate what we obtain
    \begin{gather}
    	a_{34}a_{43} = b_{34}b_{43}, b_{12}b_{21} = a_{12}a_{21}, \label{eq: 127}\\
    	b+a = 2\sqrt{a_{12}a_{21}}\text{ or }b+a = -2\sqrt{a_{12}a_{21}},  \label{eq: 128}\\
    	-b-a = 2\sqrt{a_{34}a_{43}}\text{ or }-b-a = -2\sqrt{a_{34}a_{43}},  \label{eq: 129} \\
    	|a_{12}| = |a_{21}| = |b_{12}| = |b_{21}|, \\
    	|a_{34}| = |a_{43}| = |b_{34}| = |b_{43}|, \\
    	b = a.
    \end{gather}
    From the last five we can derive that 
    \begin{align}
    	|a_{12}| = |a_{21}| = |a_{34}| = |a_{43}| = |b_{12}| = |b_{21}| = |b_{34}| = |b_{43}| = |a| = |b|. \label{eq: 132}
    \end{align}
    
    Now, we will prove that the conditions we obtain are sufficient.
    
    \eqref{eq: 127} implies that \eqref{ineq: 107} and \eqref{ineq: 108} take equalities. \eqref{eq: 132} implies that $|\lambda_1(Y_{11}^{(2)})| =  \sigma_1(Y_{11}^{(2)})$ and $|\lambda_1(Y_{44}^{(2)})| = \sigma_1(Y_{44}^{(2)})$, where $\lambda_1$ denotes the eigenvalue with largest absolute value. Plus, \eqref{eq: 128} and \eqref{eq: 129} imply that the eigenvalue of $Y_{11}^{(1)}$ (respectively $Y_{44}^{(1)}$) and the eigenvalue of $Y_{11}^{(2)}$ (respectively $Y_{44}^{(2)}$) are equal, Thus \eqref{ineq: 104} and \eqref{ineq: 105} takes equality. \eqref{ineq: 104}, \eqref{ineq: 105}, \eqref{ineq: 107} and \eqref{ineq: 108} all take equalities, $\sigma_1^2(Y_{11}) + \sigma_2^2(Y_{22}) = \frac{1}{2}$. In this case
    \begin{gather*}
    	A = \begin{bmatrix}
    		a & a_{12} & 0 & 0 \\
    		a_{21} & a & 0 & 0 \\
    		0 & 0 & -a & a_{34} \\
    		0 & 0 & a_{43} & -a
    	\end{bmatrix}, 
    	B = \begin{bmatrix}
    		a & b_{12} & 0 & 0 \\
    		b_{21} & a & 0 & 0 \\
    		0 & 0 & -a & b_{34} \\
    		0 & 0 & b_{43} & -a
    	\end{bmatrix}, \\
    	a_{34}a_{43} = b_{34}b_{43}, b_{12}b_{21} = a_{12}a_{21}, \\
    	b+a = 2\sqrt{a_{12}a_{21}}\text{ or }b+a = -2\sqrt{a_{12}a_{21}}, \\
    	-b-a = 2\sqrt{a_{34}a_{43}}\text{ or }-b-a = -2\sqrt{a_{34}a_{43}}, \\
    	|a_{12}| = |a_{21}| = |b_{12}| = |b_{21}| = |a_{34}| = |a_{43}| = |b_{34}| = |b_{43}| = |a|.
    \end{gather*}
    We notice that the diagonal blocks of $A$ and $B$ are all of rank $1$, since Trace and Frobenius norm are invariant under unitary similarity, we obtain that, up to unitary similarity
    \begin{align}
    	A = B = \begin{bmatrix}
    		2 a & 0 & 0 & 0 \\
    		0 & 0 & 0 & 0 \\
    		0 & 0 & -2 a & 0 \\
    		0 & 0 & 0 & 0
    	\end{bmatrix}.
    \end{align}
    It is included in case \ref{1st case: 2 2 bloc-diag}.
\end{proof}

\section{Proof of Lemma \ref{lemma:2 2 anti}}\label{appendix:proof 2 2 anti}

We first review how to prove the inequality $\sigma_1^2(X) + \sigma_2^2(X) \leq \frac{1}{2}$ \cite{SIO2025152}. Calculating $X^* X$, we find that $X^* X$ is permutationally similar to $H \oplus K$:
\begin{align}
    H=&\begin{bmatrix}
        I_2 \otimes B_2^* B_2+A_2^* A_2 \otimes I_2 & A_1 \otimes B_2^*+A_2^* \otimes B_1 \\
        A_1^* \otimes B_2+A_2 \otimes B_1^* & A_1^* A_1 \otimes I_2+I_2 \otimes B_1^* B_1
    \end{bmatrix},\\
    K=&\begin{bmatrix}
        I_2 \otimes B_1^* B_1+A_2^* A_2 \otimes I_2 & A_1 \otimes B_1^*+A_2^* \otimes B_2 \\
        A_1^* \otimes B_1+A_2 \otimes B_2^* & A_1^* A_1 \otimes I_2+I_2 \otimes B_2^* B_2
    \end{bmatrix}.
\end{align}
Let $\alpha_{ij} = \sigma_i(A_j), \beta_{ij} = \sigma_i(B_j)$. The matrix $H$ can be written as $H=H_1+H_2$, where
\begin{align}
    H_1=\begin{bmatrix}
        I_2 \otimes B_2^* B_2 & A_1 \otimes B_2^* \\
        A_1^* \otimes B_2 & A_1^* A_1 \otimes I_2
    \end{bmatrix}, \quad 
    H_2=\begin{bmatrix}
        A_2^* A_2 \otimes I_2 & A_2^* \otimes B_1 \\
        A_2 \otimes B_1^* & I_2 \otimes B_1^* B_1
    \end{bmatrix}.
\end{align} 
We have
\begin{align}
    \lambda(H_1)=\left\{\beta_{12}^2+\alpha_{11}^2, \beta_{12}^2+\alpha_{21}^2, \beta_{22}^2+\alpha_{11}^2, \beta_{22}^2+\alpha_{21}^2, 0,0,0,0\right\},
\end{align}
because
\begin{align}
    \lambda\left(H_1\right)=&\lambda\left(\begin{bmatrix}
    I_2 \otimes B_2^* \\
    A_1^* \otimes I_2
    \end{bmatrix}\begin{bmatrix}
    I_2 \otimes B_2 & A_1 \otimes I_2
    \end{bmatrix}\right) \\
    =& \lambda\left(\begin{bmatrix}
    I_2 \otimes B_2 & A_1 \otimes I_2
    \end{bmatrix}\begin{bmatrix}
    I_2 \otimes B_2^* \\
    A_1^* \otimes I_2
    \end{bmatrix}\right) \cup \{0, 0, 0, 0\}\\
    =& \lambda\left(A_1 A_1^* \otimes I + I \otimes B_2 B_2^*\right) \cup \{0, 0, 0, 0\} \\
    =& \{\beta_{12}^2 + \alpha_{11}^2, \beta_{12}^2 + \alpha_{21}^2, \beta_{22}^2 + \alpha_{11}^2, \beta_{22}^2 + \alpha_{21}^2, 0, 0, 0, 0\}.
\end{align}
Similarly, we set $K = K_1 + K_2$, where
\begin{align}
    K_1=\begin{bmatrix}
        I_2 \otimes B_1^* B_1 & A_1 \otimes B_1^* \\
        A_1^* \otimes B_1 & A_1^* A_1 \otimes I_2
    \end{bmatrix}, \quad 
    K_2=\begin{bmatrix}
        A_2^* A_2 \otimes I_2 & A_2^* \otimes B_2 \\
        A_2 \otimes B_2^* & I_2 \otimes B_2^* B_2
    \end{bmatrix}.
\end{align}
Then, we have
\begin{align}
    \lambda\left(H_2\right) = \{\beta_{11}^2 + \alpha_{12}^2, \beta_{11}^2 + \alpha_{22}^2, \beta_{21}^2 + \alpha_{12}^2, \beta_{21}^2 + \alpha_{22}^2, 0, 0, 0, 0\}, \\
    \lambda\left(K_1\right) = \{\beta_{11}^2 + \alpha_{11}^2, \beta_{11}^2 + \alpha_{21}^2, \beta_{21}^2 + \alpha_{11}^2, \beta_{21}^2 + \alpha_{21}^2, 0, 0, 0, 0\}, \\
    \lambda\left(K_2\right) = \{\beta_{12}^2 + \alpha_{12}^2, \beta_{12}^2 + \alpha_{22}^2, \beta_{22}^2 + \alpha_{12}^2, \beta_{22}^2 + \alpha_{22}^2, 0, 0, 0, 0\}.
\end{align}
There are three cases,
\begin{enumerate}
    \item $\sigma_1^2(X) = \lambda_1(H)$ and $\sigma_2^2(X) = \lambda_2(H)$. \label{case: 1st case 2 2 anti}
    \item $\sigma_1^2(X) = \lambda_1(K)$ and $\sigma_2^2(X) = \lambda_2(K)$. \label{case: 2nd case 2 2 anti}
    \item $\sigma_1^2(X) = \lambda_1(H)$ and $\sigma_2^2(X) = \lambda_1(K)$. \label{case: 3rd case 2 2 anti}
\end{enumerate}
\textbf{We first deal with case \ref{case: 1st case 2 2 anti}}.
\begin{align}
    \lambda_1(H) + \lambda_2(H) &\leq \lambda_1(H_1) + \lambda_2(H_1) + \lambda_1(H_2) + \lambda_2(H_2) \label{eq: 2 2 anti ineq 1}\\        &=\left(\beta_{12}^2+\alpha_{11}^2\right)+\left(\beta_{11}^2+\alpha_{12}^2\right)+\max \left\{\beta_{12}^2+\alpha_{21}^2, \beta_{22}^2+\alpha_{11}^2\right\}+\max \left\{\beta_{11}^2+\alpha_{22}^2, \beta_{21}^2+\alpha_{12}^2\right\} \\  &\leq\left(\beta_{12}^2+\alpha_{11}^2\right)+\left(\beta_{11}^2+\alpha_{12}^2\right)+\left(\beta_{12}^2+\alpha_{21}^2+\beta_{22}^2+\alpha_{11}^2\right)+\left(\beta_{11}^2+\alpha_{22}^2+\beta_{21}^2+\alpha_{12}^2\right) \label{eq: 2 2 anti ineq 2}\\
    &=\left(2 \alpha_{11}^2+2 \alpha_{12}^2+\alpha_{21}^2+\alpha_{22}^2\right)+\left(2 \beta_{11}^2+2 \beta_{12}^2+\beta_{21}^2+\beta_{22}^2\right) \\
    &\leq 2\|A\|_F^2+2\|B\|_F^2 \label{eq: 2 2 anti ineq 3}\\
    &=\frac{1}{2} .
\end{align}
Inequality \eqref{eq: 2 2 anti ineq 3} takes equality if and only if $\alpha_{21} = \alpha_{22} = \beta_{21} = \beta_{22} = 0$, which indicates that $\rank A_1 = \rank A_2 = \rank B_1 = \rank B_2 = 1$. Inequality \eqref{eq: 2 2 anti ineq 2} takes equality if and only if $\alpha_{11} \beta_{12}= \beta_{11} \alpha_{12} = 0$. We exclude the case where $\alpha_{11} = \alpha_{12} = 0$ or $\beta_{11} = \beta_{12} = 0$, which is already covered by Lemma \ref{lemma:B=0} and Lemma \ref{lemma:A=0}. Thus, we only need to deal with the case where $\alpha_{11} = \beta_{11} = 0$ or $\alpha_{12} = \beta_{12} = 0$, which indicates that $A_1 = B_1 = 0$ or $A_2 = B_2 = 0$. Consider only the case where $A_2 = B_2 = 0$, since the other one is similar. Now, we have 
\begin{align}
    A = \begin{bmatrix}
        0 & 0 & a_{13} & 0 \\
        0 & 0 & a_{23} & 0 \\
        0 & 0 & 0 & 0 \\
        0 & 0 & 0 & 0
    \end{bmatrix}, \quad
    B = \begin{bmatrix}
        0 & 0 & b_{13} & 0 \\
        0 & 0 & b_{23} & 0 \\
        0 & 0 & 0 & 0 \\
        0 & 0 & 0 & 0
    \end{bmatrix}.
\end{align}
Now, we calculate $H$
\begin{align}
    H = \begin{bmatrix}
        0 & 0 \\
        0 & A_1^* A_1 \otimes I_2 + I_2 \otimes B_1^* B_1
    \end{bmatrix},
\end{align}
where $A_1^* A_1 \otimes I_2 + I_2 \otimes B_1^* B_1 = \operatorname{diag}\left(|a_{13}|^2 + |a_{23}|^2 + |b_{13}|^2 + |b_{23}|^2, |a_{13}|^2 + |a_{23}|^2, |b_{13}|^2 + |b_{23}|^2, 0\right)$. Thus, $\lambda_1(H) + \lambda_2(H) = |a_{13}|^2 + |a_{23}|^2 + |b_{13}|^2 + |b_{23}|^2 + \max\left(|a_{13}|^2 + |a_{23}|^2, |b_{13}|^2 + |b_{23}|^2\right) = \frac{1}{2}$ if and only if $a_{13} = a_{23} = 0$ or $b_{13} = b_{23} = 0$, which indicates that $A = 0$ or $B = 0$, this case is already covered by Lemma \ref{lemma:B=0} and Lemma \ref{lemma:A=0}.

\textbf{We can obtain the same conclusion for Case \ref{case: 2nd case 2 2 anti}. Now, we deal with Case \ref{case: 3rd case 2 2 anti}.}
\begin{align}
    \lambda_1\left(X^* X\right)+\lambda_2\left(X^* X\right) & \leq \lambda_1\left(H_1\right)+\lambda_1\left(H_2\right)+\lambda_1\left(K_1\right)+\lambda_1\left(K_2\right) \label{eq: 2 2 anti ineq 7} \\
    & =\left(\beta_{12}^2+\alpha_{11}^2\right)+\left(\beta_{11}^2+\alpha_{12}^2\right)+\left(\beta_{11}^2+\alpha_{11}^2\right)+\left(\beta_{12}^2+\alpha_{12}^2\right) \\
    & =2\left(\beta_{11}^2+\beta_{12}^2+\alpha_{11}^2+\alpha_{12}^2\right) \\
    & \leq 2\|A\|_F^2+2\|B\|_F^2 \label{eq: 2 2 anti ineq 4}\\
    & =\frac{1}{2}.
\end{align}
Inequality \eqref{eq: 2 2 anti ineq 4} takes equality if and only if $\beta_{21} = \beta_{22} = \alpha_{21} = \alpha_{22} = 0$, which indicates that $\operatorname{rank} B_1 = \operatorname{rank} B_2 = \operatorname{rank} A_1 = \operatorname{rank} A_2 = 1$. Thus, by unitary similarity, we have
\begin{align}
    A = \begin{bmatrix}
        0 & 0 & a_{13} & 0 \\
        0 & 0 & a_{23} & 0 \\
        a_{31} & 0 & 0 & 0 \\
        a_{41} & 0 & 0 & 0
    \end{bmatrix}, \quad
    B = \begin{bmatrix}
        0 & 0 & b_{13} & 0 \\
        0 & 0 & b_{23} & 0 \\
        b_{31} & 0 & 0 & 0 \\
        b_{41} & 0 & 0 & 0
    \end{bmatrix}.
\end{align}
Now we deal with inequality \eqref{eq: 2 2 anti ineq 7}. Let the standard basis vector be denoted as follows.
\begin{align}
    e_1 &= \begin{bmatrix} 
        1 \\ 
        0 
    \end{bmatrix}.
\end{align}
We can define the vectors for the non-zero columns.
\begin{align}
    a_1 = \begin{bmatrix}
        a_{13} \\ 
        a_{23}
    \end{bmatrix}, \quad 
    b_1 = \begin{bmatrix} 
        b_{13} \\ 
        b_{23}
    \end{bmatrix}, \quad 
    a_2 = \begin{bmatrix} 
        a_{31} \\ 
        a_{41} 
    \end{bmatrix}, \quad 
    b_2 = \begin{bmatrix} 
        b_{31} \\ 
        b_{41}
    \end{bmatrix}.
\end{align}
This allows us to express the matrices as follows:
\begin{align}
    A_i &= a_i e_1^\top, \quad B_i = b_i e_1^\top.
\end{align}
We can factorize $H_1$ and $H_2$ into the following form.
\begin{align}
    H_1 &= V_1^* V_1, \quad V_1 = \begin{bmatrix} I_2 \otimes B_2 & A_1 \otimes I_2 \end{bmatrix}, \\
    H_2 &= V_2^* V_2, \quad V_2 = \begin{bmatrix} A_2 \otimes I_2 & I_2 \otimes B_1 \end{bmatrix}.
\end{align}
Since $V_i^* V_i$ and $V_i V_i^*$ share the same nonzero eigenvalues.
\begin{equation}
\begin{aligned}
    V_1 V_1^* &= I_2 \otimes (B_2 B_2^*) + (A_1 A_1^*) \otimes I_2 \\
    &= I_2 \otimes (b_2 b_2^*) + (a_1 a_1^*) \otimes I_2,
\end{aligned}
\end{equation}
Because $B_i$ and $A_i$ are rank-1 matrices, we have:
\begin{align}
    \lambda_1(H_1) &= \|a_1\|_F^2 + \|b_2\|_F^2.
\end{align}
The corresponding maximum eigenvector is $a_1 \otimes b_2$. Multiplying by $V_1^*$ yields the maximum eigenvector for $H_1$.
\begin{align}
    \Psi_1 &= V_1^* (a_1 \otimes b_2) = \begin{bmatrix} \|b_2\|_F^2 a_1 \otimes e_1 \\ \|a_1\|_F^2 e_1 \otimes b_2 \end{bmatrix}.
\end{align}
Applying the identical procedure to $H_2$ yields its maximum eigenvalue and corresponding eigenvector:
\begin{align}
    \lambda_1(H_2) &= \|a_2\|_F^2 + \|b_1\|_F^2, \\
    \Psi_2 &= \begin{bmatrix} 
        \|a_2\|_F^2 e_1 \otimes b_1 \\ 
        \|b_1\|_F^2 a_2 \otimes e_1 
    \end{bmatrix}.
\end{align}
By Weyl's inequality, the sum of the maximum eigenvalues bounds the maximum eigenvalue of the sum. For the equality condition to hold, the matrices must share the same maximum eigenvector. Therefore, there must exist a non-zero constant $c$ such that the eigenvectors are proportional.
\begin{align}
    \|b_2\|^2 a_1 \otimes e_1 &= c \|a_2\|^2 e_1 \otimes b_1, \label{eq: 2 2 anti ineq 5}\\
    \|a_1\|^2 e_1 \otimes b_2 &= c \|b_1\|^2 a_2 \otimes e_1. \label{eq: 2 2 anti ineq 6}
\end{align}
We now expand the Kronecker products in \eqref{eq: 2 2 anti ineq 5}:
\begin{align}
    a_1 \otimes e_1 &= \begin{bmatrix} 
        a_{13} \\ 
        0 \\ 
        a_{23} \\ 
        0 
    \end{bmatrix}, \quad 
    e_1 \otimes b_1 = \begin{bmatrix} 
        b_{13} \\ 
        b_{23} \\ 
        0 \\ 
        0 
    \end{bmatrix}.
\end{align}
For these vectors to be proportional, we have $a_{23} = b_{23} = 0$. We expand the Kronecker products from \eqref{eq: 2 2 anti ineq 6} in the same manner, and obtain $a_{41} = b_{41} = 0$. Thus, we have 
\begin{align}
    A = \begin{bmatrix}
        0 & 0 & a_{13} & 0 \\
        0 & 0 & 0 & 0 \\
        a_{31} & 0 & 0 & 0 \\
        0 & 0 & 0 & 0
    \end{bmatrix}, \quad
    B = \begin{bmatrix}
        0 & 0 & b_{13} & 0 \\
        0 & 0 & 0 & 0 \\
        b_{31} & 0 & 0 & 0 \\
        0 & 0 & 0 & 0
    \end{bmatrix}.
\end{align}
Applying permutation similarity to $A$ and $B$ with $P = \begin{bmatrix}
    1 & 0 & 0 & 0 \\
    0 & 0 & 1 & 0 \\
    0 & 1 & 0 & 0 \\
    0 & 0 & 0 & 1
\end{bmatrix}$, $A$ and $B$ become $2 \times 2$ block-diagonal matrices.

\section{Code used in Section \ref{sec:num evidence}}\label{appendix: numerical code}
\lstset{                 
	breaklines=true,           
    basicstyle=\small
}
\begin{lstlisting}
import numpy as np
import torch
import pymanopt
from pymanopt.manifolds import Sphere
from pymanopt.optimizers import ConjugateGradient

def generate_optimized_matrices(n: int):
	manifold = Sphere(24)
	@pymanopt.function.pytorch(manifold)
	def cost(v):
		v_scaled = v / 2.0
		A = torch.zeros(4, 4, dtype=v.dtype, device=v.device)
		B = torch.zeros(4, 4, dtype=v.dtype, device=v.device)
		mask = ~torch.eye(4, dtype=torch.bool, device=v.device)
		A[mask] = v_scaled[:12]
		B[mask] = v_scaled[12:]
		I = torch.eye(4, dtype=v.dtype, device=v.device)
		X = torch.kron(A, I) + torch.kron(I, B) 
		S = torch.linalg.svdvals(X)
		y = S[0]**2 + S[1]**2
	return -y
	
	problem = pymanopt.Problem(manifold=manifold, cost=cost)
	optimizer = ConjugateGradient(verbosity=0) 
	results = []
	total_attempts = 0
	while len(results) < n:
		total_attempts += 1
		v_init = np.random.randn(24)
		v_init = v_init / np.linalg.norm(v_init)
		result = optimizer.run(problem, initial_point=v_init)
		v_opt = result.point
		y_opt = -result.cost
		if y_opt >= 0.499999:
			v_scaled = v_opt / 2.0
			A_final = np.zeros((4, 4))
			B_final = np.zeros((4, 4))
			mask_np = ~np.eye(4, dtype=bool)
			A_final[mask_np] = v_scaled[:12]
			B_final[mask_np] = v_scaled[12:]
			results.append({"A": A_final, "B": B_final, "y": y_opt})
			print(f"Successfully found group {len(results)}/{n}! (Valid search iteration attempts this time: {total_attempts}, current y = {y_opt:.6f})")
			total_attempts = 0
	return results

if __name__ == "__main__":
	optimized_matrices = generate_optimized_matrices(n=10)
	for i, res in enumerate(optimized_matrices):
		print(f"\n--- Group {i+1} result (y={res['y']:.6f}) ---")
		print("Singular value of X: \n", np.round(np.linalg.svd(np.kron(res['A'], np.eye(4)) + np.kron(np.eye(4), res['B']), compute_uv=False), 4))
		print("The difference between the vector of singular values of A (sorted in descending order) and the vector of the absolute values of its eigenvalues (sorted in descending order): \n", np.round(np.linalg.svd(res['A'], compute_uv=False) - np.sort(np.abs(np.linalg.eigvals(res['A'])))[::-1], 4))
		print("The difference between the vector of singular values of B (sorted in descending order) and the vector of the absolute values of its eigenvalues (sorted in descending order): \n", np.round(np.linalg.svd(res['B'], compute_uv=False) - np.sort(np.abs(np.linalg.eigvals(res['B'])))[::-1], 4))
\end{lstlisting}

\bibliographystyle{unsrt}

\bibliography{citation}

\end{document}